\documentclass[11pt, letterpaper]{article}

\usepackage{bbm}
\usepackage{makecell}
\usepackage{pifont}     %
\newcommand{\cmark}{\ding{51}}   %
\usepackage{setspace}

\makeatletter
\newcommand\email[2][]%
   {\newaffiltrue\let\AB@blk@and\AB@pand
      \if\relax#1\relax\def\AB@note{\AB@thenote}\else\def\AB@note{\relax}%
        \setcounter{Maxaffil}{0}\fi
      \begingroup
        \let\protect\@unexpandable@protect
        \def\thanks{\protect\thanks}\def\footnote{\protect\footnote}%
        \@temptokena=\expandafter{\AB@authors}%
        {\def\\{\protect\\\protect\Affilfont}\xdef\AB@temp{#2}}%
         \xdef\AB@authors{\the\@temptokena\AB@las\AB@au@str
         \protect\\[\affilsep]\protect\Affilfont\AB@temp}%
         \gdef\AB@las{}\gdef\AB@au@str{}%
        {\def\\{, \ignorespaces}\xdef\AB@temp{#2}}%
        \@temptokena=\expandafter{\AB@affillist}%
        \xdef\AB@affillist{\the\@temptokena \AB@affilsep
          \AB@affilnote{}\protect\Affilfont\AB@temp}%
      \endgroup
       \let\AB@affilsep\AB@affilsepx
}
\makeatother

\usepackage{amsmath}
\makeatletter
  \g@addto@macro \normalsize{%
    \setlength\abovedisplayskip{5pt plus 0pt minus 0pt}%
    \setlength\belowdisplayskip{5pt plus 0pt minus 0pt}}%
\makeatother

\makeatletter
\renewcommand{\paragraph}{%
  \@startsection{paragraph}{4}%
  {\z@}{1.5ex \@plus 1ex \@minus .2ex}{-1em}%
  {\normalfont\normalsize\itshape\bfseries}%
}
\makeatother

\usepackage{paralist}
\let\OLDthebibliography\thebibliography
\renewcommand\thebibliography[1]{
  \OLDthebibliography{#1}
  \setlength{\parskip}{2pt}
  \setlength{\itemsep}{4pt plus 1pt}
}
\usepackage{amsthm}
\usepackage{thmtools}
\usepackage{amssymb}
\usepackage{geometry}
\usepackage{tikz-cd}
\usepackage{empheq}
\usepackage{booktabs} %
\usepackage{tabularx}
\usepackage[ruled, linesnumbered]{algorithm2e} %

\usepackage{thm-restate}
\usepackage{array}
\usepackage{xcolor}
\usepackage{multicol}
\usepackage[colorlinks, pagebackref]{hyperref}
\def\tmp#1#2#3{%
  \definecolor{Hy#1color}{#2}{#3}%
  \hypersetup{#1color=Hy#1color}}
\tmp{link}{HTML}{800006}
\tmp{cite}{HTML}{2E7E2A}
\tmp{file}{HTML}{131877}
\tmp{url} {HTML}{8A0087}
\tmp{menu}{HTML}{727500}
\tmp{run} {HTML}{137776}
\def\tmp#1#2{%
  \colorlet{Hy#1bordercolor}{Hy#1color#2}%
  \hypersetup{#1bordercolor=Hy#1bordercolor}}
\tmp{link}{!60!white}
\tmp{cite}{!60!white}
\tmp{file}{!60!white}
\tmp{url} {!60!white}
\tmp{menu}{!60!white}
\tmp{run} {!60!white}

\definecolor{myred}{RGB}{230,0,0}
\definecolor{myblue}{rgb}{0.01, 0.28, 1.0}

\colorlet{lightpurple}{myblue}

\usepackage{backref}
\renewcommand*{\backref}[1]{}
\renewcommand*{\backrefalt}[4]{%
  \ifcase #1 %
     (Not cited.)%
  \or
     (p.~#2.)%
  \else
     (pp.~#2.)%
  \fi%
}

\newcommand{\apprefproof}[1]{\hyperref[#1]{{\mbox{\footnotesize\textsc{\upshape{[proof]}}}}}}

\usepackage[scaled=.85]{helvet}
\usepackage[numbers, sort]{natbib}
\usepackage{comment}
\usepackage{url}

\SetAlFnt{\small}
\SetAlCapFnt{\small}
\SetAlCapNameFnt{\small\scshape}
\SetAlgoSkip{smallskip}

\SetAlFnt{\small}
\SetAlCapFnt{\small}
\SetAlCapNameFnt{\small}
\SetAlCapHSkip{0pt}
\IncMargin{-\parindent}

\renewcommand{\vec}[1]{\boldsymbol{#1}}

\allowdisplaybreaks
\newtheorem{theorem}{Theorem}[section]
\newtheorem{lemma}{Lemma}[section]

\newtheorem{proposition}{Proposition}[section]

\newtheorem{example}{Example}[section]
\theoremstyle{definition}
\newtheorem{definition}{Definition}[section]
\newtheorem{claim}{Claim}[section]
\newtheorem{remark}{Remark}[section]

\DeclareMathOperator{\OPT}{OPT}

\DeclareMathOperator{\PNE}{PNE}
\DeclareMathOperator{\MNE}{MNE}

\DeclareMathOperator{\EQ}{EQ}

\DeclareMathOperator{\EQPOA}{EQ-POA}

\DeclareMathOperator{\LW}{LW}

\DeclareMathOperator{\CCEPOA}{CCE-POA}
\DeclareMathOperator{\MNEPOA}{MNE-POA}

\DeclareMathOperator{\CCE}{CCE}
\DeclareMathOperator{\CE}{CE}
\DeclareMathOperator{\LP}{LP}
\DeclareMathOperator{\DP}{DP}

\DeclareMathOperator{\EX}{\mathbb{E}}
\DeclareMathOperator*{\E}{\mathop{\mathbb{E}}}

\newcommand{\isubscr}[2]{%
  \ifthenelse{\equal{#2}{}}
  {\ensuremath{#1}}
  {\ensuremath{#1,\,#2}}
}
\newcommand{\iexp}[3]{\ensuremath{\EX_{\isubscr{#1}{#2}}[#3]}}
\newcommand{\iexpl}[3]{\ensuremath{\EX_{\isubscr{#1}{#2}}\left[#3\right]}}

\newcommand{\ssubscr}[2]{%
  \ifthenelse{\equal{#2}{}}
  {\ensuremath{#1}}
  {\ensuremath{\substack{#1 \\ #2}}}
}
\newcommand{\sexp}[3]{\ensuremath{\E_{\ssubscr{#1}{#2}}[#3]}}

\newcommand{\myexp}[3]{\iexp{#1}{#2}{#3}}
\newcommand{\myexpl}[3]{\iexpl{#1}{#2}{#3}}

\DeclareMathOperator*{\supp}{supp}

\newcommand{\set}[1]{\ensuremath{\{ #1 \}}}
\newcommand{\sset}[2]{\ensuremath{\{ #1 \, \mid \, #2 \}}}
\newcommand{\setl}[1]{\ensuremath{\left\{ #1 \right\}}}
\newcommand{\ssetl}[2]{\ensuremath{\left\{ #1 \; \bigg\vert \; #2 \right\}}}

\newcommand{\cA}{\ensuremath{\mathcal A}}
\newcommand{\cX}{\ensuremath{\mathcal X}}
\newcommand{\cI}{\ensuremath{\mathcal I}}
\newcommand{\cV}{\ensuremath{\mathcal V}}

\newcommand{\cC}{\ensuremath{\mathcal C}}
\newcommand{\cP}{\ensuremath{T}}

\newcommand{\va}{\ensuremath{\vec{a}}}
\newcommand{\vA}{\ensuremath{\vec{A}}}

\newcommand{\vx}{\ensuremath{\vec{x}}}
\newcommand{\vchi}{\ensuremath{\vec{\chi}}}
\newcommand{\vv}{\ensuremath{\vec{v}}}

\newcommand{\z}{\ensuremath{u}}
\newcommand{\budget}{\ensuremath{\mathbb{B}}}

\newcommand{\A}{\ensuremath{A}}
\newcommand{\B}{\ensuremath{B}}

\newcommand{\mech}{\ensuremath\mathcal{M}}

\newcommand{\cce}{\ensuremath{\textit{CCE}}}
\newcommand{\ce}{\ensuremath{\textit{CE}}}
\newcommand{\swap}{\ensuremath{h}}
\newcommand{\mne}{\ensuremath{\textit{MNE}}}
\newcommand{\pne}{\ensuremath{\textit{PNE}}}

\usepackage{wrapfig}
\usepackage{enumitem}
\usepackage{graphicx}
\usepackage{titling}
\usepackage{subcaption}
\usepackage{color}
\usepackage[noblocks]{authblk}
\renewcommand\Affilfont{\normalsize}
\usepackage[normalem]{ulem}
\usepackage{nicefrac}
\usepackage{pgfplots}
\usepgfplotslibrary{fillbetween}
\pgfplotsset{compat=1.18} 
\usepackage{multirow}

\usepackage{bm}
\renewcommand{\vec}[1]{\bm{#1}}

\begin{document}

\title{\bfseries
Tight Liquid Welfare Guarantees for \\ Auctions with Budgets via LP Duality}

\author[2]{Twan Kroll}
\author[1,2]{Guido Sch\"afer}
\author[3]{Artem Tsikiridis}

\affil[1]{Centrum Wiskunde \& Informatica (CWI), The Netherlands}
\affil[2]{University of Amsterdam, The Netherlands}
\affil[3]{Technical University of Munich, Germany}

\date{}

\maketitle

\thispagestyle{empty}

\begin{abstract}
\noindent
We study the efficiency of auctions with budget-constrained bidders and derive tight price of anarchy (PoA) bounds for coarse correlated equilibria (CCE). 
To this end, we develop a general framework that reduces proving PoA guarantees to finding feasible solutions to the dual of a linear program. The LP is formulated over a small number of per-bidder equilibrium statistics, incorporates constraints implied by the auction setting and equilibrium behavior, and captures a novel smoothness notion tailored to budget-constrained environments; we provide a reduction from classical smoothness to this new notion. Our approach enables simple and tight PoA analyses across a broad range of auction formats. 
Using our framework, we obtain tight liquid welfare guarantees for simultaneous first-price, second-price, and all-pay auctions, simultaneous auctions with restricted uniform bidding interfaces, discriminatory and uniform price multi-unit auctions, and generalized first-price position auctions. 
Beyond budget-constrained settings, we derive new lower bounds for the budget-free uniform price auction and generalized second-price auction; in particular, our lower bound for the uniform price auction matches the upper bound of $3.146$ by de Keijzer et al. (2013).

\end{abstract}

\section{Introduction}

In many important auctions, bidders operate under predetermined budgets.
A prominent example is online advertising, where advertisers face 
hard budgets set in advance for each ad campaign. 
In this domain, budget constraints have become increasingly important with the advent of autobidding~\cite{aggarwalSurvey}: 
advertisers configure automated bidders to bid on their behalf, and the primary parameter passed to such an agent is the budget they must not exceed.\footnote{According to a 2024 Financial Times report~\cite{ft2024}, major advertising platforms increasingly rely on AI-based bidding systems, while industry data from Topsort~\cite{topsort24} show that over 70\% of advertisers on platforms such as Google depend on automated bidding (“autobidding”) tools to manage their campaigns.} Beyond online advertising, government bonds \cite{auctiontheory} and carbon emission allowances \cite{goldner20} are sold through multi-unit auctions to bidders with hard budget constraints, while financially constrained bidders have long been a major concern in the design of spectrum auctions~\cite{chegale1998, cramton1997fcc}. A common feature of all these markets is that, due to their scale, they are typically implemented through \emph{simple}\footnote{We refer to an auction format as being \emph{simple} when the bidding interface is much more restricted than the bidders' preferences, e.g., a single bid per item, click, or unit, and allocation and payments are determined separately per item or unit.} auction formats, rather than through fully expressive truthful mechanisms. As a result, equilibrium outcomes may be inefficient. 

Measuring the inefficiency of equilibria in auctions through the lens of the Price of Anarchy (PoA) \cite{koutsoupias1999worst} has developed into a rich line of work, with tight PoA bounds known for many important auction formats, including simultaneous auctions~\cite{christodoulou16bayesian, christodoulou16, feldmanfu}, multi-unit auctions~\cite{dekeijzer13, birmpas2019tight}, and position auctions~\cite{caragiannis2015gsp}. A key toolkit in this line of work is the smoothness framework~\cite{Rou15, ST13}, which gives a principled approach for obtaining PoA bounds that extend to correlated equilibria (CE) and the broader class of coarse-correlated equilibria (CCE), and to settings with incomplete information. These results, however, are predominantly for budget-free environments. Budget constraints make the problem challenging from a game-theoretic standpoint in two ways. First, the auction becomes a \emph{generalized game}~\cite{debreu52, arrow54}: whether an action is feasible depends on the payments it induces, and hence on the actions of the other bidders. Second, quasi-linearity is lost: once 
bidders face budget constraints, 
preferences can no longer be expressed as value minus payment (see e.g., the discussion in \cite{fadaei2017truthfulness}). Even the right measure of (in)efficiency is non-trivial: social welfare is no longer an appropriate benchmark 
when values exceed budgets, and the 
standard benchmark becomes the \emph{liquid welfare}~\cite{dobzinski2014efficiency}, which captures the maximum total payments that can be extracted from the bidders.

The PoA of auctions with budget-constrained bidders has been studied for several important formats, including simultaneous first-price auctions~\cite{ST13, azar17, baldeschi2026}, divisible-resource mechanisms~\cite{caragiannis16, caragiannis2018efficiency}, and position auctions~\cite{voudouris2019position}. These results, however, are largely derived in a format-specific manner
and, with the exception of the recent work of \citet{baldeschi2026} on simultaneous first-price auctions, concern only pure, mixed, or Bayes--Nash equilibria, leaving the more robust learning-based solution concepts CE and CCE largely unexplored. 
At the same time, CE and CCE are important solution concepts in these settings. This is especially true when bidding is delegated to automated agents running learning algorithms: if every bidder uses a no-regret (respectively, no-swap-regret) algorithm, the empirical distribution of play converges to the set of CCE (respectively, CE)~\cite{hannan57, blum07} (see also \cite{hartline15}). This motivates the central question of our paper: %
\emph{``What is the price of anarchy of coarse correlated equilibria of simple auctions with budget-constrained bidders?''}

\subsection{Our Contributions}
\begin{table}[t]
\centering
\footnotesize
\setlength{\tabcolsep}{4pt}
\makebox[\linewidth][c]{\begin{tabular}{@{}lllll@{}}
\toprule
Format & Valuations & Upper bound & Lower bound & Tight? \\
\midrule
\multicolumn{5}{@{}l}{\textbf{Simultaneous item auctions}} \\
\cmidrule(r){1-5}
First-price & XOS & $2.188^{\,\ast}$~\cite{baldeschi2026} & $2.188$ (CCE)~\cite{baldeschi2026} & \cmark \\
First-price & subadditive & \textcolor{lightpurple}{$3$} (Thm.~\ref{thm:poa-subadditive}) & $2$ (budget-free MNE)~\cite{christodoulou16}, $2.188$ (CCE XOS)~\cite{baldeschi2026} & ? \\
First-price (uniform bids \cite{deng21}) & additive & \textcolor{lightpurple}{$2.188$} (Thm.~\ref{thm:poa-pacing-UB}) & \textcolor{lightpurple}{$2.188$} (MNE, Thm.~\ref{thm:pacing-bid-LB}) & \cmark \\
Second-price$^{\,\circ}$ & monotone & \textcolor{lightpurple}{$2$}$^{\,\dagger}$ (Thm.~\ref{thm:poa-second-price})& \textcolor{lightpurple}{$2$} (additive, budget-free PNE, Thm.~\ref{thm:SPA-LB}) & \cmark \\
All-pay & XOS & \textcolor{lightpurple}{$3$} (Thm.~\ref{thm:poa-SAPA-UB})& \textcolor{lightpurple}{$3$} (MNE, Thm.~\ref{thm:SAPA-LB}) & \cmark \\
\midrule
\multicolumn{5}{@{}l}{\textbf{Multi-unit auctions}} \\
\cmidrule(r){1-5}
Discriminatory price & concave & \textcolor{lightpurple}{$2.188$} (Thm.~\ref{thm:poa-DPA})& \textcolor{lightpurple}{$2.188$}  (CCE, Thm.~\ref{thm:DPA-LB}) & \cmark \\
Uniform price$^{\,\circ}$ & concave & \textcolor{lightpurple}{$3.146$}$^{\,\dagger}$ (Thm.~\ref{thm:poa-UPA})& \textcolor{lightpurple}{$3.146$} (budget-free PNE, Thm.~\ref{thm:UPA-LB}) & \cmark \\
\midrule
\multicolumn{5}{@{}l}{\textbf{Position auctions}} \\
\cmidrule(r){1-5}
Generalized first-price & value-per-click & \textcolor{lightpurple}{$2.188$} (Thm.~\ref{thm:poa-UB-GFP})& \textcolor{lightpurple}{$2.188$} (CCE, Thm.~\ref{thm:GFP-LB}) & \cmark \\
Generalized second-price$^{\,\circ}$ & value-per-click & \textcolor{lightpurple}{$3.146$} (Thm.~\ref{thm:poa-GSP})& \textcolor{lightpurple}{$2$} (budget-free PNE, Prop.~\ref{prop:GSP-LB}) & ? \\
\bottomrule
\end{tabular}}
\caption{The PoA of CCE for budget-constrained bidders. Bounds indicated in \textcolor{lightpurple}{blue} are established in this paper. The constants are rounded and admit the closed forms $2.188 \approx {(2+\mathcal{W}_0(-e^{-2}))}/{(1+\mathcal{W}_0(-e^{-2}))}$ and $3.146 \approx -\mathcal{W}_{-1}(-e^{-2})$, where $\mathcal{W}_0, \mathcal{W}_{-1}$ are the two real branches of the Lambert~$W$ function \cite{corless96}. For the formats marked with $^{\circ}$ (second-price-type payment rules), the bounds are for equilibria that satisfy a no-overbidding assumption (Definition~\ref{def:nob}). \\
$^{\ast}$Recovered in this paper via our duality framework. 
$^{\dagger}$Matches the corresponding budget-free bounds (\cite{christodoulou16bayesian} for SSPA; \cite{dekeijzer13} for UPA).}
\label{tab:results}
\end{table}

We obtain the following results.

\begin{enumerate}[wide, labelindent=0pt, label=\arabic*., itemsep=2pt]
    \item We develop a linear programming framework (Section~\ref{sec:poa-duality}) that reduces proving PoA bounds for CCE of budget-constrained bidders
    to finding a feasible solution to the dual \eqref{eq:ultimate-DP-simple} of a linear program \eqref{eq:ultimate-LP-simple}. Unlike most previous duality-based approaches in the literature, which concern budget-free settings (see also Appendix~\ref{app:duality} for a comparison of duality-based PoA frameworks), the variables of our LP formulation are not the equilibrium distribution itself, but rather a small number of per-bidder \emph{equilibrium statistics}. Further, the constraints of our LP capture a new notion of \emph{budget-feasible semi-smoothness} (Definition~\ref{def:semi-smoothness}), tailored to environments with budgets, together with a reduction (Theorem~\ref{thm:smoothness-reduction}) showing that classical, budget-free smoothness implies the new notion. Theorem~\ref{thm:ultimate-POA} converts any such dual solution into a PoA bound, for pay-your-bid as well as second-price-type payment rules. %

\item Our LP-duality framework in Section~\ref{sec:poa-duality} yields tight or near-tight liquid welfare guarantees for a number of auction formats of practical significance. We first consider simultaneous auctions in Section~\ref{sec:simultaneous-auctions}. We recover the tight bound of \cite{baldeschi2026} for first-price auctions (FPAs) with fractionally subadditive (XOS) valuations (Theorem~\ref{thm:FPA-upperbound-poa}), prove the first upper bound for the more general class of subadditive valuations (Theorem~\ref{thm:poa-subadditive}), and derive new tight bounds for the second-price and all-pay variants (Theorems~\ref{thm:poa-second-price} and~\ref{thm:poa-SAPA-UB}). We also obtain the first bounds for simultaneous FPAs under the restricted \emph{uniform bidding} interface studied in \cite{deng21, LMZ24} (Theorem~\ref{thm:poa-pacing-UB}), which is relevant for autobidding applications\footnote{This interface also coincides with the first-price pacing model of \citet{conitzer2022pacing}, in which budget-constrained bidders scale their valuations by pacing multipliers. There, a \emph{first-price pacing equilibrium} (FPPE) always exists, is unique, and can be computed in polynomial time. The FPPE is, however, a market equilibrium notion; our results instead bound the inefficiency of all CCE of the induced game with utility maximizers.}. In Section~\ref{sec:multi-unit-auctions}, we turn to the standard multi-unit formats \cite{auctiontheory}, i.e., the discriminatory and uniform price auctions, and settle the price of anarchy of both under budgets (Theorems~\ref{thm:poa-DPA} and~\ref{thm:poa-UPA}). Finally, in Section~\ref{sec:position-auctions}, we analyze the efficiency of position auctions, namely the generalized first-price (GFP) and generalized second-price (GSP) auction, under budget constraints and prove tight bounds for GFP (Theorem~\ref{thm:poa-UB-GFP}).

\item We complement our upper bounds with a collection of matching lower bound constructions. These are established for the weakest solution concept for which they are possible, i.e., for several formats already at mixed or even pure Nash equilibria, and therefore also for every stronger equilibrium concept (see Table~\ref{tab:results}). Our lower bounds are not limited to budget-constrained settings. Notably, under the no-overbidding assumption we consider (see Definition \ref{def:nob} and the discussion in Appendix~\ref{app:sec:NOB-discussion}), our lower bound for the uniform price auction is budget-free and matches our upper bound, thereby settling the inefficiency of the format even \emph{without} budgets.%

\end{enumerate}

\subsection{Technical Summary}
\begin{figure}[t]
    \centering
\begin{tikzcd}
\text{Conservative smoothness} \ar[rrr,"\text{trivial}"] \ar[dd,"\text{Theorem~\ref{thm:smoothness-reduction-further}}"]
&&&
\text{Conservative semi-smoothness}
  \ar[dd, "\text{Theorem~\ref{thm:smoothness-reduction}}"]
\\ \\
\text{Budget-feasible smoothness} \ar[rrr,"\text{trivial}"]
&&&
\text{Budget-feasible semi-smoothness}
\end{tikzcd}
    \caption{Diagram illustrating the dependencies between different smoothness notions we consider.}
    \label{fig:smoothness-dependencies}
\end{figure}

\paragraph{LP duality.} 
Linear programming duality has previously been used for PoA proofs~\cite{nadav10, kulkarni14, kim18, bilo18, kim25}; however, our formulation in \eqref{eq:ultimate-LP-simple} differs substantially from these approaches.
Rather than using the CCE distribution itself as the variables of the primal LP, 
as is done in previous approaches, we keep track of a handful of \emph{equilibrium statistics} per bidder: their \emph{liquid valuation}\footnote{The \emph{liquid valuation} of a bidder is the minimum of their expected value and their budget (see Section~\ref{sec:preliminaries}).} at equilibrium, their expected payment, their expected willingness to pay (see Definition~\ref{def:wtp}), and the expected value of a mechanism-dependent \emph{threshold} evaluated at the optimal allocation. %
The key step is to identify a small set of linear constraints that these statistics satisfy at every CCE, together with a partition of bidders based on whether their expected value at equilibrium reaches their budget.
Altogether, the liquid welfare of any CCE is lower bounded by the optimum of a linear program and, by weak duality, \emph{any} feasible solution of its dual corresponds to a PoA bound (Theorem~\ref{thm:ultimate-POA}).

\paragraph{A reduction.} 
The obstacle that separates budget-constrained from budget-free mechanisms is that the deviation at the heart of every smoothness proof may be unaffordable against some behavior of the opponents. Our notion of \emph{budget-feasible semi-smoothness} (Definition~\ref{def:semi-smoothness}) builds affordability into the deviation, and this is precisely what Theorem~\ref{thm:ultimate-POA} requires. Figure~\ref{fig:smoothness-dependencies} relates this notion to the classical smoothness notions. Whenever the budget-free deviations are \emph{conservative}, i.e., they never pay more than the value of the outcome they target, they can be converted into budget-feasible deviations (Theorems~\ref{thm:smoothness-reduction} and~\ref{thm:smoothness-reduction-further}) by mixing them with the opt-out action. The mixing probability is calibrated so that the expected payment remains below the budget, while the guarantee degrades exactly to the \emph{liquid} valuation; this is exactly the statistic tracked by \eqref{eq:ultimate-LP-simple}. Thus, for each auction format, the remaining task is to prove a single conservative smoothness inequality.

\subsection{Related Work}

The study of auctions with budget-constrained bidders has received considerable attention in algorithmic game theory. Important early works are those of \citet{chegale1998}, who study single-item auctions with financially constrained bidders, and of \citet{borgs2005}, who focus on multi-unit auctions, both with an emphasis on revenue. Since then, an important line of work has focused on the design of truthful auctions for budget-constrained bidders \cite{dobzinski2012multi, fiat2011single, goel2015polyhedral}. The study of the inefficiency of equilibria started later, after \citet{dobzinski2014efficiency} defined the liquid welfare benchmark (see also \cite{luxiao15, luxiao17}) and the conservative smoothness notion\footnote{The guarantee of \citet{ST13} lower bounds the \emph{social} welfare at equilibrium by a fraction of the optimal liquid welfare, and hence does not bound the liquid welfare attained at equilibrium.}, \citet{azar17} studied the ex-post liquid PoA of simultaneous item auctions, whereas \citet{caragiannis16, caragiannis2018efficiency} adopted, as we do, the ex-ante measure to study the proportional (Kelly) mechanism, calling it effective welfare. For position auctions with budgets, equilibrium existence was studied in \cite{ashlagi2010position, diaz2016stability} and the PoA for pure Nash equilibria was shown by \citet{voudouris2019position} to be exactly $2$ for GFP, GSP and VCG.

Most work on the price of anarchy of simple auctions has been carried out for bidders who are not budget-constrained. The standard toolkit for analyzing coarse-correlated equilibria has been the smoothness framework of \citet{Rou15} and its extension to auction mechanisms by \citet{ST13}. In the budget-free setting, most simple auction formats have been studied, including in the incomplete information setting; see e.g., \cite{ST13, syrgkanis-thesis, christodoulou16, feldmanfu, hassidim2011non} for simultaneous first-price auctions, \cite{christodoulou16bayesian, bhawalkar11, feldmanfu, dutting2018valuation} for simultaneous second-price auctions, \cite{dekeijzer13, birmpas2019tight, markakis21} for multi-unit auctions, and \cite{caragiannis2015gsp} for the generalized second price auction. For the single-item first-price auction, \citet{feldman16} study correlated and coarse correlated equilibria, while a sequence of works in the Bayesian setting \cite{ST13, HTW18} led to the tight bound of $\frac{e^2}{e^2-1}$ of \citet{JL23}. We refer the reader to the survey of \citet{roughgarden2017price} for a more detailed overview of this line of research. For a detailed comparison with existing duality-based PoA frameworks we refer to Appendix \ref{app:duality}.

Our work also relates to the autobidding literature, where automated bidders bid on behalf of advertisers under budget and return-on-investment (ROI) constraints, with objectives ranging from value to utility maximization; see \citet{aggarwalSurvey} for a survey. Tight PoA bounds for value maximizers are known for pure Nash equilibria of the second-price auction~\cite{ABM19}, the VCG mechanism~\cite{deng21}, and the GSP auction~\cite{deng24efficiency}, as well as for simultaneous first-price auctions at pure and mixed Nash equilibria~\cite{LMP23, deng2024, LMZ24}; under the restricted uniform bidding interface, the PoA for budgeted value maximizers is unbounded~\cite{LMZ24}, in contrast to our tight bound for budgeted utility maximizers (Theorem~\ref{thm:poa-pacing-UB}). Closest to our work, \citet{baldeschi2026} prove tight liquid welfare guarantees for CCE of simultaneous first-price auctions, which we recover through our duality framework, and \citet{kim25} develops a duality-based analysis of the liquid welfare PoA for the proportional mechanism and the pacing metagame of \citet{feng24}.

\section{Preliminaries}
\label{sec:preliminaries}

Throughout the paper, we use $[n] = \set{1, \dots, n}$ to denote the set of the first $n \ge 1$ natural numbers, $\Delta(\cdot)$ to denote the set of all probability distributions over a given set, and $\supp(\cdot)$ to denote the support of a given random variable. Further, given a vector $\vec{x} = (x_i)_{i \in [n]}$, we define $\vec{x}_{-i}$ as the vector obtained from $\vec{x}$ by removing the $i$th component, and $(x'_i, \vec{x}_{-i})$ as the vector obtained from $\vec{x}$ by replacing the $i$th component with $x'_i$. We define $\mathbbm{1}_{\text{[cond]}}$ as the indicator variable which evaluates to $1$ if the condition [cond] is true and to $0$ otherwise. Finally, we use $\mathcal{W}_0$ and $\mathcal{W}_{-1}$ to denote the two real branches of the Lambert~$W$ function~\cite{corless96}: for $x \in [-e^{-1}, 0)$, the equation $w\, e^{w} = x$ has exactly two real solutions, $\mathcal{W}_0(x) \in [-1, 0)$ and $\mathcal{W}_{-1}(x) \in (-\infty, -1]$.

\paragraph{Budgeted Mechanism Design} We consider a set $N = [n]$ of $n \ge 2$ bidders, where each bidder $i$ has a set of actions $\cA_i$, a finite set of allocations $\cX_i$,
a valuation function $v_i:\cX_i \to \mathbb{R}_{\geq 0}$,
and a budget $\budget_i\in (0, \infty]$. 
We denote the set of action profiles as $\cA =\times_{i \in [n]}\cA_i$ and the set of allocation (or outcome) profiles as $\cX \subseteq \times_{i \in [n]} \cX_i$. %
Further, we use $\vv := (v_i)_{i \in N}$ and $\budget := (\budget_i)_{i \in N}$ to denote the bidders' valuation and budget profiles, respectively. 
An instance of the budgeted mechanism design problem is then given by $I = (N, \cA, \cX, \vv, \budget)$. An instance is called \emph{budget-free} if $\budget_i = \infty$ holds for every $i \in N$.  We use $\cI$ to denote a class of instances, $\cI^{\infty} \subseteq \cI$ to denote the subclass of $\cI$ containing the budget-free instances, and $\cV$ to denote a class of valuation functions.

A mechanism $\mech = (X, P)$ is defined by an allocation function $X:\cA \to \cX$ and a pricing rule $P:\cA \to \mathbb{R}_{\geq 0}^n$. 
Given an action profile $\va = (a_i)_{i \in N} \in \cA$ of the bidders as input, the mechanism outputs an allocation profile $X(\va) \in \cX$ and a payment profile $P(\va) \in \mathbb{R}_{\ge 0}^n$. 
For notational convenience, we let $X_i$ and $P_i$ denote the allocation and payment functions of bidder $i$, respectively, i.e., $X_i(\va) = (X(\va))_i$ and $P_i(\va) = (P(\va))_i$.
Throughout, we assume that each bidder $i \in N$ has an \emph{opt-out} action $\emptyset_i \in \cA_i$ which always results 
in a payment of $0$, i.e., for every $\va_{-i} \in \cA_{-i}$ it holds that %
$P_i(\emptyset_i, \va_{-i})=0$. 

\paragraph{Bidder's Problem} Fix a mechanism $\mech = (X, P)$. Each bidder $i \in N$ chooses a probability distribution (also called \emph{mixed action}) $\A_i \in \Delta(\cA_i)$ over their action set. %
The objective of bidder $i$ is to choose a mixed action $\A_i$ that maximizes their expected utility subject to their budget constraint. 
More formally, given a distribution over action profiles $\vec{\A}_{-i} \in \Delta({\cA}_{-i})$ of the other bidders, each bidder $i \in N$ solves the following optimization problem: 
\begin{align*}
\max_{\A_i\in \Delta(\cA_i)} \quad &
\sexp{a_i \sim A_i}{\vec{a}_{-i} \sim \vA_{-i}}{\z_i(a_i, \vec{a}_{-i})}
:= \sexp{a_i \sim A_i}{\vec{a}_{-i} \sim \vA_{-i}}{v_i(X_i(a_i,\vec{a}_{-i}))- P_i(a_i,\vec{a}_{-i})}
\tag{Bidder's Problem}\label{eq:agents-problem}
\\ \text{subject to} \quad &\sexp{a_i \sim A_i}{\vec{a}_{-i} \sim \vA_{-i}}{P_i(a_i, \vec{a}_{-i})} \le \budget_i.
\end{align*}
Observe that the value of the objective function is always non-negative since bidder $i$ can opt out: choosing $\emptyset_i$ satisfies the budget constraint and gives non-negative utility by definition.
Given a joint distribution $\vec{\A}_{-i} \in \Delta({\cA}_{-i})$, we use $\cC_i(\vA_{-i})$ to denote the set of distributions over actions of bidder $i$ that are \emph{budget-feasible}, i.e., 
\[
\cC_i(\vec{\A}_{-i}) := \ssetl{\A_i \in \Delta(\cA_i)}{ \iexp{a_i \sim A_i}{\vec{a}_{-i} \sim \vA_{-i}}{P_i(a_i, \vec{a}_{-i})} \le \budget_i}.
\] 
We say that a distribution over action profiles $\vec{A} \in \Delta({\cA})$ is \emph{budget-feasible} if $\myexp{\va \sim \vA}{}{P_i(\va)} \le \budget_i$ for every bidder $i \in N$.
Note that, unlike for standard non-cooperative games, the set of admissible actions of a bidder depends on the actions of the opponents; such strategic environments are known as \emph{generalized games} (see e.g., \cite{debreu52, arrow54}).

\paragraph{Equilibria and their Inefficiency.}

We consider the following equilibrium notions.

\begin{definition} \label{def:equilibria} 
Fix an instance $I$ and let $\vec{\A} \in \Delta(\cA)$ be a budget-feasible distribution over action profiles.
\begin{itemize}

\item 
$\vec{\A}$ is a \emph{coarse correlated equilibrium (\cce)} if for every bidder $i \in N$: %
\begin{equation}
\label{eq:cce-def}
\myexp{\vec{a} \sim \vA}{}{\z_i(\vec{a})}
\ge 
\myexp{a'_i \sim A'_i}{\vec{a}_{-i} \sim \vA_{-i}}{\z_i(a'_i, \vec{a}_{-i})} \qquad \forall \A'_i \in \cC_i(\vec{\A}_{-i}).
\end{equation}

\item $\vec{\A}$ is a \emph{correlated equilibrium (\ce)} if for every bidder $i \in N$ and every swap function $\swap: \supp(\A_i) \to \Delta(\cA_i)$ satisfying $
\iexp{\va \sim \vA}{a'_i \sim \swap(a_i)}{P_i(a'_i, \va_{-i})}
\le \budget_i$, 
it holds that
\begin{equation}\label{eq:ce-def}
\EX_{\va \sim \vA}\left[\z_i(\va)\right]
\;\ge\;
\iexp{\va \sim \vA}{a'_i \sim \swap(a_i)}{\z_i(a'_i, \va_{-i})}.
\end{equation}

\item $\vec{\A}$ is a \emph{mixed Nash equilibrium (\mne)} if $\vec{\A} = \prod_{i \in [n]} \A_i$ and for every bidder $i \in N$:%
\begin{equation}
\label{eq:mne-def}
\myexp{\vec{a} \sim \vA}{}{\z_i(\vec{a})}
\ge 
\myexp{a'_i \sim A'_i}{\vec{a}_{-i} \sim \vA_{-i}}{\z_i(a'_i, \vec{a}_{-i})} \qquad \forall \A'_i \in \cC_i(\vec{\A}_{-i}).
\end{equation}

\item $\vec{\A}$ is a \emph{pure Nash equilibrium (\pne)} if $\vec{\A} = \prod_{i \in [n]} \A_i$, $\supp(\A_i) = \set{a_i}$ for all $i \in N$, and for every bidder $i \in N$:
\begin{equation}
\label{eq:pne-def}
\z_i(\va) \ge \z_i(a'_i, \va_{-i}) \qquad \forall A'_i \in \cC_i(\va_{-i}),\ \supp(A'_i) = \set{a'_i}.
\end{equation}
\end{itemize}
\end{definition}
Given an instance $I$, we use $\PNE(I)$, $\MNE(I)$, $\CE(I)$ and $\CCE(I)$ to denote the sets of pure, mixed, correlated, and coarse correlated equilibria, respectively. It is easy to verify that $\MNE(I) \subseteq \CE(I) \subseteq \CCE(I)$.
In contrast, in this generalized game, pure Nash equilibria are not necessarily a subset of mixed Nash equilibria, i.e., $\PNE(I) \not\subseteq \MNE(I)$%
; we refer the reader to Appendix \ref{app:additionalPrelims} for a more detailed discussion. Finally, we use $\EQ \in \set{\PNE,\MNE, \CE,\CCE}$ as a placeholder for an equilibrium notion.

We use the notion of \emph{liquid welfare} \cite{dobzinski2014efficiency} to assess the efficiency of a mechanism. 
Intuitively, it captures the maximum total payments that can be extracted from the bidders. 
Liquid welfare is the standard efficiency objective for auctions with budgets; see also \cite{ azar17, aggarwalSurvey} for further discussions. 
More formally, the \emph{liquid valuation} of bidder $i$ for a random allocation $\chi_i \in \Delta(\cX_i)$ is defined as the minimum of $i$'s expected valuation and the budget, i.e., $\hat v_i(\chi_i):=\min\{\E_{x_i\sim \chi_i}[v_i(x_i)],\budget_i\}$; in particular, note that we adopt an \emph{ex-ante} rather than an \emph{ex-post} liquid welfare notion here (see also Remark \ref{rem:ex-ante-lw} below). 
Given an instance $I$, the \emph{liquid welfare} of a random allocation profile $\vchi \in \Delta(\cX)$ is defined as 
\[
\LW(I, \vchi) := \sum_{i\in N}\hat v_i(\chi_i),
\]
where $\chi_i$ denotes the marginal of bidder $i$ with respect to $\vchi$. 
Note that $\LW(I, \vchi)$ depends on $\vchi$ only through its distribution; since $\cX$ is finite, $\LW(I, \cdot)$ is a continuous function of this distribution over the compact set $\Delta(\cX)$, and hence its supremum over all random allocation profiles is attained. We let $\vchi^{\ast}$ denote a maximizer and define $\OPT(I):=\LW(I,\vchi^{\ast})$ as the optimal liquid welfare. 
Note that the optimal liquid welfare is defined over distributions of outcome profiles; in particular, the optimal allocation $\vchi^*$ need not be deterministic. 
We refer to Appendix \ref{app:additionalPrelims} for a more detailed discussion of the strength of this benchmark.

Fix a mechanism $\mech=(X,P)$ and consider a class of instances $\cI$. 
Given an instance $I$, we slightly overload notation and define the \emph{liquid welfare} of mechanism $\mech$ for a distribution over action profiles $\vA$ as
\[
    \LW(I, \vA) := \sum_{i \in N} \min\left\{\myexp{\va \sim \vA}{}{v_i(X_i(\va))},\budget_i\right\}.
\]
We use $\LW_i(I, \vA)$ to refer to $i$'s contribution to the liquid welfare objective, i.e., $\LW_i(I, \vA) := \min\left\{\myexp{\va \sim \vA}{}{v_i(X_i(\va))},\budget_i\right\}$.

We study the \emph{price of anarchy (PoA)} as introduced by \citet{koutsoupias1999worst} with respect to the liquid welfare objective: the price of anarchy of $\mech$ for the class of instances $\cI$ with respect to an equilibrium notion $\EQ\in \set{\PNE,\MNE,\CE,\CCE}$ is
\[
\EQPOA(\cI):=\sup_{I\in \cI}\,\sup_{\vec \A\in \EQ(I)}
\frac{\OPT(I)}{\LW(I, \vA)}.
\]

\begin{remark}\label{rem:ex-ante-lw}
Given an instance $I$ and a random allocation $\vchi \in \Delta(\cX)$, the \emph{ex-post} liquid welfare is defined as $\LW_{\textsc{ex-post}}(I, \vchi) := \sum_{i \in N} \myexp{\vec{x} \sim \vchi}{}{\min\{v_i(x_i), \budget_i\}}$. Note that, by concavity, it holds that $\LW(I, \vchi) \geq \LW_{\textsc{ex-post}}(I, \vchi)$ and this is satisfied with equality for deterministic allocations. In this work, similarly to \cite{deng2024, baldeschi2026, caragiannis16}, we choose to measure the performance of mechanisms with the ex-ante benchmark as it is the one consistent with the in-expectation budget constraints in the \ref{eq:agents-problem}, in the sense that it consistently represents the maximum total payments that can be extracted from the bidders. In contrast, measuring the performance with $\LW_{\textsc{ex-post}}$ leads to an unbounded PoA: already for simultaneous first-price auctions with additive valuations and MNE, the price of anarchy with respect to $\LW_{\textsc{ex-post}}$ is $\Omega(n)$ \citep[Theorem~C.2]{azar2015liquidpriceanarchy}; see also the discussion in \cite{baldeschi2026}.
\end{remark}

The following lemma establishes a useful property that we will use throughout the paper.

\begin{lemma}\label{lem:IR}
Fix a mechanism $\mech = (X, P)$ and an instance $I$. Let $\vA \in \EQ(I)$ be an equilibrium, where $\EQ \in \{\PNE, \MNE, \CE, \CCE\}$.
Then, for every bidder $i \in N$,  
$
\myexp{\va \sim \vA}{}{P_i(\va)}
\leq \min\setl{\myexp{\va \sim \vA}{}{v_i(X_i(\va))}, \budget_i}
$.
\end{lemma}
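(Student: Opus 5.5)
The claim splits into two inequalities: expected payment is at most the budget, and expected payment is at most expected value. Together they give the bound against the minimum.

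The budget part is immediate. Every equilibrium in Definition~\ref{def:equilibria} is required to be a budget-feasible distribution, so $\myexp{\va \sim \vA}{}{P_i(\va)} \le \budget_i$ for every bidder $i$.

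For the value part, I would use the opt-out deviation to show that expected utility at equilibrium is nonnegative. Since $\myexp{\va\sim\vA}{}{\z_i(\va)} = \myexp{\va\sim\vA}{}{v_i(X_i(\va))} - \myexp{\va\sim\vA}{}{P_i(\va)}$, nonnegative utility is exactly the second inequality. The argument goes through each concept in turn.

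\begin{itemize}
\item \textbf{CCE and MNE.} Take the pure deviation $A'_i$ that puts all mass on $\emptyset_i$. It pays $0$ against every $\va_{-i}$, so it lies in $\cC_i(\vA_{-i})$ because $\budget_i > 0$. Its utility is $\myexp{}{}{v_i(X_i(\emptyset_i,\va_{-i}))} \ge 0$, since valuations are nonnegative. Inequality \eqref{eq:cce-def} (respectively \eqref{eq:mne-def}) then gives $\myexp{\va\sim\vA}{}{\z_i(\va)} \ge 0$.
\item \textbf{CE.} Use the constant swap function $\swap(a_i) = \emptyset_i$. Its expected payment is $0 \le \budget_i$, so it is admissible, and \eqref{eq:ce-def} yields the same conclusion.
\item \textbf{PNE.} The opt-out action $\emptyset_i$ is affordable against $\va_{-i}$, so \eqref{eq:pne-def} gives $\z_i(\va) \ge v_i(X_i(\emptyset_i,\va_{-i})) \ge 0$.
\end{itemize}

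Combining the budget bound with this nonnegativity gives the stated minimum. I expect no real obstacle. The only points to check are that opt-out is always budget-feasible, which uses $\budget_i > 0$, and that it is admissible as a deviation under each concept's own feasibility constraint.
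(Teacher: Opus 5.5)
Your proof is correct and follows the paper's argument. Budget-feasibility of the equilibrium gives the budget bound, and comparing against the opt-out action $\emptyset_i$ (zero payment, nonnegative value) shows that equilibrium utility is nonnegative, hence expected payment is at most expected value. The only difference is that you check admissibility of the opt-out deviation separately for each equilibrium concept, which the paper covers with the single phrase ``the equilibrium property.''
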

\begin{proof}
Consider a bidder $i \in N$. 
First, note that $\myexp{\va \sim \vA}{}{P_i(\va)} \le \budget_i$ since $\vA \in \EQ(I)$ is budget-feasible for $I$.
Moreover, $i$ can choose the opt-out action $\emptyset_i$ to ensure that $P_i(\emptyset_i, \cdot) = 0$. 
Hence, $\myexp{\va \sim \vA}{}{\z_i(\va)} \ge \myexp{\va \sim \vA}{}{\z_i(\emptyset_i, \va_{-i})} \ge 0$, where the first inequality follows from the equilibrium property of $\vA \in \EQ(I)$. We conclude that $\myexp{\va \sim \vA}{}{P_i(\va)}
\leq \myexp{\va \sim \vA}{}{v_i(X_i(\va))}$, which proves the claim. 
\end{proof}

\medskip\noindent
Due to space limitations, several proofs and additional discussions are deferred to the appendix, organized by the corresponding sections.

\medskip
\noindent

\section{PoA of Auctions with Budgets via LP Duality}
\label{sec:poa-duality}

In this section, we develop the technical machinery for deriving price of anarchy bounds for coarse-correlated equilibria in the presence of budgets.

\subsection{PoA Bounds via Duality}

We derive an extension of the smoothness notions for auctions \cite{ST13,dutting2018valuation,LPL2011} that is suitable for environments with budgets.%
The definition relies on a mechanism-dependent \emph{threshold function} $\cP_i\colon \cX_i \times \cA_{-i} \to \mathbb{R}_{\geq 0}$ for each bidder $i$, whose role will become clear below.

\begin{definition}\label{def:semi-smoothness}
Fix a mechanism $\mech=(X,P)$ and an instance $I$. 
Let $\vchi^*$ be an optimal liquid welfare profile. %
 Bidder $i \in N$ is \emph{budget-feasible $(\lambda, \mu)$-semi-smooth} if for every $\vec{\A} \in \CCE(I)$ there exists $\A_i' \in \cC_i(\vA_{-i})$ such that %
\[
\myexp{a_i'\sim A_i'}{\va_{-i}\sim \vA_{-i}}{\z_i(a_i', \vec{a}_{-i})}
\geq 
\lambda \hat v_i(\chi_i^\ast) - \mu \myexp{x_i\sim \chi_i^\ast}{\va_{-i}\sim \vA_{-i}}{\cP_i(x_i, \va_{-i})}.
\]

\end{definition}

In Section~\ref{sec:reduction-smoothness}, we relate our budget-feasible notion of semi-smoothness to existing smoothness notions and prove reductions from the latter to the former. First, however, we show that budget-feasible semi-smoothness suffices to derive  bounds on the PoA. 

An essential concept in our analysis is the willingness to pay, introduced in \cite{ST13}.

\begin{definition}\label{def:wtp} %
Fix a mechanism $\mech = (X, P)$ and an instance $I$. The \emph{maximum willingness to pay} of a bidder $i \in N$ for an allocation $x_i \in \cX_i$ when using action $a_i \in \cA_i$ is $W_i(a_i, x_i) := \sup_{\va_{-i} \,:\, X_i(\va) = x_i} P_i(\va).$ To ease notation, we write $W_i(\va) := W_i(a_i, X_i(\va))$. 
\end{definition}

By definition, each mechanism $\mech = (X, P)$ charges at most the willingness to pay, i.e., $P_i(\va) \le W_i(\va)$ for every $\va \in \cA$ and every $i \in N$. Payment rules which charge bidders exactly their willingness to pay are called \emph{pay-your-bid}.

\begin{definition}\label{def:pyb}
A mechanism $\mech = (X, P)$ is \emph{pay-your-bid (PYB)} if $P_i(\va) = W_i(\va)$ holds for every $\va \in \cA$ and every $i \in N$.
\end{definition}

We next introduce a no-overbidding assumption, which can be viewed as an extension of the standard notion in \cite{ST13} to auctions with budgets.

\begin{definition}\label{def:nob}
Fix a mechanism $\mech = (X, P)$ and an instance $I$. We say that a distribution over action profiles $\vA$ is \emph{no-overbidding-feasible (NOB-feasible)} if
\begin{equation}\label{eq:NOB}\sum_{i\in N}\myexp{\va\sim \vA}{}{W_i(\va)} \leq \LW(I,\vA).
\end{equation}
We denote by $\EQ^{\text{NOB}}(I) \subseteq \EQ(I)$ the set of NOB-feasible equilibria of $\mech$, i.e., 
$\EQ^{\text{NOB}}(I):=\sset{\vA \in \EQ(I)}{\sum_{i\in N}\E_{\va\sim \vA}[W_i(\vec{a})] \leq \LW(I,\vec{\A})}$.
\end{definition}

We emphasize that we adopt NOB solely as an equilibrium refinement; in particular, deviations are taken over the full budget-feasible action space, following the convention in \cite{fu12, birmpas2019tight}. 
While several notions of no-overbidding have been proposed in the literature, we argue that imposing no-overbidding as an equilibrium refinement, rather than as a restriction on individual strategies, provides a more permissive and conceptually preferable approach; see Remark~\ref{rem:nob-comparison} for further discussion. 
Note that our no-overbidding condition is only required to hold in expectation and in aggregate (rather than for each bidder individually). 

As we show in Proposition~\ref{prop:NOB-no-restr-PYB}, every CCE of a pay-your-bid mechanism is NOB-feasible, so the refinement is vacuous in this case. 
In contrast, for mechanisms that are not pay-your-bid, some form of no-overbidding assumption is necessary to obtain a bounded price of anarchy, as illustrated by Example~\ref{ex:unboundedPOA}. Throughout the remainder of the paper, we focus exclusively on NOB-feasible equilibria and omit explicit mention of this assumption. We further distinguish between \emph{PYB mechanisms}, which implement the PYB pricing scheme and whose equilibria satisfy NOB automatically, and \emph{NOB mechanisms}, which may implement arbitrary pricing schemes but are considered only with respect to NOB-feasible equilibria.

\paragraph{Constraints for the LP.}

We derive upper bounds on the price of anarchy by lower-bounding the liquid welfare at an equilibrium via a linear program (LP). To obtain the strongest bounds achievable within our framework, we strengthen this formulation by incorporating constraints implied by the auction setting and equilibrium requirements.

We now demonstrate our LP approach, which applies to both PYB and NOB mechanisms in a unified way.
Fix a mechanism $\mech$ and an instance $I$. 
Let $\vA\in \CCE(I)$ be a coarse correlated equilibrium and $\vec{\chi}^*$ an optimal liquid welfare allocation. 
We partition the bidders in $N$ into two sets $N_1$ and $N_2$, based on whether their expected value at the equilibrium $\vA$ is at least their budget; %
more formally, we define 
$N_1:=\{i\in N\mid \myexp{\va\sim \vA}{}{v_i(X_i(\va))}\geq \budget_i \}$
and $N_2:=N\setminus N_1$. 
We derive two useful inequalities for these bidder sets. 
For $i\in N_1$, we obtain the following inequality by definition of $N_1$:
\begin{equation}\label{eq:inequality-N1}
    \LW_i(\vA)=\budget_i\geq\min\{\myexp{x_i\sim \chi_i^\ast}{}{v_i(x_i)},\budget_i\}= \hat v_i(\chi_i^\ast).
\end{equation}
Next, consider a bidder $i \in N_2$ and assume that $i$ is budget-feasible $(\lambda, \mu)$-semi-smooth. 
Let $A_i'$ be the respective deviation for bidder $i$ from Definition \ref{def:semi-smoothness}. Then, by definition of $N_2$, we obtain that
$$
\LW_i(\vA)=\E_{\va\sim \vA}[v_i(X_i(\va))]=\E_{\va\sim \vA}[\z_i(\va)]+\E_{\va\sim \vA}[P_i(\va)]\geq \myexp{a_i'\sim A_i'}{\va_{-i}\sim \vA_{-i}}{\z_i(a_i',\va_{-i})}+\E_{\va\sim \vA}[P_i(\va)],
$$
where the inequality holds because $\vA$ is a CCE. 
Using the semi-smoothness inequality from Definition~\ref{def:semi-smoothness}, we obtain the following valid inequality:  
\begin{equation}\label{eq:inequality-N2}
        \LW_i(\vA)\geq \lambda \hat v_i(\chi_i^\ast) - \mu \myexp{x_i\sim \chi_i^\ast}{\va_{-i}\sim \vA_{-i}}{\cP_i(x_i,\va_{-i})}+\myexp{\va\sim \vA}{}{P_i(\va)}.
\end{equation}

As shown in Lemma~\ref{lem:IR}, the following is a valid inequality
\begin{equation}\label{equation:inequality-budget-value-payment}
        \LW_i(\vA)\geq \myexp{\va\sim \vA}{}{P_i(\va)}.
    \end{equation}

Moreover, by Definition~\ref{def:wtp} of the willingness to pay, we have 
\begin{equation}\label{equation:inequality-willingness}
    \myexp{\va\sim \vA}{}{W_i(\va)} \geq \myexp{\va\sim \vA}{}{P_i(\va)}.
\end{equation}
In fact, this inequality can be further strengthened to an equality for PYB mechanisms. As it turns out, this strengthening is crucial for PYB and will be handled separately below.

As discussed above, we also impose the NOB constraint
\begin{equation}\label{equation:NOB-ineq}
\sum_{i\in N}\myexp{\va\sim \vA}{}{W_i(\va)} \leq \LW(I,\vA).
\end{equation}

Lastly, we require that threshold functions satisfy the following \emph{coupling constraint}: 
\begin{equation}\label{eq:coupling-constraint}
\sum_{i \in N} \myexp{x_i\sim \chi_i^\ast}{\va_{-i}\sim \vA_{-i}}{\cP_i(x_i, \va_{-i})}
\leq \sum_{i \in N} \myexp{\va\sim \vA}{}{W_i(\va)}.
\end{equation}

We now use the above inequalities to set up an LP that lower bounds the liquid welfare $\LW(I, \vA)$ of any CCE $\vA$. To this end, we introduce primal variables $LW_i, P_i, T_i, W_i$ that represent the respective expectation statistics above for given $\vA$ and $\vec{\chi}^*$.
Using constraints \eqref{eq:inequality-N1}--\eqref{eq:coupling-constraint} above, we obtain the following LP formulation. 
\begin{equation}\tag{Primal LP}\label{eq:ultimate-LP-simple}
\begin{alignedat}{3}
\LP :=\ \min\quad & \sum_{i\in N} LW_i \\
\text{subject to}\quad
  & LW_i\geq \hat v_i(\chi_i^\ast) &\qquad& %
  \forall  i\in N_1 &\qquad\quad& {\color{myblue}\small[\gamma_i]}\\
  & LW_i\geq \lambda \hat v_i(\chi_i^\ast)-\mu\cP_i+P_i && 
  \forall i\in N_2 && {\color{myblue}\small[\delta_i]}\\
  & LW_i\geq P_i && 
  \forall i\in N && {\color{myblue}\small[\eta_i]}\\
  & W_i \ge P_i && \forall i\in N&& {\color{myblue}\small[\kappa_i]}\\ 
  & \sum_{i\in N} LW_i \ge \sum_{i\in N} W_i && && {\color{myblue}\small[\tau]}\\ 
  & \sum_{i\in N} W_i\geq \sum_{i\in N}\cP_i && && {\color{myblue}\small[\theta]}\\
  & LW_i,\,P_i,\,\cP_i,\, W_i \geq 0 && 
  \forall i\in N && \\
\end{alignedat}
\end{equation}
Note that for PYB mechanisms we implicitly assume that the constraints $W_i \ge P_i$ are replaced by $W_i = P_i$ for all $i \in N$.

Next, we consider the dual program (DP) of this LP. For each constraint, we introduce a respective dual variable as indicated above. The dual LP is as follows:
    \begin{equation}\tag{Dual LP}\label{eq:ultimate-DP-simple}
\begin{alignedat}{2}
\DP :=\ 
\max\quad & \sum_{i\in N_1}\gamma_i \hat v_i(\chi_i^\ast) + \lambda\sum_{i\in N_2}  \delta_i && \hat v_i(\chi_i^\ast) \\
\text{subject to}\quad
  & \gamma_i+\eta_i+\tau\leq 1 && 
  \forall i\in N_1\\
  & \delta_i+\eta_i+\tau\leq 1 && 
  \forall i\in N_2\\
  &0\leq \eta_i+\kappa_i&&\forall i\in N_1\\
  &0\leq \delta_i+\eta_i+\kappa_i&&\forall i\in N_2\\
  & \mu\delta_i \leq\theta && 
  \forall i\in N_2\\
  &\theta+\kappa_i\leq \tau  &&\forall i\in N\\
  \quad& \gamma_i,\, \delta_j,\,\eta_k,\,\kappa_k, \,\tau, \,\theta \, \geq 0 && 
  \forall i\in N_1,\, j\in N_2,\, k\in N.
\end{alignedat}
\end{equation}
For PYB settings, because $W_i=P_i$ for all $i\in N$ in the primal, it is important to note that $\kappa_i$ becomes a free variable in the dual and is not constrained to $\kappa_i\geq 0$. For this reason, we leave the constraints $0\leq \eta_i+\kappa_i$ and $0\leq \eta_i+\delta_i+\kappa_i$ in the DP.%

We now use the above to derive bounds on the price of anarchy for PYB and NOB settings.

\begin{restatable}{theorem}{thmultimatePOA}\label{thm:ultimate-POA}
    Let $\mech$ be a mechanism, let $\cI$ be a class of instances for $\mech$, and let $\lambda > 0$ and $\mu \ge 0$. Suppose that, for every instance $I \in \cI$, there exist threshold functions $(\cP_i)_{i \in N}$ that satisfy \eqref{eq:coupling-constraint} for every CCE $\vA$ of $I$ and every optimal profile $\vchi^\ast$, and under which every bidder is budget-feasible $(\lambda,\mu)$-semi-smooth.
    \begin{itemize}\itemsep0pt
        \item If $\mech$ is a PYB mechanism then 
$\displaystyle
\CCEPOA(\cI) \le \frac{\max\bigl\{1,\; \mu + \lambda\,\mathbbm{1}_{\{\mathcal I\neq \mathcal I^{\infty}\}}\bigr\}}{\lambda}.
$
        \item If $\mech$ is a NOB mechanisms then
$\displaystyle
\CCEPOA(\mathcal I)\leq \frac{\mu+\max\{1,\lambda\,\mathbbm{1}_{\{\mathcal I\neq \mathcal I^{\infty}\}}\}}{\lambda}.
$
\end{itemize}
\end{restatable}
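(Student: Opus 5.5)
The plan is to combine weak LP duality with an explicit dual solution. First I justify that \eqref{eq:ultimate-LP-simple} lower-bounds the equilibrium welfare. Fix $I\in\cI$, a CCE $\vA$ (NOB-feasible in the NOB case, which is automatic for PYB by Proposition~\ref{prop:NOB-no-restr-PYB}) and an optimal $\vchi^\ast$. Set $LW_i=\LW_i(\vA)$, $P_i=\E_{\va\sim\vA}[P_i(\va)]$, $W_i=\E_{\va\sim\vA}[W_i(\va)]$ and $\cP_i=\E_{x_i\sim\chi_i^\ast,\va_{-i}\sim\vA_{-i}}[\cP_i(x_i,\va_{-i})]$. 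Constraints \eqref{eq:inequality-N1}--\eqref{eq:coupling-constraint} (semi-smoothness enters through \eqref{eq:inequality-N2}) show that this assignment is primal feasible, with $W_i=P_i$ in the PYB case. Hence $\LW(I,\vA)\ge \LP\ge \DP$. If I exhibit a dual-feasible point with $\gamma_i=\alpha$ for all $i\in N_1$ and $\delta_i=\alpha/\lambda$ for all $i\in N_2$, the objective equals $\alpha\sum_i\hat v_i(\chi_i^\ast)=\alpha\,\OPT(I)$, so the PoA is at most $1/\alpha$. It remains to find the largest feasible $\alpha$ in each case. A key observation for the budget-free case: if all $\budget_i=\infty$, then $N_1=\emptyset$, since $\E[v_i]\ge\infty$ never holds. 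Consequently every constraint indexed by $N_1$ disappears, which is exactly the source of the indicator $\mathbbm{1}_{\{\cI\neq\cI^\infty\}}$.

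\emph{NOB case} ($\kappa_i\ge 0$). Take $\eta_i=\kappa_i=0$, $\theta=\mu\alpha/\lambda$ and $\tau=\theta$. Then $\mu\delta_i\le\theta$ and $\theta+\kappa_i\le\tau$ hold with equality, and the nonnegativity constraints $0\le\eta_i+\kappa_i$ and $0\le\delta_i+\eta_i+\kappa_i$ are trivial. The remaining constraints read $\alpha/\lambda+\mu\alpha/\lambda\le1$ for $N_2$ and $\alpha+\mu\alpha/\lambda\le1$ for $N_1$. So $\alpha=\lambda/(\mu+\max\{1,\lambda\})$ is feasible in general, and $\alpha=\lambda/(\mu+1)$ when $N_1=\emptyset$. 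This yields the claimed bound $(\mu+\max\{1,\lambda\mathbbm{1}_{\{\cI\neq\cI^\infty\}}\})/\lambda$.

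\emph{PYB case} ($\kappa_i$ free). This is the delicate step. The NOB choice above loses an additive $\mu$, so I will use the freedom of $\kappa$ to absorb $\theta$ through $\eta$ instead of through $\tau$. Set $\tau=0$ and $\theta=\mu\alpha/\lambda$.
\begin{itemize}
\item For $i\in N_1$: set $\eta_i=\theta$ and $\kappa_i=-\eta_i$. Then $\eta_i+\kappa_i=0$ and $\theta+\kappa_i=0\le\tau$, so the only remaining requirement is $\alpha+\mu\alpha/\lambda\le1$.
\item For $i\in N_2$: set $\eta_i=\max\{0,\theta-\delta_i\}$ and $\kappa_i=-\delta_i-\eta_i$. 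Then $\delta_i+\eta_i+\kappa_i=0$ and $\theta+\kappa_i=\theta-\delta_i-\eta_i\le0$, and the requirement $\delta_i+\eta_i\le1$ becomes $\max\{\alpha/\lambda,\mu\alpha/\lambda\}\le1$.
\end{itemize}
Thus $\alpha=\lambda/\max\{1,\mu,\lambda+\mu\}=\lambda/\max\{1,\lambda+\mu\}$ is feasible in general. In the budget-free case only the $N_2$ condition remains, giving $\alpha=\lambda/\max\{1,\mu\}$. Together these give $\max\{1,\mu+\lambda\mathbbm{1}_{\{\cI\neq\cI^\infty\}}\}/\lambda$.

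The main obstacle is exactly this PYB dual choice: one must notice that a negative $\kappa_i$ lets $\eta_i$ absorb $\theta$ per bidder, instead of paying for it globally through $\tau$. The remaining verification is routine bookkeeping over the constraints listed above, followed by taking the supremum over $I$ and $\vA$.
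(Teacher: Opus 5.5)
Your proposal is correct and matches the paper's proof: the same primal assignment plus weak duality, and essentially the same dual solutions. For NOB these are $\tau=\theta=\mu\delta$, $\eta=\kappa=0$; for PYB they are $\tau=0$, $\theta=\mu\delta$, with $\eta_i=\theta$, $\kappa_i=-\theta$ on $N_1$ and $\eta_i=\max\{0,(\mu-1)\delta\}$ on $N_2$. The only cosmetic difference is that you let $\kappa_i$ vary per bidder ($\kappa_i=-\max\{\delta,\mu\delta\}$ on $N_2$), whereas the paper uses the uniform choice $\kappa=-\theta$; both are feasible and give the same bounds.
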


Note that when $\cI$ contains budget-free instances only, i.e., $\mathcal I = \mathcal I^{\infty}$, we obtain the same bounds as in \cite{ST13}.
We give the proof for PYB mechanisms below; the proof for NOB mechanisms is similar and deferred to Appendix~\ref{sec:NOB-POA-BOUND-LP}.

\begin{proof}
Fix an instance $I \in \cI$ and an equilibrium $\vA \in \CCE(I)$, and let $(\cP_i)_{i\in N}$ be threshold functions as in the statement. We first claim that $\LW(I,\vA) \ge \LP$. Indeed, the assignment $LW_i = \LW_i(\vA)$, $P_i = \myexp{\va\sim\vA}{}{P_i(\va)}$, $W_i = \myexp{\va\sim\vA}{}{W_i(\va)}$, and $\cP_i = \myexp{x_i\sim\chi_i^\ast}{\va_{-i}\sim\vA_{-i}}{\cP_i(x_i,\va_{-i})}$ is feasible for \eqref{eq:ultimate-LP-simple}: the constraints for $N_1$ and $N_2$ hold by \eqref{eq:inequality-N1} and \eqref{eq:inequality-N2}, where the latter uses that every bidder is budget-feasible $(\lambda,\mu)$-semi-smooth; the remaining constraints hold by \eqref{equation:inequality-budget-value-payment}, \eqref{equation:inequality-willingness} (with equality, as $\mech$ is PYB), \eqref{equation:NOB-ineq}, and \eqref{eq:coupling-constraint}. Its objective value is $\LW(I,\vA)$, which proves the claim.

Now consider any feasible solution $(\vec \gamma,\vec \delta,\vec \eta,\vec \kappa, \tau,\theta)$ of \eqref{eq:ultimate-DP-simple}. By weak duality and since $\sum_{i\in N}\hat v_i(\chi_i^\ast) = \OPT(I)$,
\begin{equation}\label{eq:POA-weak-duality}\LW(I,\vA)\geq \LP\geq  \sum_{i\in N_1}\gamma_i \hat v_i(\chi_i^\ast)+\lambda\sum_{i\in N_2}\delta_i\hat v_i(\chi_i^\ast)\geq \min\{\gamma_{\min},\lambda\delta_{\min}\}\cdot\OPT(I),\end{equation}
where $\gamma_{\min}=\min_{i\in N_1}\gamma_i$ and $\delta_{\min}=\min_{i\in N_2}\delta_i$, with the convention that the minimum over an empty set is $+\infty$. Hence, if we find a dual feasible solution with $\min\{\gamma_{\min},\lambda\delta_{\min}\} \ge c$, then $\CCEPOA(\cI) \le \frac{1}{c}$.

With this in mind, we set $\gamma_i:=\gamma$ for all $i\in N_1$ and $\delta_i:=\delta$ for all $i\in N_2$. For $\eta_i$, we set $\eta_i:=\eta_1$ if $i\in N_1$, and $\eta_i:=\eta_2$ otherwise. Analogously, we define $\kappa_i:=\kappa$ for all $i\in N$. The dual constraints become
    $$\gamma + \eta_1+\tau \leq 1, \qquad \delta+\eta_2+\tau \leq 1, \qquad0\leq \eta_1+\kappa,\qquad 0\leq \delta+\eta_2+\kappa,\qquad  \delta\mu\leq \theta, \qquad \theta+\kappa\leq \tau.$$
  Recall that since $\mech$ is a PYB mechanism, $\kappa$ is a free variable.
    Choose $\theta=\delta\mu=\eta_1$, $\kappa=-\theta$ and $\tau=0$; then the third and sixth constraints hold with equality, and the remaining ones reduce to
    $$\gamma + \delta\mu\leq 1, \qquad \delta+\eta_2 \leq 1, \qquad \delta\mu\leq \delta+\eta_2.$$
    Choose $\eta_2=\max\{0,(\mu-1)\delta\}$ and $\gamma=1-\delta\mu$. In that case, all constraints, together with $\gamma \ge 0$, reduce to $\delta\leq \frac{1}{\max\{1,\mu\}}$, and we wish to maximize $\min\{1-\delta\mu,\delta\lambda\}$. Solving $1-\delta\mu=\delta\lambda$ gives $\delta=\frac{1}{\lambda+\mu}$. When $\lambda+\mu\geq 1$, this solution is feasible, and we see that
    $$\min\{1-\delta\mu,\delta\lambda\}=\frac{\lambda}{\lambda+\mu}.$$
    When $\lambda+\mu< 1$, it is not, and we instead choose $\delta=1$, which is feasible since $\mu<1$. In this case, we see that
    $$\min\{1-\delta\mu,\delta\lambda\}=\min\{1-\mu,\lambda\}=\lambda,$$
    as $1-\mu>\lambda$. Combining these two branches and taking suprema over $\vA \in \CCE(I)$ and $I \in \cI$ in \eqref{eq:POA-weak-duality} gives 
    $$\CCEPOA(\mathcal I)\leq \min\left\{\frac{\lambda}{\lambda+\mu},\lambda\right\}^{-1}=\frac{\max\{1,\mu+\lambda\}}{\lambda}.$$
    Suppose now that $\cI = \cI^{\infty}$, i.e., every instance is budget-free. Then $N_1$ is empty, since budgets are infinite while expected values are finite; in particular, $\gamma_{\min} = +\infty$ and the dual objective only involves the $\delta_i$'s. Keeping $\theta=\delta\mu$, $\kappa=-\theta$ and $\tau=0$, the constraints reduce to
    $$\delta+\eta_2\leq 1, \qquad \delta\mu \leq \eta_2+\delta.$$
    Choosing $\eta_2=\max\{0,(\mu-1)\delta\}$ as before, these reduce to $\delta\leq \frac{1}{\max\{1,\mu\}}$. Setting $\delta=\frac{1}{\max\{1,\mu\}}$ then yields
    $$\CCEPOA(\mathcal I)\leq(\lambda\delta_{\min})^{-1}=\frac{\max\{1,\mu\}}{\lambda}.$$
    Combining both bounds gives
    $$\CCEPOA(\mathcal I)\leq \frac{\max\bigl\{1,\; \mu + \lambda\,\mathbbm{1}_{\{\mathcal I\neq \mathcal I^{\infty}\}}\bigr\}}{\lambda}. \qedhere$$
\end{proof}

\subsection{Reduction}\label{sec:reduction-smoothness}

Theorem~\ref{thm:ultimate-POA} takes budget-feasible semi-smoothness as its hypothesis, but our per-format proofs in Sections \ref{sec:simultaneous-auctions}-\ref{sec:position-auctions} establish classical, budget-free smoothness properties. In this section, we introduce these classical notions (Definition~\ref{def:smoothness}) and prove a reduction from them to budget-feasible semi-smoothness (Theorem~\ref{thm:smoothness-reduction}): whenever the classical smoothness deviations are \emph{conservative}, i.e., never pay more than the value of the outcome they target, they can be converted into budget-feasible ones.

\begin{definition}[Smoothness]\label{def:smoothness}
Fix a mechanism $\mech = (X, P)$ and an instance $I$. Consider a threshold function $\cP_i:\cX_i\times\cA_{-i} \to \mathbb R_{\geq0}$ for each $i\in N$.
\begin{itemize}
    \item Bidder $i \in N$ is \emph{$(\lambda, \mu)$-smooth} w.r.t.~an outcome $x_i$ if there exists a (possibly mixed) deviation $\A_i' \in \Delta(\cA_i)$ such that for every pure action profile $\va \in \cA$, we have
    $$\myexp{a_i'\sim A_i'}{}{\z_i(a_i', \va_{-i})} \geq \lambda v_i(x_i) - \mu \cP_i(x_i,\va_{-i}).$$ 
    Furthermore, we say that a bidder is \emph{conservatively $(\lambda,\mu)$-smooth} if the bidder is $(\lambda,\mu)$-smooth w.r.t.~every outcome $x_i\in \mathcal X_i$ with $A_i'$ being its respective smoothness deviation, we have for all $a_i'\in \supp (A_i')$ and any $\va_{-i}$ that
    $v_i(x_i)\geq P_i(a_i',\va_{-i})$.
    
    \item Bidder $i \in N$ is \emph{$(\lambda, \mu)$-semi-smooth w.r.t.~an outcome $x_i$} if for every $\vec{\A} \in \CCE(I)$ there exists a deviation $\A_i' \in \Delta(\cA_i)$, independent of the realizations of $\vec{\A}$, such that
    $$\myexp{a_i'\sim A_i'}{\va_{-i}\sim \vA_{-i}}{\z_i(a_i', \va_{-i})} \geq \lambda v_i(x_i) - \mu \myexp{\va_{-i}\sim \vA_{-i}}{}{\cP_i(x_i,\va_{-i})}.$$
    Similarly, bidder $i$ is \emph{conservatively semi-smooth} if given a CCE $\vA$ and every $x_i\in \cX_i$ there exists an $A_i'\in \Delta(\cA_i)$ such that the bidder is $(\lambda,\mu)$-semi-smooth w.r.t.~$x_i$ and the expected payment does not exceed $v_i(x_i)$, i.e., $v_i(x_i)\geq \myexp{a_i'\sim A_i'}{\va_{-i}\sim \vA_{-i}}{P_i(a_i',\va_{-i})}$.

\end{itemize}
\end{definition}

The first notion in Definition~\ref{def:smoothness} is the classical smoothness of \cite{Rou15, ST13}, stated per bidder: the deviation is a single (possibly mixed) action that performs well against \emph{every} action profile of the opponents. The second notion, semi-smoothness, allows the deviation to depend on the equilibrium distribution, similarly to the relaxed smoothness of \cite{dutting2018valuation}. This added flexibility is what allows us to capture proofs in which the deviation is sampled from the opponents' bid distribution\footnote{An example is the argument for subadditive valuations of \cite{feldmanfu}, which we extend to the budgeted setting in Lemma~\ref{lem:subadditive-smoothness}.}.

Compared to the literature, both notions differ in two respects, and each difference is needed for the reduction below. First, smoothness for agent $i \in N$ is stated with respect to an arbitrary fixed outcome $x_i \in \cX_i$, rather than with respect to the optimal allocation: our reduction constructs the budget-feasible deviation by sampling a target outcome from $\vchi^\ast$ and playing the corresponding per-outcome deviation. Second, conservativeness requires that the deviation never pays more than the value of the \emph{targeted} outcome, strengthening the conservative smoothness of \cite{ST13}, where payments are only capped by the bidder's maximum value; this stronger cap is exactly what makes the sampled deviation affordable after rescaling. Theorem~\ref{thm:smoothness-reduction} below then converts conservative semi-smoothness into the budget-feasible semi-smoothness required by Theorem~\ref{thm:ultimate-POA}. Classical $(\lambda,\mu)$-smoothness also admits a budget-feasible analog, which we do not need for our results; we discuss it, together with the relations among all four notions, in Appendix~\ref{sec:smoothness-further-discussion}.

\begin{theorem}\label{thm:smoothness-reduction}
    If a bidder is conservatively $(\lambda,\mu)$-semi-smooth, then they will be budget-feasible $(\lambda,\mu)$-semi-smooth.
\end{theorem}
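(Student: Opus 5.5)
The plan is to construct the budget-feasible deviation explicitly. We sample a target outcome from $\chi_i^\ast$, play the corresponding conservative semi-smooth deviation, and then scale the whole thing down by mixing with the opt-out action $\emptyset_i$ so that the budget holds.

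Fix a CCE $\vA$ and the optimal profile $\vchi^\ast$. For each $x_i\in\cX_i$, conservative semi-smoothness provides $A_i'(x_i)\in\Delta(\cA_i)$ with
\[
\myexp{a_i'\sim A_i'(x_i)}{\va_{-i}\sim \vA_{-i}}{\z_i(a_i',\va_{-i})}\ge \lambda v_i(x_i)-\mu\,\myexp{\va_{-i}\sim\vA_{-i}}{}{\cP_i(x_i,\va_{-i})}
\]
and expected payment at most $v_i(x_i)$. Let $V:=\E_{x_i\sim\chi_i^\ast}[v_i(x_i)]$. We distinguish two cases.

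\emph{Case $V\le \budget_i$.} Take $B_i$ to be the mixture $x_i\sim\chi_i^\ast$, then $a_i'\sim A_i'(x_i)$. Since $\cX_i$ is finite, this is a well-defined mixed action, and it is independent of the realizations of $\vA$ as required. By conservativeness its expected payment is at most $V\le\budget_i$, so $B_i\in\cC_i(\vA_{-i})$. Averaging the per-outcome inequalities over $\chi_i^\ast$ gives utility at least $\lambda V-\mu\,\E_{x_i\sim\chi_i^\ast,\va_{-i}\sim\vA_{-i}}[\cP_i(x_i,\va_{-i})]$, and here $V=\hat v_i(\chi_i^\ast)$.

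\emph{Case $V>\budget_i$.} Set $q:=\budget_i/V\in(0,1)$. Let $A_i'$ play $B_i$ with probability $q$ and $\emptyset_i$ with probability $1-q$. The opt-out action pays $0$ and yields utility $v_i(X_i(\emptyset_i,\va_{-i}))\ge 0$. Hence the expected payment is at most $qV=\budget_i$, and the expected utility is at least
\[
q\lambda V-q\mu\,\E[\cP_i]=\lambda\budget_i-q\mu\,\E[\cP_i]\ge \lambda\hat v_i(\chi_i^\ast)-\mu\,\E[\cP_i],
\]
using $q\le1$ and $\cP_i\ge0$. Together the two cases give Definition~\ref{def:semi-smoothness}.

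The main obstacle is the mixing step. It needs nonnegative utility from opt-out, which is fine since valuations are nonnegative and the payment is zero. More importantly, it needs conservativeness relative to the \emph{targeted} outcome. Only that stronger cap makes the expected payment of $B_i$ scale as $V$, so a single factor $q$ both restores affordability and lowers the guarantee exactly to $\min\{V,\budget_i\}$.
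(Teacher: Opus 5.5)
Your proof is correct and follows essentially the same construction as the paper: sample $x_i\sim\chi_i^\ast$, play the conservative semi-smooth deviation for $x_i$, and mix with $\emptyset_i$ using probability $q_i=\min\{1,\hat v_i(\chi_i^\ast)/V\}$. Your two-case split is exactly this choice of $q_i$, and your case $V\le\budget_i$ also absorbs the paper's separate treatment of $\hat v_i(\chi_i^\ast)=0$.
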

\begin{proof}
    Consider a bidder $i\in N$ that is conservatively $(\lambda,\mu)$-semi-smooth. Fix an equilibrium $\vA\in \CCE(I)$ and let $A_i^{x_i}$ be the deviation from the semi-smoothness definition for the allocation $x_i$. If $\hat v_i(\chi_i^\ast)=0$, consider the deviation $a_i'=\emptyset_i$. From this, the result follows trivially. So, suppose that $\hat v_i(\chi_i^\ast)\neq 0$. Next, we define $A_i'$ to be the deviation where $i$ deviates by $A_i^{x_i}$ with probability $\mathbb P_{x_i^\ast\sim \chi_i^\ast}(x_i^\ast = x_i)$. Lt $\hat A_i'$ be the deviation where with probability $q_i=\min\{1,\frac{\hat v_i(\chi_i^\ast)}{\myexp{x_i\sim \chi_i^\ast}{}{v_i(x_i)}}\}$ bidder $i$ deviates with $A_i'$, and otherwise with $\emptyset_i$. Notice that $\hat A_i' \in \cC_i(\vA_{-i})$ because
    \begin{align*}\myexp{\hat a_i'\sim \hat A_i'}{\va_{-i}\sim \vA_{-i}}{P_i(\hat a_i',\va_{-i})}&= q_i\myexp{a_i'\sim  A_i'}{\va_{-i}\sim \vA_{-i}}{P_i(a_i',\va_{-i})}\\
    &=q_i\sum_{x_i\in \cX_i}\mathbb P_{x_i^\ast\sim \chi_i^\ast}(x_i^\ast = x_i)\myexp{a_i^{x_i}\sim A_i^{x_i}}{\va_{-i}\sim \vA_{-i}}{P_i(a_i^{x_i},\va_{-i})}\\
    &\leq q_i \sum_{x_i\in \cX_i}\mathbb P_{x_i^\ast\sim \chi_i^\ast}(x_i^\ast = x_i)v_i(x_i)
    =q_i\myexp{x_i\sim \chi_i^\ast}{}{v_i(x_i)}
    \leq \budget_i.
    \end{align*}
    For the expected utility, we obtain by the definition of semi-smoothness that
    \begin{align*}\myexp{\hat a_i'\sim \hat A_i'}{\va_{-i}\sim \vA_{-i}}{\z_i(\hat a_i',\va_{-i})}&\geq q_i\myexp{a_i'\sim A_i'}{\va_{-i}\sim \vA_{-i}}{\z_i(a_i',\va_{-i})}\\
    &=q_i\sum_{x_i\in \cX_i}\mathbb P_{x_i^\ast\sim \chi_i^\ast}(x_i^\ast = x_i)\myexp{a_i^{x_i}\sim A_i^{x_i}}{\va_{-i}\sim \vA_{-i}}{\z_i(a_i^{x_i},\va_{-i})}\\
    &\geq q_i\sum_{x_i\in \cX_i}\mathbb P_{x_i^\ast\sim \chi_i^\ast}(x_i^\ast = x_i)\left(\lambda v_i(x_i) - \mu \myexp{\va_{-i}\sim \vA_{-i}}{}{\cP_i(x_i,\va_{-i})}\right)\\
    &=\lambda \hat v_i(\chi_i^\ast)-q_i\mu \myexp{x_i\sim \chi_i^\ast}{\va_{-i}\sim \vA_{-i}}{\cP_i(x_i,\va_{-i})}\\
    &\geq \lambda\hat v_i(\chi_i^\ast)-\mu\myexp{x_i\sim \chi_i^\ast}{\va_{-i}\sim \vA_{-i}}{\cP_i(x_i,\va_{-i})},\end{align*}
    where in the last inequality we used that $q_i\leq 1$.
\end{proof}

Since conservative $(\lambda,\mu)$-smoothness implies conservative $(\lambda,\mu)$-semi-smoothness, we see that for any setting, proving either conservative smoothness or conservative semi-smoothness suffices to apply Theorem~\ref{thm:ultimate-POA}. 
\begin{remark}\label{rem:wlog-positive-value}
    When proving (semi-)smoothness with respect to an outcome $x_i$, we assume without loss of generality that $v_i(x_i) > 0$. Indeed, suppose $v_i(x_i) = 0$ and consider the deviation $\A_i' = \emptyset_i$. The left-hand side of the smoothness inequality in Definition~\ref{def:smoothness} then satisfies $\myexp{}{}{\z_i(\emptyset_i, \va_{-i})} = \myexp{}{}{v_i(X_i(\emptyset_i, \va_{-i}))} \ge 0$, since the opt-out action induces no payment and valuations are non-negative, whereas the right-hand side equals $\lambda\, v_i(x_i) - \mu\, \cP_i(x_i, \va_{-i}) = -\mu\, \cP_i(x_i, \va_{-i}) \le 0$, since threshold functions are non-negative. The conservativeness requirement $v_i(x_i) \ge P_i(\emptyset_i, \va_{-i})$ also holds, with both sides equal to $0$.
\end{remark}

\section{Simultaneous Single-Item Auctions}\label{sec:simultaneous-auctions}

In this section, we present the first applications of our framework, to \emph{simultaneous item auctions}. Simultaneous sealed-bid auctions were first analyzed by \citet{bikhchandani1999auctions}, and their study from a price of anarchy perspective was initiated by \citet{christodoulou16bayesian} for second-price payments and by \citet{hassidim2011non} for first-price payments, with \cite{bhawalkar11, feldmanfu, christodoulou16} extending the analysis to broader valuation classes; all of these works concern budget-free bidders. Very recently, simultaneous first-price auctions with budgets were studied by \citet{baldeschi2026}.

In a simultaneous item auction, there is a set of bidders $N$ and a set $M = [m]$ of $m \ge 1$ items, each sold in a separate single-item auction, with all auctions running simultaneously. For each $i \in N$, the action set is $\cA_i = \mathbb{R}_{\ge 0}^m$: each bidder submits a bid vector $a_i = (a_{ij})_{j \in M}$, where $a_{ij} \ge 0$ is her bid for item $j$. Each bidder is allocated a subset of the items, i.e., $\cX_i = 2^{M}$, and the set of outcomes is %
$$\cX = \ssetl{\vx \in \prod_{i=1}^n \cX_i}{x_i \cap x_\ell = \emptyset \text{ for all } i \neq \ell \; }.$$%
Furthermore, each bidder $i$ has a monotone valuation function $v_i: 2^{M} \to \mathbb{R}_{\ge 0}$ with $v_i(\emptyset) = 0$, and a budget $\budget_i \in (0, \infty]$. Let $v_i: 2^M \rightarrow \mathbb{R}_{\ge 0}$ be a valuation function,
\begin{itemize}
\item $v_i$ is \emph{additive} if there exist additive valuations $(v_{ij})_{j \in M} \in \mathbb{R}^m_{\ge 0}$ such that for every subset $S \subseteq M$, it holds that $v_i(S) = \sum_{j \in S} v_{ij}$. 
\item $v_i$ is \emph{submodular} if $v_i(S \cup \set{j}) - v_i(S) \ge v_i(T \cup \set{j}) - v_i(T)$ for all $S \subseteq T \subseteq M$. 
\item $v_i$ is \emph{fractionally subadditive} (or, \emph{XOS}), if there exists a class $\mathcal{L}_{i} = \set{(v^\ell_{ij})_{j \in M} \in \mathbb{R}^m_{\ge 0}}$ of additive valuations such that for every subset $S \subseteq M$, it holds that $v_i(S) = \max_{\ell \in \mathcal{L}_{i}} \sum_{j \in S} v^\ell_{ij}$. 
\item $v_i$ is subadditive if $v_i(S\cup T)\leq v_i(T)+v_i(S)$ for all $S,T\subseteq M$.
\end{itemize}

We consider three auction formats. In all of them, for every profile $\va$ each item $j \in M$ is awarded to exactly one bidder among those with the highest bid on $j$, i.e., with $a_{ij} = \max_{\ell \in N}a_{\ell j}$, selected according to a fixed tie-breaking order over the bidders; we write $X_i(\va) \subseteq M$ for the set of items awarded to bidder $i$. Then, each bidder $i \in N$ pays $P_i(\va) = \sum_{j \in X_i(\va)} a_{ij}$ under the \emph{Simultaneous First-Price Auction (SFPA)}, $P_i(\va) = \sum_{j \in X_i(\va)} \max_{\ell \in {N \setminus \{i\}}}a_{\ell j}$ under the \emph{Simultaneous Second-Price Auction (SSPA)}, and $P_i(\va) = \sum_{j \in M} a_{ij}$ under the \emph{Simultaneous All-Pay Auction (SAPA)}.
For simultaneous first-price, second-price, and all-pay auctions, we let $\cP_i(x_i,\va_{-i}):=\sum_{j\in x_i}\beta_j(\va_{-i})$, where $\beta_j(\va)$ is the highest bid for item $j$ under $\va$, i.e., the winning bid. We first focus on the first-price setting; note that simultaneous FPAs are pay-your-bid, i.e., $W_i(\va)=P_i(\va)$.
\subsection{First-Price}
First notice that the inequality in \eqref{eq:coupling-constraint} holds because for every bid profile $\va\in \cA$ and any outcome $\vx\in \cX$ we have that
$$\sum_{i\in N}\cP_i(x_i,\va_{-i})=\sum_{i\in N}\sum_{j\in x_i}\beta_j(\va_{-i})\leq \sum_{i\in N}\sum_{j\in x_i}\beta_j(\va)\leq \sum_{j\in M}\beta_j(\va)=\sum_{i\in N}\sum_{j\in X_i(\va)}a_{ij}=\sum_{i\in N}P_i(\va),$$
where the first inequality holds since $\beta_j(\va_{-i})\leq \beta_j(\va)$ for every $j$, and the second since the sets $x_i$ are pairwise disjoint.

\paragraph{XOS valuations.} To warm up, we show how our framework recovers the result of \citet{baldeschi2026}. We first prove that the bidders are conservatively smooth, after which we use Theorem~\ref{thm:ultimate-POA} to derive the upper bound. Interestingly, our approach does not make use of value maximizers and is analytically easier.
\begin{lemma}\label{lem:FPAs-smoothness}
    Bidders with XOS valuations are conservatively $(\mu G,\mu)$-smooth for $\mu \geq 0$ and $G=1-e^{-\frac1\mu}$.
\end{lemma}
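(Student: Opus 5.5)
Fix a bidder $i$ with XOS valuation $v_i$ and an outcome $x_i\subseteq M$; by Remark~\ref{rem:wlog-positive-value} we may assume $v_i(x_i)>0$. Let $\ell\in\mathcal L_i$ be the additive clause supporting $x_i$, so that $v_i(x_i)=\sum_{j\in x_i}v^\ell_{ij}$ and $v_i(S)\ge\sum_{j\in S}v^\ell_{ij}$ for all $S\subseteq x_i$. The deviation $A_i'$ is the standard randomized first-price deviation of \citet{syrgkanis-thesis, ST13}: draw, independently for each item $j\in x_i$, a bid $b_j$ from the distribution with density $f_j(t)=\frac{\mu}{v^\ell_{ij}-t}$ on $[0,Gv^\ell_{ij}]$, and bid $0$ on items outside $x_i$. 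The choice $G=1-e^{-1/\mu}$ is exactly what makes this a probability density, since $\int_0^{Gv}\frac{\mu}{v-t}\,dt=\mu\ln\frac{1}{1-G}=1$.

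For the smoothness inequality, fix any $\va_{-i}$ and write $p_j:=\beta_j(\va_{-i})$. Let $S\subseteq x_i$ be the (random) set of items won. XOS gives $v_i(S)\ge\sum_{j\in S}v^\ell_{ij}$, and payments are $\sum_{j\in S}b_j$, so the utility is at least $\sum_{j\in x_i}(v^\ell_{ij}-b_j)\mathbbm 1[j\in S]$. It thus suffices to show per item that $\E[(v-b)\mathbbm 1[b\ge p]]\ge \mu G v-\mu p$, with $v=v^\ell_{ij}$ and $p=p_j$ (ties are harmless: the set where $b=p$ has measure zero under the continuous distribution, or one perturbs). If $p\ge Gv$, the right-hand side satisfies $\mu Gv-\mu p\le 0$ and the left-hand side is nonnegative. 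Otherwise, compute
\[
\int_p^{Gv}(v-t)\frac{\mu}{v-t}\,dt=\mu(Gv-p).
\]
Summing over $j\in x_i$ and using $\sum_{j\in x_i}v^\ell_{ij}=v_i(x_i)$ together with $\sum_{j\in x_i}p_j=\cP_i(x_i,\va_{-i})$ yields
\[
\E_{a_i'\sim A_i'}\bigl[\z_i(a_i',\va_{-i})\bigr]\ge\mu G\,v_i(x_i)-\mu\,\cP_i(x_i,\va_{-i}).
\]

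For conservativeness, every $a_i'$ in the support bids at most $Gv^\ell_{ij}\le v^\ell_{ij}$ on each item of $x_i$ and $0$ elsewhere. Hence for any $\va_{-i}$,
\[
P_i(a_i',\va_{-i})\le\sum_{j\in x_i}Gv^\ell_{ij}\le v_i(x_i).
\]
This holds pointwise, as the conservative definition requires. The only delicate step is the per-item integral and the normalization of the density. The case $\mu=0$ is degenerate ($G=1$, right-hand side $0$), and the opt-out action handles it.
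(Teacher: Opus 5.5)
Your proof is correct and is essentially the paper's argument. The paper draws a single common scale $t$ with density $\mu/(1-t)$ on $[0,G]$ and bids $t\,v^{x_i}_{ij}$ on every $j\in x_i$, whose per-item marginal is exactly your density $\mu/(v^{\ell}_{ij}-t)$ on $[0,Gv^{\ell}_{ij}]$; since the XOS lower bound on utility is a sum of per-item terms, correlated versus independent draws make no difference, and the per-item integral $\mu(Gv-p)$ and the pointwise payment cap $G\,v_i(x_i)\le v_i(x_i)$ are the same in both.
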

\begin{proof}
    Fix a bidder $i\in N$, an outcome $x_i\subseteq M$ and consider the additive support $\{v_{ij}^{x_i}\}_{j\in M}$ of the XOS valuation $v_i$ under the allocation $x_i$. Sample $t$ from a distribution $T$ with pdf $f(t)=\frac{\mu}{1-t}$ where $t\in [0,G]$ and for $j\in x_i$ submit bids $a_{ij}=t\cdot v_{ij}^{x_i}$ and $0$ for $j\notin x_i$. Call this bid $A_i'$. It is clear that the payment of this deviation can at most be $G\cdot v_i(x_i)\leq v_i(x_i)$. Next, for this deviation, we find for any $\va_{-i}$ that
    $$\begin{aligned}\myexp{a_i'\sim A_i'}{}{\z_i(a_i', \va_{-i})} &\geq  \int_0^G\left(\sum_{j\in x_i}v_{ij}^{x_i}(1-t)f(t)\mathbbm{1}\{t\cdot v_{ij}^{x_i}\geq \beta_j(\va_{-i})\}dt\right)\\
    &=\mu\sum_{j\in x_i}v_{ij}^{x_i}\int_0^G\mathbbm{1}\{t\cdot v_{ij}^{x_i}\geq \beta_j(\va_{-i})\}dt\\
    &\geq \mu\sum_{j\in x_i}\left(v_{ij}^{x_i}G-\beta_j(\va_{-i})\right)\\
    &=\mu Gv_i(x_i)-\mu \sum_{j\in x_i}\beta_j(\va_{-i}).
    \end{aligned}$$ 
    Here, the first inequality follows by properties of XOS valuations. The second inequality follows because for $j\in x_i$ where $Gv_{ij}^{x_i}\leq \beta_j(\va_{-i})$ the integral is zero and its lower bound non-positive. Since $x_i$ was an arbitrary outcome the result follows.
\end{proof}

\begin{theorem}\label{thm:FPA-upperbound-poa}
    The PoA of simultaneous FPAs where bidders are budget constrained and have XOS valuations is at most 2.188.
\end{theorem}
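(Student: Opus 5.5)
The plan is to combine Lemma~\ref{lem:FPAs-smoothness} with the reduction and the PYB bound of Theorem~\ref{thm:ultimate-POA}, and then optimize over the free parameter $\mu$.

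\textbf{Verifying the hypotheses.} By Lemma~\ref{lem:FPAs-smoothness}, every bidder with an XOS valuation is conservatively $(\mu G,\mu)$-smooth with $G = 1-e^{-1/\mu}$, for the threshold $\cP_i(x_i,\va_{-i})=\sum_{j\in x_i}\beta_j(\va_{-i})$. Conservative smoothness implies conservative semi-smoothness: average the pointwise inequality over $\va_{-i}\sim\vA_{-i}$, and note that the payment cap $v_i(x_i)\ge P_i(a_i',\va_{-i})$ holds pointwise, hence also in expectation. Theorem~\ref{thm:smoothness-reduction} then makes every bidder budget-feasible $(\mu G,\mu)$-semi-smooth. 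The coupling constraint \eqref{eq:coupling-constraint} was verified above for every bid profile and every feasible outcome. Taking expectations over $\va\sim\vA$ and $\vx\sim\vchi^\ast$ preserves it, since the left side only depends on $\vA$ through the marginals $\vA_{-i}$, which are marginals of the same joint distribution. Simultaneous FPAs are PYB, so all assumptions of Theorem~\ref{thm:ultimate-POA} hold.

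\textbf{Applying the bound.} For the budgeted class we use the first bullet with $\lambda=\mu G$:
\[
\CCEPOA(\cI)\le \frac{\max\{1,\mu+\mu G\}}{\mu G}.
\]
Whenever $\mu(1+G)\ge 1$, this equals $(1+G)/G = 1 + 1/G = 1 + 1/(1-e^{-1/\mu})$. This expression decreases as $\mu$ decreases. So the optimum lies at the boundary where $\mu(1+G)=1$, i.e.\ $\mu(2-e^{-1/\mu})=1$. On the other side of the boundary, $\mu(1+G)<1$ gives the bound $1/(\mu G)$, which grows as $\mu$ decreases. Hence the boundary point minimizes the bound.

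\textbf{Solving the boundary equation.} Set $s=1/\mu$, so the condition reads $2-e^{-s}=s$, i.e.\ $(s-2)e^{s-2}=-e^{-2}$. This gives $s-2=\mathcal W_0(-e^{-2})$ (the $\mathcal W_{-1}$ branch would give $s<0$). Then
\[
G = 1-e^{-s} = s-1 = 1+\mathcal W_0(-e^{-2}),
\]
and the bound becomes
\[
\frac{1+G}{G}=\frac{2+\mathcal W_0(-e^{-2})}{1+\mathcal W_0(-e^{-2})}\approx 2.188,
\]
which matches the closed form in Table~\ref{tab:results}.

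\textbf{Main obstacle.} The only real work is this one-dimensional optimization together with the branch choice for $\mathcal W$. A secondary subtlety is that the lemma is stated for arbitrary outcomes $x_i$, so the reduction's sampling from $\chi_i^\ast$ applies verbatim; this is precisely what the per-outcome conservativeness was designed for.
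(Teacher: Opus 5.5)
Your proposal is correct and follows the paper's proof. You combine Lemma~\ref{lem:FPAs-smoothness} with Theorems~\ref{thm:smoothness-reduction} and~\ref{thm:ultimate-POA} to obtain the bound $\max\{1,\mu+\mu G\}/(\mu G)$, then choose $\mu$ with $\mu(1+G)=1$, i.e.\ $\mu=1/(2+\mathcal W_0(-e^{-2}))$. Your added checks (the coupling constraint in expectation, the Lambert-branch choice) are sound, and the optimality-of-the-boundary argument is harmless but not needed for the upper bound, since any fixed $\mu$ already gives a valid bound.
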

\begin{proof}
    Using Lemma~\ref{lem:FPAs-smoothness} in combination with Theorems~\ref{thm:ultimate-POA} and~\ref{thm:smoothness-reduction}, we immediately obtain that
    $$\CCEPOA\leq \frac{\max\{1,\mu G+\mu\}}{\mu G}.$$
    Solving $\mu G+\mu = 1$ gives $\mu=\frac{1}{2+\mathcal W_0(-e^{-2})}\approx 0.543$. Plugging this into the bound then immediately gives
    $$\CCEPOA\leq \frac{2 + \mathcal{W}_0(-e^{-2})}{1 + \mathcal{W}_0(-e^{-2})} \approx 2.188.$$
\end{proof}

\paragraph{Subadditive valuations.}
Our framework also captures the PoA upper bound proof for subadditive bidders from \cite{feldmanfu}. Since the proof of the following lemma is similar to that of Lemma 3 from \cite{feldmanfu}, we defer it to Appendix~\ref{sec:subadditive-smoothness}.
\begin{restatable}{lemma}{lemsubadditivesmoothness}\label{lem:subadditive-smoothness}
    Bidders with subadditive valuations in simultaneous first-price auctions are conservatively $(\frac12-\delta,1)$-semi-smooth, for every $0<\delta<\frac12$.
\end{restatable}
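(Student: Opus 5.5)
The plan follows the argument of Feldman, Fu, Gravin and Lucier for subadditive bidders, adapted to the semi-smooth setting. Fix a bidder $i$, a CCE $\vA$, and a target outcome $x_i \subseteq M$. By Remark~\ref{rem:wlog-positive-value} we may assume $v_i(x_i)>0$. The deviation is built in two steps.

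\textbf{Constructing the deviation.} First sample a fresh profile $\tilde\va_{-i} \sim \vA_{-i}$, independently of the realization of $\vA$. Set
\[
b_j := \beta_j(\tilde\va_{-i}) \quad \text{for } j \in x_i, \qquad b_j := 0 \quad \text{otherwise}.
\]
Then, to make the deviation conservative, bid $b$ only if $\sum_{j\in x_i} b_j \le c\, v_i(x_i)$ for a constant $c \le 1$ (the natural choice is $c=1$, or $c=\frac12$ if the calculation needs slack). Otherwise play $\emptyset_i$. This deviation depends on $\vA$ only through its distribution, as semi-smoothness allows. Any payment made is at most $\sum_{j\in x_i} b_j \le v_i(x_i)$, so conservativeness holds pointwise.

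\textbf{Utility bound.} Let $\va_{-i}\sim\vA_{-i}$ be the real opponent profile, independent of $\tilde\va_{-i}$. With the bid $b$, bidder $i$ wins every item $j\in x_i$ with $\beta_j(\tilde\va_{-i}) \ge \beta_j(\va_{-i})$, apart from ties, which are handled by an infinitesimal perturbation or a tie-breaking convention. The bidder pays at most $\sum_{j\in x_i}\beta_j(\tilde\va_{-i})$.

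The key symmetry step is this. For each realization pair, let $S$ be the set of items of $x_i$ won with $\tilde\va_{-i}$ against $\va_{-i}$, and let $S'$ be the set won with the roles swapped. Together $S \cup S' \supseteq x_i$, so subadditivity gives $v_i(S) + v_i(S') \ge v_i(x_i)$. Since $\tilde\va_{-i}$ and $\va_{-i}$ are i.i.d., $\E[v_i(S)] = \E[v_i(S')]$, and therefore $\E[v_i(S)] \ge \frac12 v_i(x_i)$. The expected payment is at most $\E\left[\sum_{j\in x_i}\beta_j(\va_{-i})\right] = \E[\cP_i(x_i,\va_{-i})]$. This gives the inequality $\frac12 v_i(x_i) - \E[\cP_i]$ when the cap event is not imposed.

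\textbf{Main obstacle: the cap.} The hard step is handling the event $E$ in which $\sum_{j\in x_i} \beta_j(\tilde\va_{-i}) > v_i(x_i)$, where the deviation opts out. The value lost on $E$ is at most $v_i(x_i)\Pr[E]$. By Markov's inequality,
\[
\Pr[E] \le \frac{\E[\cP_i(x_i,\va_{-i})]}{v_i(x_i)},
\]
but this alone would push $\mu$ up to $2$. The plan is to tighten it with a truncation parameter. Opt out when the sum exceeds $v_i(x_i)$, and otherwise scale or accept. Then split on whether $\Pr[E]\le 2\delta$:
\begin{itemize}
\item If $\Pr[E] \le 2\delta$, the symmetry argument restricted to $E^c$ loses at most $\delta v_i(x_i)$ in value, giving $(\frac12-\delta)v_i(x_i) - \E[\cP_i]$.
\item If $\Pr[E] > 2\delta$, then $\E[\cP_i] \ge \Pr[E]\,v_i(x_i) > 2\delta\, v_i(x_i)$. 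It then suffices to note that the opt-out deviation already satisfies the required inequality, provided $\E[\cP_i]\ge(\frac12-\delta)v_i(x_i)$.
\end{itemize}
The intermediate regime, where $2\delta < \Pr[E]$ but $\E[\cP_i] < (\frac12-\delta)v_i(x_i)$, needs a finer argument. Here I would apply the symmetry argument conditioned on both samples lying in $E^c$. Since the conditioned samples are still i.i.d., the same pairing gives at least $\frac12 v_i(x_i)\Pr[E^c]$ of value. I would then pick the threshold, at $v_i(x_i)$ or a $\delta$-dependent multiple of it, so that the combined loss is at most $\delta v_i(x_i)$ with $\mu=1$. This balancing step is where the $\delta$ appears, and it is the part I expect to require the most care.
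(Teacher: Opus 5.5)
Your deviation (sample a fresh $\tilde\va_{-i}\sim\vA_{-i}$, bid the opponents' winning bids on $x_i$, opt out if their sum exceeds $v_i(x_i)$) and your symmetry step are the same as in the paper's proof. However, the proposal does not close. The intermediate regime is left open, and the fix you sketch does not work: the deviation can condition only its own sample $\tilde\va_{-i}$ on $E^c$, not the real profile $\va_{-i}$, so the i.i.d.\ pairing is lost.

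The missing observation is that the cap costs nothing. Your accounting charges the value $v_i(S)$ forgone on $E$, but it ignores the payment $\Sigma:=\sum_{j\in x_i}\beta_j(\tilde\va_{-i})$ that is saved there. Pointwise, on $E^c$ the utility is at least $v_i(S)-\Sigma$. On $E$ the opt-out yields utility at least $0> v_i(x_i)-\Sigma\ge v_i(S)-\Sigma$, using monotonicity and $S\subseteq x_i$. So the capped deviation satisfies $\hat u_i\ge v_i(S)-\Sigma$ in every realization. Taking expectations gives $\frac12 v_i(x_i)-\E[\cP_i(x_i,\va_{-i})]$ directly, with no Markov bound and no case split.

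The $\delta$ in the statement comes from ties, not from the cap, and you dismiss ties too quickly. Tie-breaking is a fixed order over bidders, so there is no convention to choose, and an ``infinitesimal'' perturbation is not an action. If bidder $i$ loses ties, the won set is $\{j\in x_i:\beta_j(\tilde\va_{-i})>\beta_j(\va_{-i})\}$, and its union with the swapped set misses every tied item. When $\vA_{-i}$ has atoms the $\frac12$ bound fails completely; for example, in a pure equilibrium $\tilde\va_{-i}=\va_{-i}$ always, so bidder $i$ wins nothing in $x_i$.

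The paper instead bids $\beta_j(\tilde\va_{-i})+\epsilon$ on each $j\in x_i$ for a finite $\epsilon>0$, opting out unless $\sum_{j\in x_i}(\beta_j(\tilde\va_{-i})+\epsilon)\le v_i(x_i)$. This wins every $j$ with $\beta_j(\tilde\va_{-i})\ge\beta_j(\va_{-i})$ at an extra cost of at most $|x_i|\epsilon$. The same pointwise argument then gives $\frac12 v_i(x_i)-\E[\cP_i(x_i,\va_{-i})]-|x_i|\epsilon$, and choosing $\epsilon=\delta v_i(x_i)/|x_i|$ yields the claimed $(\frac12-\delta,1)$ bound. This is also why the lemma is stated for every $\delta>0$ rather than at $\delta=0$.
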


The following result then follows from Theorems~\ref{thm:ultimate-POA} and~\ref{thm:smoothness-reduction}, and Lemma~\ref{lem:subadditive-smoothness}: applying them with $(\lambda,\mu)=(\tfrac12-\delta,1)$ yields $\CCEPOA\leq \frac{3/2-\delta}{1/2-\delta}$ for every $0<\delta<\tfrac12$, and letting $\delta\to 0$ gives the bound.

\begin{theorem}\label{thm:poa-subadditive}
    The PoA of simultaneous FPAs with budget constrained bidders with subadditive valuations is at most 3.
\end{theorem}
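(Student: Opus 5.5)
The plan is to apply the duality machinery directly, with Lemma~\ref{lem:subadditive-smoothness} as the only format-specific input. The simultaneous FPA is a PYB mechanism, so we use the first bullet of Theorem~\ref{thm:ultimate-POA}. Its hypotheses need threshold functions satisfying \eqref{eq:coupling-constraint} and budget-feasible semi-smoothness of every bidder.

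For the threshold functions we take $\cP_i(x_i,\va_{-i})=\sum_{j\in x_i}\beta_j(\va_{-i})$. The coupling constraint was already verified pointwise at the start of this subsection, for every bid profile and every feasible outcome $\vx$. We then take expectations over $\vx\sim\vchi^\ast$ and $\va\sim\vA$. One point needs care: the left-hand side of \eqref{eq:coupling-constraint} uses the marginals $\chi_i^\ast$ and $\vA_{-i}$ separately. Because the expression is a sum of per-bidder terms, each term depends only on the joint law of $(x_i,\va_{-i})$. We can therefore couple an independent draw $\vx\sim\vchi^\ast$ with $\va\sim\vA$ and apply the pointwise inequality inside the expectation. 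This gives $\sum_i \E[\cP_i]\le \sum_i\E[P_i(\va)]=\sum_i\E[W_i(\va)]$, using the PYB property.

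Next, fix $0<\delta<\tfrac12$. By Lemma~\ref{lem:subadditive-smoothness}, every subadditive bidder is conservatively $(\tfrac12-\delta,1)$-semi-smooth. Theorem~\ref{thm:smoothness-reduction} then makes every bidder budget-feasible $(\tfrac12-\delta,1)$-semi-smooth. Plugging $\lambda=\tfrac12-\delta$ and $\mu=1$ into the PYB bound gives
\[
\CCEPOA(\cI)\le \frac{\max\{1,\,1+\tfrac12-\delta\}}{\tfrac12-\delta}=\frac{3/2-\delta}{1/2-\delta}.
\]
This holds for every $\delta\in(0,\tfrac12)$, and the right-hand side tends to $3$ as $\delta\to0$. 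Hence $\CCEPOA(\cI)\le 3$.

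The only real content is Lemma~\ref{lem:subadditive-smoothness}, which we take as given. Apart from that, the one step needing attention is the expectation version of the coupling constraint, which the coupling argument above handles.
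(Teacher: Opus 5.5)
Your proposal is correct and follows the paper's own argument. The paper combines Lemma~\ref{lem:subadditive-smoothness}, Theorem~\ref{thm:smoothness-reduction} and the pay-your-bid bound of Theorem~\ref{thm:ultimate-POA} with $(\lambda,\mu)=(\tfrac12-\delta,1)$, obtaining $\frac{3/2-\delta}{1/2-\delta}$ and letting $\delta\to 0$. Your independent-coupling argument for passing the pointwise inequality to the expected form of \eqref{eq:coupling-constraint} makes explicit a step the paper leaves implicit.
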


\subsection{Uniform Bidding Interface}
In this auction setting (see \cite{LMZ24, deng21}) relevant for autobidding, bidders have additive valuations, submit a \emph{bid multiplier} $\alpha_i\in [0,1]$ and the bid on item $j$ then becomes $\alpha_i v_{ij}$. A \emph{uniform bid profile} is the collection of such bid multipliers $\vec \alpha=(\alpha_i)_{i\in N}$. Since the inequality in \eqref{eq:coupling-constraint} holds for all bid profiles, it will also hold in the restricted bid interface. \citet{deng21} analyzed this restricted setting for budget-free value maximizers and proved that, under a pointwise ROI constraint, the PoA is $1$. \citet{LMZ24} extended this analysis to value maximizers with pointwise ROI and budget constraints and proved that the PoA is unbounded. When the budget constraint is taken in expectation, we prove that the PoA is bounded for utility maximizers and is exactly $2.188$.

We first present the deviation lemma for this setting.
\begin{lemma}\label{lem:pacing-bid-smoothness}
    When restricted to the uniform bidding interface, the bidders are conservatively $(\mu G,\mu)$-semi-smooth for $\mu \geq 0$ and $G=1-e^{-\frac1\mu}$.
\end{lemma}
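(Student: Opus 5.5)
The plan is to reuse the random-multiplier construction from the proof of Lemma~\ref{lem:FPAs-smoothness}, adapted to the restricted interface. Fix a bidder $i$, an outcome $x_i\subseteq M$ with $v_i(x_i)>0$ (by Remark~\ref{rem:wlog-positive-value}), and a CCE $\vA$. Let $G=1-e^{-1/\mu}$. Sample a multiplier $t\in[0,G]$ from the density $f(t)=\frac{\mu}{1-t}$, which integrates to $1$ on $[0,G]$ by the choice of $G$, and play $\alpha_i=t$. Under the interface this means bidding $t\,v_{ij}$ on \emph{every} item $j\in M$, not only on those in $x_i$.

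\textbf{Utility bound.} Since valuations are additive and first-price payments are charged item by item, bidder $i$'s utility decomposes as $\sum_{j\in X_i}(1-t)v_{ij}$. Every won item contributes a nonnegative term, so I will drop all items outside $x_i$. For $j\in x_i$, item $j$ is won whenever $t\,v_{ij}\ge\beta_j(\va_{-i})$, up to tie-breaking, which I will handle by the usual perturbation argument. The computation from Lemma~\ref{lem:FPAs-smoothness} then gives, pointwise in $\va_{-i}$,
\[
\E_t[\z_i]\ge \mu G\,v_i(x_i)-\mu\sum_{j\in x_i}\beta_j(\va_{-i}).
\]
Taking expectations over $\va_{-i}\sim\vA_{-i}$ yields the semi-smoothness inequality with $\cP_i(x_i,\va_{-i})=\sum_{j\in x_i}\beta_j(\va_{-i})$.

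\textbf{Main obstacle: conservativeness.} The multiplier also wins items outside $x_i$, so the payment can reach $G\,v_i(M)$, which may far exceed $v_i(x_i)$. This is why the lemma is stated for \emph{semi}-smoothness: the deviation may depend on $\vA_{-i}$, and only the \emph{expected} payment has to be at most $v_i(x_i)$.

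\textbf{Bounding the expected payment.} I split it into two parts.
\begin{itemize}
\item Items in $x_i$: the expected payment is at most $v_i(x_i)\int_0^G \frac{\mu t}{1-t}\,dt=(1-\mu G)\,v_i(x_i)$. This follows from $\int_0^G\frac{dt}{1-t}=1/\mu$.
\item Items outside $x_i$: the expected payment is $P_{\mathrm{out}}$, and it comes with extra utility $U_{\mathrm{out}}$ that I discarded above.
\end{itemize}

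\textbf{Repair step.} If the total expected payment exceeds $v_i(x_i)$, I would use the dependence on $\vA_{-i}$ in one of two ways.
\begin{itemize}
\item First option: mix the deviation with $\emptyset_i$ at a rate $q$ chosen so that the expected payment equals $v_i(x_i)$. Then show that $U_{\mathrm{out}}$ compensates for the loss, i.e., $q\,(\mu G v_i(x_i)-\mu T+U_{\mathrm{out}})\ge \mu Gv_i(x_i)-\mu T$, where $T=\E_{\va_{-i}\sim\vA_{-i}}[\cP_i(x_i,\va_{-i})]$.
\item Second option: if the crude pointwise ratio $U_{\mathrm{out}}/P_{\mathrm{out}}\ge(1-G)/G$ is too weak for the first option (a quick check suggests it is, near $\mu\approx0.543$), choose a cutoff $t^\ast\le G$ depending on $\vA_{-i}$. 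Restrict or reweight the multiplier distribution to $[0,t^\ast]$ so that the outside payments stay under the slack $\mu G\,v_i(x_i)$ left by the in-$x_i$ bound, while the loss in the $x_i$ utility term is absorbed by $U_{\mathrm{out}}$.
\end{itemize}
I expect this balancing of outside payments against outside utility to be the crux of the proof. Both options exploit the same structural fact: under the interface, any item won at multiplier $t$ yields utility $(1-t)v_{ij}$ alongside payment $t\,v_{ij}$. That fact is the only link between outside payments and outside utility, so this is where I expect the argument to need the most care.
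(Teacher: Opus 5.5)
\textbf{Gap: the conservativeness repair is never carried out.} Your setup, the normalization of $f$, the bound $\mu G\,v_i(x_i)-\mu\sum_{j\in x_i}\beta_j(\va_{-i})$ for the unmodified multiplier deviation, and the bound $(1-\mu G)\,v_i(x_i)$ on the payment for items in $x_i$ are all correct. You also locate the difficulty correctly. But the proposal stops exactly where the content of the lemma lies. The step that makes the deviation conservative is only a menu of two options, and neither is completed, so conservativeness (and hence the lemma) is not established.

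Option~1 does fail as a general argument, as you suspected. Take $x_i=\{1\}$ with $v_{i1}=1$, and opponents who bid $\epsilon>0$ on item~1 and $GV-1$ on one outside item of value $V$. That outside item is then won only for $t>G-1/V$. As $V\to\infty$ and $\epsilon\to 0$, the outside part brings utility $\mu$ but payment $\mu G/(1-G)$. The globally rescaled utility then tends to $(\mu G+\mu)/(1-\mu G+\mu G/(1-G))\approx 0.29$, below the target $\mu G\approx 0.46$ at $\mu\approx 0.543$. Option~2 is not specified (how $t^\ast$ is chosen, truncation versus reweighting), and the claim that the lost $x_i$-utility is absorbed by $U_{\mathrm{out}}$ is never verified.

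\textbf{The missing idea.} The paper thins the deviation \emph{separately for each sampled $t$}. It plays multiplier $t$ with probability $q_i(t)=\min\{1,\,v_i(x_i)/\mathbb{E}_{\va_{-i}}[P_i(t,\va_{-i})]\}$ and multiplier $0$ otherwise. This caps the conditional expected payment at $v_i(x_i)$ for every $t$. Write $w(t)=\sum_{j\in M}v_{ij}\Pr[i\text{ wins }j\text{ at }t]$, and $w_{\mathrm{in}}(t)$ for the same sum over $j\in x_i$. The interface gives $\mathbb{E}[P_i(t,\va_{-i})]=t\,w(t)$, hence $q_i(t)\,w(t)=\min\{w(t),v_i(x_i)/t\}\ge\min\{w_{\mathrm{in}}(t),v_i(x_i)\}=w_{\mathrm{in}}(t)$. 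So the conditional utility is still at least $(1-t)\,w_{\mathrm{in}}(t)$ for every $t$. Integrating against $f$ is then exactly the first-price computation, with no in/out split needed.

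\textbf{Completing your Option~2.} Option~2 can also be finished with your own calibration. If the outside payment exceeds $\mu G\,v_i(x_i)$, truncate at the $t^\ast$ where it equals $\mu G\,v_i(x_i)$; the total payment is then at most $(1-\mu G)v_i(x_i)+\mu G\,v_i(x_i)=v_i(x_i)$. Since $t/(1-t)\le t^\ast/(1-t^\ast)$ on $[0,t^\ast]$, one gets $\int_0^{t^\ast}w_{\mathrm{out}}\,dt\ge\frac{1-t^\ast}{t^\ast}G\,v_i(x_i)\ge(G-t^\ast)\,v_i(x_i)$. The last inequality is $(t^\ast)^2-2Gt^\ast+G>0$. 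This covers the lost utility $\mu\int_{t^\ast}^G w_{\mathrm{in}}\,dt\le\mu(G-t^\ast)\,v_i(x_i)$. But this calculation is precisely what your proposal leaves out.
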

\begin{proof}
Fix a bidder $i\in N$ and a bundle $x_i\subseteq M$. Let $\vA\in \CCE(I)$ and sample $t$ from distribution with pdf $f(t)=\frac{\mu}{1-t}$ supported on $[0,1-e^{-\frac1\mu}]$. With probability 
$$q_i(t)=\min\left\{1,\frac{v_i(x_i)}{\myexp{\va_{-i}\sim \vA_{-i}}{}{P_i(t,\va_{-i})}}\right\},$$with the convention $q_i(t):=1$ whenever $\myexp{\va_{-i}\sim \vA_{-i}}{}{P_i(t,\va_{-i})}=0$, submit the bid multiplier $t$ and otherwise bid $0$. The expected payment of this mixed bid profile, call it $\hat A_i'$, is at most $v_i(x_i)$. Furthermore, we shall use $T$ to refer to the distribution with pdf $f(t)$.
Given a fixed sample $t\sim T$, let $\hat t$ denote the strategy of bidding $t$ with probability $q_i(t)$ and otherwise 0. The expected utility is lower bounded by
$$\myexp{\va_{-i}\sim \vA_{-i}}{}{u_i(\hat t,\va_{-i})}\geq q_i(t)(1-t)\sum_{j\in M}v_{ij}\myexp{\va_{-i}\sim \vA_{-i}}{}{x_{ij}(t,\vA_{-i})},$$
where $x_{ij}(t,\va_{-i})\in \{0,1\}$ is the indicator variable denoting whether bidder $i$ wins item $j$ under $(t,\va_{-i})$. 
Since $\myexp{\va_{-i}\sim \vA_{-i}}{}{P_i(t,\va_{-i})}=\sum_{j\in M}tv_{ij}\myexp{\va_{-i}\sim \vA_{-i}}{}{x_{ij}(t,\va_{-i})}$, we see that
$$\frac{\sum_{j\in M}v_{ij}\myexp{\va_{-i}\sim \vA_{-i}}{}{x_{ij}(t,\va_{-i})}}{\myexp{\va_{-i}\sim \vA_{-i}}{}{P_i(t,\va_{-i})}}=\frac1t,$$
whenever $\myexp{\va_{-i}\sim \vA_{-i}}{}{P_i(t,\va_{-i})}>0$ and $t>0$.
Thus, by the definition of $q_i(t)$ we obtain that
$$\begin{aligned}q_i(t)\sum_{j\in M}v_{ij}\myexp{\va_{-i}\sim \vA_{-i}}{}{x_{ij}(t,\va_{-i})}&=\min\left\{\sum_{j\in M}v_{ij}\myexp{\va_{-i}\sim \vA_{-i}}{}{x_{ij}(t,\va_{-i})},\frac{1}{t}v_i(x_i)\right\}\\
&\geq \min\left\{\sum_{j\in x_i}v_{ij}\myexp{\va_{-i}\sim \vA_{-i}}{}{x_{ij}(t,\va_{-i})},v_i(x_i)\right\}\\
&\geq \sum_{j\in x_i}v_{ij}\myexp{\va_{-i}\sim \vA_{-i}}{}{x_{ij}(t,\va_{-i})},
\end{aligned}$$
where in the first inequality, we used that $t\leq 1$ and that $x_i \subseteq M$. Observe that the inequality also holds more generally when $\myexp{\va_{-i}\sim \vA_{-i}}{}{P_i(t,\va_{-i})}=0$ or $t=0$. Next, we note that 
$$\myexp{\va_{-i}\sim \vA_{-i}}{}{x_{ij}(t,\va_{-i})}\geq \myexp{\va_{-i}\sim \vA_{-i}}{}{\mathbbm{1}\{t\cdot v_{ij}>\beta_j(\va_{-i})\}},$$
where by $\beta_j(\va_{-i})$ we understand the largest bid on item $j$, not largest bid multiplier. Thus, over all outcomes $t$, we see that
{\allowdisplaybreaks
\begin{align*}\myexp{\hat a_i'\sim \hat A_i'}{\va_{-i}\sim \vA_{-i}}{u_i(\hat A_i',\va_{-i})}&\geq\int_0^{G}\myexp{\va_{-i}\sim\vA_{-i}}{}{u_i(\hat t,\va_{-i})}f(t)dt\\
&\geq \mu \int_{0}^G\sum_{j\in x_i}v_{ij}\myexp{\va_{-i}\sim \vA_{-i}}{}{\mathbbm{1}\{t\cdot v_{ij}> \beta_j(\va_{-i})\}}dt\\
&=\mu \sum_{j\in x_i}v_{ij}\myexpl{\va_{-i}\sim \vA_{-i}}{}{\int_{0}^G\mathbbm{1}\{t\cdot v_{ij}> \beta_j(\va_{-i})\}dt}\\
&\geq\mu\sum_{j\in x_i}v_{ij}\left(G-\frac{\myexp{\va_{-i}\sim \vA_{-i}}{}{\beta_j(\va_{-i})}}{v_{ij}}\right)\\
&=\mu G v_i(x_i)-\mu \sum_{j\in x_i}\myexp{\va_{-i}\sim \vA_{-i}}{}{\beta_j(\va_{-i})},
\end{align*}}
as desired.

\end{proof}
\begin{theorem}\label{thm:poa-pacing-UB}
    The PoA of coarse correlated equilibria of budget constrained simultaneous FPAs with the restricted uniform bidding interface is upper bounded by 2.188.
\end{theorem}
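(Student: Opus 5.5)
The plan mirrors the proof of Theorem~\ref{thm:FPA-upperbound-poa}, replacing Lemma~\ref{lem:FPAs-smoothness} by its uniform-bidding counterpart Lemma~\ref{lem:pacing-bid-smoothness}. First, I will verify that the hypotheses of Theorem~\ref{thm:ultimate-POA} hold for the restricted interface.

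The mechanism remains pay-your-bid: a bidder submitting multiplier $\alpha_i$ pays $\alpha_i v_{ij}$ on every item $j$ she wins. For the threshold functions I take $\cP_i(x_i,\va_{-i})=\sum_{j\in x_i}\beta_j(\va_{-i})$, where $\beta_j$ is the highest actual bid on item $j$. The coupling constraint \eqref{eq:coupling-constraint} then holds pointwise for every bid profile, and hence for every profile induced by uniform multipliers. Taking expectations over $\vA$ and $\vchi^\ast$ gives it in expectation, using $W_i=P_i$.

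Next, Lemma~\ref{lem:pacing-bid-smoothness} shows that every bidder is conservatively $(\mu G,\mu)$-semi-smooth with $G=1-e^{-1/\mu}$. One subtlety is that the deviation constructed there depends on the equilibrium, via $q_i(t)$, and is conservative in expectation. This is exactly what the semi-smooth version of conservativeness in Definition~\ref{def:smoothness} permits. The deviation also stays inside the restricted action space, since it mixes between multipliers in $[0,G]\subseteq[0,1]$ and the opt-out multiplier $0$. Applying Theorem~\ref{thm:smoothness-reduction} then yields budget-feasible $(\mu G,\mu)$-semi-smoothness for every bidder and every CCE. The reduction's own mixture with $\emptyset_i$ is also a legal uniform bid, namely multiplier $0$.

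Theorem~\ref{thm:ultimate-POA} for PYB mechanisms then gives
\[
\CCEPOA\le \frac{\max\{1,\mu G+\mu\}}{\mu G}.
\]
I will choose $\mu$ to solve $\mu(1+G)=1$, that is, $\mu(2-e^{-1/\mu})=1$. Substituting $w=1-1/\mu$ turns this into $we^{w}=-e^{-2}$. This gives $\mu=1/(2+\mathcal W_0(-e^{-2}))\approx0.543$ and a bound of
\[
\frac{1}{\mu G}=\frac{1}{1-\mu}=\frac{2+\mathcal W_0(-e^{-2})}{1+\mathcal W_0(-e^{-2})}\approx 2.188,
\]
as in Theorem~\ref{thm:FPA-upperbound-poa}.

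The only step that needs real care is confirming that the framework applies when actions are restricted. Specifically, two things must be checked: that the opt-out action and all deviations lie in the interface, and that the willingness-to-pay and NOB quantities coincide with payments. The rest is a direct invocation of the earlier results.
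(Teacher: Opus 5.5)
Your proposal is correct and follows the paper's proof. That proof likewise combines Lemma~\ref{lem:pacing-bid-smoothness} with Theorems~\ref{thm:smoothness-reduction} and~\ref{thm:ultimate-POA} (PYB case) and reuses the choice of $\mu$ from Theorem~\ref{thm:FPA-upperbound-poa}; your extra checks (multiplier $0$ as the opt-out, deviations staying in $[0,1]$, $W_i=P_i$, pointwise coupling) are the verifications left implicit there. One small slip: the substitution that turns $\mu(2-e^{-1/\mu})=1$ into $we^{w}=-e^{-2}$ is $w=1/\mu-2$, not $w=1-1/\mu$, although your resulting $\mu=1/(2+\mathcal{W}_0(-e^{-2}))$ and the bound $\approx 2.188$ are correct.
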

\begin{proof}
    The upper bound follows immediately from Theorems~\ref{thm:ultimate-POA} and~\ref{thm:smoothness-reduction}, and Lemma~\ref{lem:pacing-bid-smoothness} in combination with the analysis from Theorem~\ref{thm:FPA-upperbound-poa}.
\end{proof}

We also provide a lower bound matching the upper bound from Theorem~\ref{thm:poa-pacing-UB}. The actual lower bound construction is deferred to Appendix~\ref{sec:SAWUB-LB}.

\begin{restatable}{theorem}{thmpacingbidLB}\label{thm:pacing-bid-LB}
    The PoA of mixed Nash equilibria of budget constrained simultaneous FPAs with the restricted uniform bidding interface is at least 2.188.
\end{restatable}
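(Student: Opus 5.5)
The plan is to give an explicit family of instances with additive bidders and an explicit product distribution over bid multipliers that is an MNE of the uniform-bidding game. The ratio $\OPT/\LW$ along the family should tend to $\frac{2+\mathcal W_0(-e^{-2})}{1+\mathcal W_0(-e^{-2})}$. The construction has to defeat both branches of the dual analysis in Theorem~\ref{thm:FPA-upperbound-poa}: some bidders should sit in $N_1$ and waste their budget, and the remaining bidders should behave like tight instances of the smoothness inequality of Lemma~\ref{lem:pacing-bid-smoothness}. This suggests two bidder types.

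\emph{Budgeted type.} A bidder $1$ has budget $\budget_1=1$ and a large value $K$ for each of a set $C$ of ``contested'' items. She also values a set $U$ of ``private'' items, which nobody else values, at total value at least $1$. Since her multiplier is uniform, she cannot bid low on $U$ and high on $C$. At equilibrium she therefore wins $C$ and exhausts her budget there, so $\LW_1=1$. In $\OPT$ she instead receives $U$, also with liquid value $1$, while $C$ goes elsewhere.

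\emph{Budget-free type.} Bidders $2,\dots,n$ value the items of $C$ (say bidder $\ell$ values item $\ell$ at $1$). Bidder $1$ plays a random multiplier $\alpha_1$ with a continuous distribution. The other bidders best-respond with bids whose distribution makes them indifferent over an interval $[0,G]$. I will take the density proportional to $\mu/(1-t)$ on the effective-bid scale, mirroring the deviation in Lemma~\ref{lem:FPAs-smoothness}, so that the smoothness inequality is tight.

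Concretely, the steps are as follows.
\begin{enumerate}
\item Fix the value ratio $K$. For a fixed multiplier of bidder $1$, the contested bidders face a single-item first-price auction against a known bid, and their equilibrium utility is the usual area term.
\item Compute bidder $1$'s multiplier distribution $F$ that makes each contested bidder indifferent on the support $[0,G]$. This gives $F(t)\propto \exp$-type expressions and the cutoff $G=1-e^{-1/\mu}$.
\item Verify that bidder $1$'s mixed action is a budget-feasible best response. Every multiplier in the support gives the same utility, and her expected payment equals exactly $\budget_1$. I will check that no budget-feasible mixture of multipliers does better; this uses the in-expectation budget constraint, since higher multipliers are unaffordable on average.
\item Compute $\LW$ and $\OPT$, then optimize $\mu$ (equivalently $K$ and the number of items). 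The binding condition should reduce to $\mu G+\mu=1$, which produces the Lambert~$W$ expression.
\end{enumerate}
To avoid tie-breaking and atom issues I will add $\varepsilon$-perturbations and pass to a limit.

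\textbf{Main obstacle.} Step 3 is the hard part. Bidder~1 can mix freely among multipliers subject only to an expected budget. I must ensure that her equilibrium utility, which is roughly $\E[\text{value}]-1$, dominates every budget-feasible mixture, including mixtures that concentrate on high multipliers with small probability. The first thing to try is a Lagrangian argument: show that with multiplier $\rho$ on the budget, utility minus $\rho\cdot$payment is constant on the support and no larger off the support, for an appropriate $\rho$ chosen by making $K$ large.

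If a two-type structure cannot reach $2.188$, the fallback is to replicate it: use many budgeted bidders and many contested items arranged as in the CCE lower bound of \cite{baldeschi2026}. Independent mixing is then achieved by giving each contested item its own pair of bidders.
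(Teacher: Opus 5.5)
Your proposal is a plan rather than a proof: Steps~3--4 are stated as expectations, and for the structure you describe they cannot be carried out. Because each contested bidder values a single item, the uniform interface does not constrain them. Each contested item is therefore just a single-item first-price auction between bidder~1's common effective bid $s$ and an \emph{independent} bid $t_\ell$.

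In the CCE constructions (Theorems~\ref{thm:DPA-LB} and~\ref{thm:GFP-LB}), the contested bidder is indifferent at utility $u$ yet wins only when $s=0$, i.e.\ with probability exactly $u$. This works only because its bid is perfectly correlated with $s$, and in an MNE it fails:
\begin{itemize}
\item With a uniform multiplier and first-price payments, bidder~1 pays $\alpha_1$ times the value she wins. Her Lagrangian (budget multiplier $\rho\ge 0$) is therefore $(1-(1+\rho)\alpha_1)\,\mathrm{Val}(\alpha_1)$.
\item Suppose she plays $F(s)=u/(1-s)$ with an atom $u$ at $0$, as in your Step~2. Optimality then forces the contested bidders to have no atom at $0$.
\item It also forces $\mathrm{Val}$ to be constant times $\frac{1}{1-(1+\rho)s/K}$, up to an additive constant, on $[0,1-u]$. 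So the aggregate CDF of the contested bids has the same form and is convex.
\item The contested bids therefore stochastically dominate the uniform distribution on $[0,1-u]$. The contested bidders win with average probability $\pi\ge\frac{u}{1-u}\ln\frac1u>u$.
\item Let $\beta=\budget_1/c$ be bidder~1's payment per contested item, with the budget binding as you intend. Since a contested bidder pays $\pi-u$, we get $\beta+2\pi\ge\mathbb{E}[\max(s,t)]+\pi+u\ge(1-u+u\ln u)+\frac{u}{1-u}\ln\frac1u+u=1+\frac{u^2}{1-u}\ln\frac1u>1$.
\item Hence $\OPT/\LW\le\frac{\beta+1}{\beta+\pi}<2$.
\end{itemize}
Your fallback of giving each contested item its own pair of bidders has the same defect, because the within-pair correlation is exactly what an MNE forbids. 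Perturbing to $\varepsilon$-equilibria and passing to a limit also does not yield a bound for exact MNE.

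The missing idea is to replace randomness by spread across items, which the uniform interface makes possible. The paper's lower bound is a \emph{pure} profile, and it is also an MNE since no pure deviation is profitable.
\begin{itemize}
\item A single budget-free bidder values all $m$ items at $1$ and bids multiplier $0$.
\item A budgeted blocker values item $j\le m-\ell$ at $\frac1\epsilon\frac{j}{\ell+j}$ and has budget $\budget=\sum_{j\le m-\ell}\frac{j}{\ell+j}$. With multiplier $\epsilon$ she bids exactly $\frac{j}{\ell+j}$ on each such item, wins all of them, and pays $\budget$.
\item Auxiliary bidders value only item $j$, at $\frac{j}{\ell+j}$, and bid multiplier $1$, so the blocker cannot lower her multiplier.
\end{itemize}
The budget-free bidder gets the other $\ell$ items for free. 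Because his bid is uniform, capturing $k$ more items requires multiplier $\frac{k}{\ell+k}$ on all $\ell+k$ items he wins, which leaves his utility at $\ell$. The fraction of items priced below $p$ is then about $a/(1-p)$, the CDF from the CCE constructions, now realised deterministically across items rather than across random draws. Choosing $\epsilon$ so that the blocker's value for item~$1$ equals $\budget$ gives $\OPT\ge\budget+m-1$ against $\LW=\budget+\ell$. Taking $\ell=\lfloor am\rfloor$, so that $\budget/m\to 1-a+a\ln a$, gives the ratio $\frac{2-a}{1-a}$.
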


\subsection{Second-Price}
For simultaneous second-price auctions, the inequality in \eqref{eq:coupling-constraint} holds because for any action profile $\va$ and any outcome profile $\vx\in \cX$ we have that
$$\sum_{i\in N}\cP_i(x_i,\va_{-i})=\sum_{i\in N}\sum_{j\in x_i}\beta_j(\va_{-i})\leq \sum_{j\in M}\beta_j(\va)=\sum_{i\in N}\sum_{j\in X_i(\va)}a_{ij}=\sum_{i\in N}W_i(a_i,X_i(\va)).$$

\begin{lemma}\label{lem:SPA-smoothness}
    Bidders in simultaneous second-price auctions are conservatively $(1-\epsilon,1)$-semi-smooth for any $0<\epsilon<1$, for any monotone valuation function.
\end{lemma}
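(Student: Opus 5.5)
The plan is to build, for a fixed CCE $\vA$ and a fixed target bundle $x_i\subseteq M$, a deviation that depends on $\vA_{-i}$. This is allowed, since semi-smoothness only requires the inequality in expectation over $\vA_{-i}$. The threshold is $\cP_i(x_i,\va_{-i})=\sum_{j\in x_i}\beta_j(\va_{-i})$, as in the paper, and the proof splits into two cases according to the size of $\E_{\va_{-i}\sim\vA_{-i}}[\sum_{j\in x_i}\beta_j(\va_{-i})]$. By Remark~\ref{rem:wlog-positive-value} we may assume $v_i(x_i)>0$. Let $m_x:=|x_i|\ge 1$.

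\emph{Case 1: $\E_{\vA_{-i}}[\sum_{j\in x_i}\beta_j(\va_{-i})]\ge(1-\epsilon)v_i(x_i)$.} Take $\A_i'=\emptyset_i$. Its utility is $\E[v_i(X_i(\emptyset_i,\va_{-i}))]\ge 0$. The right-hand side $(1-\epsilon)v_i(x_i)-\E[\sum_{j\in x_i}\beta_j(\va_{-i})]$ is $\le 0$. The payment is $0\le v_i(x_i)$, so conservativeness holds.

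\emph{Case 2: $\E_{\vA_{-i}}[\sum_{j\in x_i}\beta_j(\va_{-i})]<(1-\epsilon)v_i(x_i)$.}
\begin{itemize}
\item \emph{Choice of bids.} For each $j\in x_i$, choose a deterministic bid $b_j$ large enough that $\Pr_{\va_{-i}\sim\vA_{-i}}[\beta_j(\va_{-i})\ge b_j]\le \epsilon/m_x$. Such a $b_j$ exists because $\beta_j(\va_{-i})$ is a real random variable, so its upper tail vanishes. Taking the inequality strict avoids depending on the tie-breaking order. Bid $a'_{ij}=b_j$ for $j\in x_i$ and $a'_{ij}=0$ otherwise.
\item \emph{Value bound.} By a union bound, with probability at least $1-\epsilon$ bidder $i$ wins every item of $x_i$. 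By monotonicity of $v_i$ she then gets value at least $v_i(x_i)$, so $\E[v_i(X_i(a_i',\va_{-i}))]\ge(1-\epsilon)v_i(x_i)$.
\item \emph{Payment bound.} Under second-price payments, $i$ pays $\beta_j(\va_{-i})$ for each item she wins. An item $j\notin x_i$ can only be won with bid $0$, which forces $\beta_j(\va_{-i})=0$, so it costs nothing. Hence, pointwise, $P_i(a_i',\va_{-i})\le\sum_{j\in x_i}\beta_j(\va_{-i})=\cP_i(x_i,\va_{-i})$.
\end{itemize}
Taking expectations gives the semi-smoothness inequality
\[
\E_{\vA_{-i}}[\z_i(a_i',\va_{-i})]\ge(1-\epsilon)v_i(x_i)-\E_{\vA_{-i}}[\cP_i(x_i,\va_{-i})].
\]
The same pointwise payment bound, combined with the Case~2 assumption, gives $\E[P_i(a_i',\va_{-i})]\le\E[\sum_{j\in x_i}\beta_j(\va_{-i})]<(1-\epsilon)v_i(x_i)\le v_i(x_i)$, which is exactly the expected-payment form of conservativeness in Definition~\ref{def:smoothness}.

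The main obstacle is conservativeness, not the smoothness inequality. The quantile bids $b_j$ may sum to far more than $v_i(x_i)$, so the payment cannot be controlled by the bids themselves. The two-case split resolves this: the realized second-price payment is always dominated by the threshold, and when the threshold's expectation is large the trivial opt-out already suffices.

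The remaining details to check are the following.
\begin{itemize}
\item The deviation is a fixed mixed action, independent of the realization of $\vA$; it is in fact a pure one.
\item Ties at bid $0$ for items outside $x_i$ are harmless: monotonicity handles extra items, and the payment for such an item is $0$.
\item Since $\epsilon\in(0,1)$ is arbitrary, this yields conservative $(1-\epsilon,1)$-semi-smoothness for every such $\epsilon$.
\end{itemize}
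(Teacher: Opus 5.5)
Your proof is correct and follows essentially the paper's route. You bid high enough on each item of $x_i$ that a union bound wins all of $x_i$ with probability at least $1-\epsilon$, bid $0$ elsewhere, bound the second-price payment pointwise by $\sum_{j\in x_i}\beta_j(\va_{-i})$, and fall back to $\emptyset_i$ when the threshold is too large. The paper instead chooses the bids via Markov's inequality and splits cases on whether the deviation's expected payment exceeds $v_i(x_i)$, whereas you choose tail quantiles and split on $\E[\sum_{j\in x_i}\beta_j(\va_{-i})]$ versus $(1-\epsilon)v_i(x_i)$; a small payoff of your version is that it does not need the aggregate NOB, which the paper uses only to guarantee $\E[\beta_j(\va_{-i})]<\infty$.
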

\begin{proof}
    Fix a bidder $i$, a bundle $x_i\subseteq M$, an $\epsilon>0$, and a CCE $\vA$. By the aggregate NOB, \eqref{eq:NOB}, we have that
    $$\sum_{j\in M}\myexp{\va_{-i}\sim \vA_{-i}}{}{\beta_j(\va_{-i})}\leq \sum_{j\in M}\myexp{\va\sim \vA}{}{\beta_j(\va)}=\myexpl{\va\sim \vA}{}{\sum_{\ell\in N}\sum_{j\in X_\ell(\va)}a_{\ell j}}=\sum_{\ell\in N}\myexp{\va\sim \vA}{}{W_\ell(\va)}\leq \LW(I,\vA).$$
    Thus, we see that $\myexp{\va_{-i}\sim \vA_{-i}}{}{\beta_j(\va_{-i})}$ is bounded for all $j$. 
    
    For $j\in x_i$, let $a_j':=\frac{|x_i|\myexp{\va_{-i}\sim \vA_{-i}}{}{\beta_j(\va_{-i})}}{\epsilon}$ if $\myexp{\va_{-i}\sim \vA_{-i}}{}{\beta_j(\va_{-i})}>0$, and $a_j':=1$ otherwise. In either case,
$\mathbb P_{\va_{-i}\sim \vA_{-i}}(\beta_j(\va_{-i})\geq a_j')\leq \frac{\epsilon}{|x_i|},$
by Markov's inequality in the former case, and since $\beta_j(\va_{-i})=0$ almost surely in the latter.
    Consider the deviation $a_i'$ where for each $j\in x_i$, bidder $i$ bids $a_j'$ and 0 for all the other items. In that case, using the union bound, the probability that bidder $i$ wins all the items in $x_i$ is lower bound as follows:
    $$\mathbb P_{\va_{-i}\sim \vA_{-i}}(X_i(a_i',\va_{-i})\supseteq x_i)\geq 1-\mathbb P_{\va_{-i}\sim \vA_{-i}}(\exists j\in x_i :\beta_j(\va_{-i})\geq a_j')\geq 1-\epsilon.$$
    Now, let $\hat a_i'$ be the deviation that bids $a_i'$ whenever $v_i(x_i)\geq \myexp{\va_{-i}\sim \vA_{-i}}{}{P_i(a_i',\va_{-i})}$ and $\emptyset_i$ otherwise; $\hat a_i'$ is conservative by definition. We distinguish two cases. First, suppose $\hat a_i'=\emptyset_i$. Under $a_i'$, bidder $i$ pays $\beta_j(\va_{-i})$ for every item won in $x_i$, while every item won outside $x_i$ is won at an all-zero bid and hence has $\beta_j(\va_{-i})=0$. Therefore,
    $$\sum_{j\in x_i}\myexp{\va_{-i}\sim \vA_{-i}}{}{\beta_j(\va_{-i})}\geq \myexp{\va_{-i}\sim \vA_{-i}}{}{P_i(a_i',\va_{-i})}> v_i(x_i),$$
    so that $(1-\epsilon)v_i(x_i)-\sum_{j\in x_i}\myexp{\va_{-i}\sim \vA_{-i}}{}{\beta_j(\va_{-i})}<0\leq \myexp{\va_{-i}\sim \vA_{-i}}{}{\z_i(\emptyset_i,\va_{-i})}$, and the semi-smoothness inequality holds trivially. Otherwise, $\hat a_i'=a_i'$ and
    $$\begin{aligned}\myexp{\va_{-i}\sim \vA_{-i}}{}{\z_i(a_i',\va_{-i})}&=\myexpl{\va_{-i}\sim \vA_{-i}}{}{v_i(X_i(a_i',\va_{-i}))-\sum_{j\in X_i(a_i',\va_{-i})}\beta_j(\va_{-i})}\\&\geq (1-\epsilon)v_i(x_i)-\sum_{j\in x_i}\myexp{\va_{-i}\sim \vA_{-i}}{}{\beta_j(\va_{-i})},
    \end{aligned}$$
    where the inequality holds because $X_i(a_i',\va_{-i})\supseteq x_i$ with probability at least $1-\epsilon$, $v_i$ is monotone and non-negative, and $\beta_j(\va_{-i})=0$ for every item won outside $x_i$. Since $x_i$ was an arbitrary bundle, bidder $i$ is conservatively $(1-\epsilon,1)$-semi-smooth.
\end{proof}

\begin{theorem}\label{thm:poa-second-price}
For every class of monotone valuations, with or without budgets, the PoA of simultaneous SPAs is at most 2.
\end{theorem}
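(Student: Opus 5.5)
We will obtain this bound by plugging Lemma~\ref{lem:SPA-smoothness} into the NOB branch of Theorem~\ref{thm:ultimate-POA}, via Theorem~\ref{thm:smoothness-reduction}. Simultaneous SPAs are not pay-your-bid: a bidder pays the second-highest bid, while the willingness to pay $W_i(\va)=\sum_{j\in X_i(\va)} a_{ij}$ is her own bid on the items she wins. So we are in the NOB setting and restrict to NOB-feasible CCE, as stipulated throughout.

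\textbf{Hypotheses of Theorem~\ref{thm:ultimate-POA}.} We use the threshold functions $\cP_i(x_i,\va_{-i})=\sum_{j\in x_i}\beta_j(\va_{-i})$. The coupling constraint \eqref{eq:coupling-constraint} was shown pointwise just before Lemma~\ref{lem:SPA-smoothness}:
\[
\sum_{i}\cP_i(x_i,\va_{-i})\le \sum_j \beta_j(\va) = \sum_i W_i(\va)
\]
for every $\va$ and every feasible outcome profile $\vx$. Taking expectations over $\vx\sim\vchi^\ast$ and $\va\sim\vA$ gives \eqref{eq:coupling-constraint} for every CCE and every optimal profile.

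\textbf{Budget-feasible semi-smoothness.} By Lemma~\ref{lem:SPA-smoothness}, every bidder with a monotone valuation is conservatively $(1-\epsilon,1)$-semi-smooth for each $\epsilon\in(0,1)$. The lemma's proof uses the aggregate NOB constraint, which holds since we only consider NOB-feasible equilibria. Theorem~\ref{thm:smoothness-reduction} then upgrades this to budget-feasible $(1-\epsilon,1)$-semi-smoothness. That theorem's proof samples target bundles from $\chi_i^\ast$ and mixes with $\emptyset_i$; it needs only conservativeness of the per-bundle deviations, which Lemma~\ref{lem:SPA-smoothness} supplies.

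\textbf{Applying the bound.} The NOB case of Theorem~\ref{thm:ultimate-POA} with $\lambda=1-\epsilon$ and $\mu=1$ gives
\[
\CCEPOA\le \frac{1+\max\{1,(1-\epsilon)\mathbbm{1}_{\{\cI\neq\cI^\infty\}}\}}{1-\epsilon}=\frac{2}{1-\epsilon},
\]
because $\max\{1,1-\epsilon\}=1$ whether or not budgets are present. This is why the statement covers both the budgeted and budget-free settings. Letting $\epsilon\to 0$ gives the upper bound $2$, since the bound holds for every $\epsilon>0$ and the PoA does not depend on $\epsilon$.

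\textbf{Main obstacle.} The only real check is that the per-bundle deviations of Lemma~\ref{lem:SPA-smoothness} are defined relative to the fixed CCE $\vA$. That is exactly what semi-smoothness, rather than smoothness, permits, and Theorem~\ref{thm:smoothness-reduction} is stated for the semi-smooth version. Everything else is direct substitution.
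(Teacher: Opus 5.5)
Your proposal is correct and follows the paper's proof, which likewise just combines Lemma~\ref{lem:SPA-smoothness}, Theorem~\ref{thm:smoothness-reduction}, and the NOB branch of Theorem~\ref{thm:ultimate-POA}, restricted to NOB-feasible equilibria. Your explicit instantiation $\frac{1+\max\{1,1-\epsilon\}}{1-\epsilon}=\frac{2}{1-\epsilon}$, followed by $\epsilon\to 0$, is exactly the step the paper leaves implicit, and it holds in both the budgeted and budget-free cases.
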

\begin{proof}
The upper bound follows immediately from Theorems~\ref{thm:ultimate-POA} and~\ref{thm:smoothness-reduction}, and Lemma~\ref{lem:SPA-smoothness}.
\end{proof}

For any valuation class containing additive valuations, we can prove that the upper bound from Theorem~\ref{thm:poa-second-price} is tight.

\begin{theorem}\label{thm:SPA-LB}
    The PoA of PNEs of budget-free simultaneous SPAs for any valuation class that contains additive valuations is at least 2.
\end{theorem}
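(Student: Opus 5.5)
The plan is to build, for every $\varepsilon>0$, a budget-free instance with additive valuations and a pure Nash equilibrium of the simultaneous SPA that satisfies the aggregate no-overbidding condition \eqref{eq:NOB}. Its liquid welfare, which equals social welfare since budgets are infinite, should be about half the optimum. Since $\PNE$ is the weakest concept, this lower bound then applies to every equilibrium notion. I would adapt the classical budget-free example of \cite{christodoulou16bayesian}: two bidders and two items. Bidder $1$ values item $a$ at $1$ and item $b$ at $0$; bidder $2$ values item $b$ at $1$ and item $a$ at $0$. The optimum gives each bidder their own item, so $\OPT=2$.

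Next I would specify a "crossed" bad equilibrium in which each bidder wins the item they do not value, while keeping the sum of winning bids small. A direct version (bidder $1$ bids $0$ on $a$ and $1$ on $b$, bidder $2$ bids $1$ on $a$ and $0$ on $b$) gives welfare $0$. However, the winning bids then total $2>0=\LW$, so it violates \eqref{eq:NOB}. The aggregate NOB constraint caps total winning bids by the equilibrium welfare, and this is exactly what Theorem~\ref{thm:poa-second-price} exploits to get the bound $2$. The construction must therefore place value on the winners.

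I would use additive values $v_1=(1,\varepsilon')$ and $v_2=(\cdot,1)$, built so that in equilibrium bidder $1$ wins item $b$ (value $\approx 1$) and bidder $2$ wins item $a$ (value $\approx 1$) with winning bids summing to at most $\LW$. The optimum, meanwhile, should be about twice this, for example via a third bidder, or via values $2$ on the optimal items, whose bids are blocked. Concretely, I would try three items and two bidders:
\[
v_1=(2,1,0),\qquad v_2=(0,1,2).
\]
Then $\OPT\approx 4$ by giving item $1$ to bidder $1$, item $3$ to bidder $2$ and item $2$ to either. In the bad PNE, bidder $1$ wins only item $2$ at bid $1$, bidder $2$ wins item $1$ with bid $0$, and bidder $3$ wins item $3$ with bid $0$, which gives $\LW\approx 1$ and a ratio that is too large. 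That means NOB must be checked and the bids adjusted until the ratio equals exactly $2$.

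The main obstacle is this simultaneous tuning. Blocking bids on items a bidder does not value must deter a deviation (the deviator would pay the blocking bid, which must be at least their value), yet the winning bids must stay within $\LW$. Only the winner's own bid $W_i=a_{ij}$ counts in \eqref{eq:NOB}, while losing bids are free. So I would let the losing (blocking) bids be large and the winning bids be $0$ on worthless items. What remains binding is that a bidder winning an item with bid $0$ must be winning because the other bids are $0$, so the other bidder could win that item by bidding slightly above $0$. I would therefore arrange that deviations are deterred on the items the bidder values by the opponent's large losing-side bids. I would verify the best-response inequalities item by item, using additivity, and let $\varepsilon\to 0$ to drive the ratio to $2$.
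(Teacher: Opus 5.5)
Your proposal has a genuine gap: you never produce an instance together with a profile that is a PNE, satisfies \eqref{eq:NOB}, and has ratio $2$. The concrete attempts you write down all fail. The crossed profile violates \eqref{eq:NOB}, as you note yourself. The three-item example refers to a third bidder that does not exist, and it is not an equilibrium anyway: if bidder $2$ wins item $1$ with a bid of $0$, then bidder $1$, who values that item at $2$, can bid slightly above $0$ and win it at price $0$.

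More importantly, the heuristic you plan to tune with cannot work. In a second-price auction, the price bidder $i$ must pay to take an item $j$ they are not currently winning is $\max_{\ell\neq i} a_{\ell j}$. This is exactly the current \emph{winning} bid on $j$. Hence every bid that deters a deviation is a winning bid and is counted in $\sum_i W_i$, while losing bids play no role in deterring such deviations. ``Large losing blocking bids and zero winning bids'' is therefore impossible. The aggregate NOB constraint necessarily charges the deterring bids against the equilibrium welfare.

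The missing idea is to let a single bidder win both items rather than crossing them. The welfare from the item they value then covers, in the sense of \eqref{eq:NOB}, the blocking bid on the item they do not value. The paper takes $v_1=(1,0)$ and $v_2=(0,1)$, breaks ties in favor of bidder $1$, and uses the profile $a_1=(0,1)$, $a_2=(0,0)$:
\begin{itemize}
\item Bidder $1$ wins item $1$ (value $1$, bid $0$) and item $2$ (value $0$, bid $1$) and pays $0$, so their utility of $1$ is maximal.
\item Bidder $2$ would have to pay $1$ to obtain item $2$, so they cannot gain by deviating.
\item $\sum_i W_i = 1 = \LW$, so \eqref{eq:NOB} holds (even per bidder).
\item $\OPT = 2$.
\end{itemize}
This gives ratio exactly $2$, with no limiting argument and no three-item tuning.
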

\begin{proof}
    Consider an instance with two bidders whose valuations are $v_1=(1,0)$ and $v_2=(0,1)$. Ties are broken in favor of bidder 1 and consider the bid profiles $a_1=(0,1)$ and $a_2=(0,0)$. Note that the aggregate NOB is satisfied (in fact, the per player NOB is satisfied). Next, this defines an equilibrium: to win item 2, bidder 2 must bid above $1$ and pay $1$, for a utility of $0$; hence no deviation is strictly profitable. The welfare at this equilibrium is 1 while the optimal welfare is 2, yielding the lower bound.
\end{proof}

\subsection{All-Pay}
For the all-pay setting, note that the pricing rule dominates the simultaneous FPA pricing rule. Since the threshold functions remain unchanged, we immediately see that the coupling constraint from \eqref{eq:coupling-constraint} holds. Thus, we turn our attention to smoothness.

\begin{lemma}\label{lem:all-pay-smoothness}
    Bidders in simultaneous all-pay auctions with XOS valuations are conservatively $(1-\frac{1}{2\alpha},\alpha)$-smooth for $\alpha \geq 1$.
\end{lemma}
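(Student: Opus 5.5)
The plan is to use the standard uniform-bid deviation for all-pay auctions on the additive representative of the XOS valuation at the target bundle, and to check the two requirements of conservative smoothness separately. Fix a bidder $i$ and an outcome $x_i \subseteq M$. By Remark~\ref{rem:wlog-positive-value} we may assume $v_i(x_i) > 0$. Let $(v_{ij}^{x_i})_{j\in M}$ be the additive function from $\mathcal L_i$ that attains $v_i(x_i)$. Then $v_i(x_i)=\sum_{j\in x_i} v^{x_i}_{ij}$, and $v_i(S)\ge \sum_{j\in S} v^{x_i}_{ij}$ for every $S$.

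\textbf{The deviation.} The deviation $A_i'$ bids independently on each item $j\in x_i$ with $v^{x_i}_{ij}>0$. On such an item it draws $a_{ij}$ uniformly from $[0, v^{x_i}_{ij}/\alpha]$. It bids $0$ on every other item.

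\textbf{Conservativeness.} Under all-pay payments, bidder $i$ pays $\sum_j a_{ij}$ whether or not she wins. Every action in the support of $A_i'$ therefore pays at most $\sum_{j\in x_i} v^{x_i}_{ij}/\alpha = v_i(x_i)/\alpha$. Since $\alpha\ge 1$, this is at most $v_i(x_i)$ for every $\va_{-i}$, as conservativeness requires.

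\textbf{The smoothness inequality.} Fix an arbitrary $\va_{-i}$ and bound the expected utility termwise.

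For the value term, let $S$ be the set of items in $x_i$ that bidder $i$ wins. The XOS property gives $v_i(X_i(a_i',\va_{-i})) \ge v_i(S) \ge \sum_{j\in S} v^{x_i}_{ij}$. Bidder $i$ certainly wins item $j$ whenever $a_{ij} > \beta_j(\va_{-i})$, regardless of tie-breaking. Under the uniform distribution with density $\alpha/v^{x_i}_{ij}$, this event has probability at least $1-\alpha\beta_j(\va_{-i})/v^{x_i}_{ij}$; the bound is trivially valid when it is negative. Hence the expected value is at least
\[
\sum_{j\in x_i}\bigl(v^{x_i}_{ij}-\alpha\beta_j(\va_{-i})\bigr) = v_i(x_i)-\alpha\sum_{j\in x_i}\beta_j(\va_{-i}).
\]

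For the payment term, the expected payment is exactly $\sum_{j\in x_i} v^{x_i}_{ij}/(2\alpha) = v_i(x_i)/(2\alpha)$.

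Subtracting the payment from the value bound gives
\[
\E_{a_i'\sim A_i'}[\z_i(a_i',\va_{-i})] \ge \Bigl(1-\tfrac{1}{2\alpha}\Bigr)v_i(x_i) - \alpha\, \cP_i(x_i,\va_{-i}),
\]
with $\cP_i(x_i,\va_{-i})=\sum_{j\in x_i}\beta_j(\va_{-i})$. This holds for every $\va_{-i}$ and every $x_i$, so bidder $i$ is conservatively $(1-\frac{1}{2\alpha},\alpha)$-smooth.

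\textbf{Main obstacle.} The only delicate point is the win-probability bound. It must use the strict event $a_{ij}>\beta_j(\va_{-i})$ so that tie-breaking does not matter; ties occur with probability zero under a continuous distribution. It must also hold when $\beta_j(\va_{-i}) \ge v^{x_i}_{ij}/\alpha$, where the bound is negative and so trivially true. Everything else is routine once the deviation is chosen.
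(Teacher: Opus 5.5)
Your proof is correct and follows essentially the same route as the paper. You bid uniformly on $[0, v_{ij}^{x_i}/\alpha]$ on each item of the target bundle, get conservativeness from the payment cap $v_i(x_i)/\alpha \le v_i(x_i)$, and bound the expected value itemwise by $v_{ij}^{x_i}-\alpha\beta_j(\va_{-i})$ against an expected payment of $v_i(x_i)/(2\alpha)$. The only difference is that the paper scales all bids by a common $t$ uniform on $[0,1/\alpha]$, while you draw independently per item; this is immaterial because only the per-item marginals enter the bound. Your explicit treatment of items with $v_{ij}^{x_i}=0$ is also slightly cleaner.
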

\begin{proof}
Fix a bidder $i\in N$ and an outcome $x_i$. Consider the additive support $\{v_{ij}^{x_i}\}_{j\in M}$ of the XOS valuation $v_i$ under the allocation $x_i$.
Consider the deviation $A_i'$ such that $a_{ij}'=t\cdot v_{ij}^{x_i}$ for $j\in x_i$ and otherwise $0$, where $t$ is sampled uniformly at random from $[0,\frac{1}{\alpha}]$; its pdf is $f(t)=\alpha$. Since $\alpha\geq 1$, the payment of this deviation is $\sum_{j\in x_i}t\, v_{ij}^{x_i}\leq \frac{1}{\alpha}v_i(x_i)\leq v_i(x_i)$ for every realization, so the deviation is conservative. Let $k$ be the set of items they win when $t=1$. For the items $j\notin k$ we see that $\beta_j(\va_{-i})\geq \frac1\alpha v_{ij}^{x_i}$. In that case, we find for any $\vec a_{-i}$ that
\begin{align*}\myexp{a_i'\sim A_i'}{}{u_i(a_i',\va_{-i})}&\geq \sum_{j\in k}\int_{\frac{\beta_j(\va_{-i})}{v_{ij}^{x_i}}}^{\frac1\alpha}f(t)v_{ij}^{x_i} dt-\sum_{j\in x_i}\int_{0}^{\frac1\alpha}tf(t)v_{ij}^{x_i} dt\\
&\geq \sum_{j\in x_i}\int_{\frac{\beta_j(\va_{-i})}{v_{ij}^{x_i}}}^{\frac1\alpha}\alpha v_{ij}^{x_i} dt-\sum_{j\in x_i}\frac{1}{2\alpha}v_{ij}^{x_i}\\
&=\sum_{j\in x_i}\left(v_{ij}^{x_i}-\alpha\beta_j(\va_{-i})-\frac{1}{2\alpha}v_{ij}^{x_i}\right)\\
&=\left(1-\frac{1}{2\alpha}\right)v_i(x_i)-\alpha\sum_{j\in x_i}\beta_j(\va_{-i}).
\end{align*}
The first inequality follows by definition of XOS valuations and the second inequality follows because $\frac{\beta_j(\va_{-i})}{v_{ij}^{x_i}}\geq \frac{1}{\alpha}$ for $j\in x_i\setminus k$ implying that the integral becomes negative for those items.
\end{proof}

\begin{theorem}\label{thm:poa-SAPA-UB}
    The PoA of CCEs of simultaneous all-pay auctions with budgets where bidders have XOS-valuations is at most 3.
\end{theorem}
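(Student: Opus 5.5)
The plan is to apply the general machinery exactly as was done for first-price auctions. First I check the hypotheses of Theorem~\ref{thm:ultimate-POA}. The all-pay auction is pay-your-bid in the relevant sense, since the payment $\sum_{j\in M}a_{ij}$ does not depend on $\va_{-i}$, so $W_i(\va)=P_i(\va)$. The coupling constraint \eqref{eq:coupling-constraint} holds with the threshold $\cP_i(x_i,\va_{-i})=\sum_{j\in x_i}\beta_j(\va_{-i})$, as already noted: the all-pay payments dominate the first-price payments, which in turn dominate $\sum_j\beta_j(\va)$. Lemma~\ref{lem:all-pay-smoothness} gives conservative $(1-\frac{1}{2\alpha},\alpha)$-smoothness for every $\alpha\ge1$. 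Conservative smoothness implies conservative semi-smoothness, and Theorem~\ref{thm:smoothness-reduction} then upgrades this to budget-feasible $(\lambda,\mu)$-semi-smoothness with $\lambda=1-\frac{1}{2\alpha}$ and $\mu=\alpha$.

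Next I invoke the PYB branch of Theorem~\ref{thm:ultimate-POA}, which gives
\[
\CCEPOA\le \frac{\max\{1,\mu+\lambda\}}{\lambda}=\frac{\alpha+1-\frac{1}{2\alpha}}{1-\frac{1}{2\alpha}}=\frac{2\alpha^2+2\alpha-1}{2\alpha-1},
\]
because $\mu+\lambda\ge 1$ whenever $\alpha\ge1$.

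It remains to optimize over $\alpha\ge 1$. Differentiating gives the numerator $(4\alpha+2)(2\alpha-1)-2(2\alpha^2+2\alpha-1)=4\alpha^2-4\alpha$, which is nonnegative on $[1,\infty)$. So the bound is increasing there and is minimized at $\alpha=1$, where it equals $\frac{3}{1}=3$. This yields the claimed bound of $3$.

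I do not expect a real obstacle. The only point needing care is confirming the PYB classification, which is what justifies using the sharper PYB bound instead of the NOB bound. If one preferred not to argue this, the NOB branch would give $\frac{\alpha+\max\{1,\lambda\}}{\lambda}=\frac{\alpha+1}{1-1/(2\alpha)}$, which is $4$ at $\alpha=1$ and too weak. So the equality $W_i=P_i$ must be verified from Definition~\ref{def:wtp}: the supremum over $\va_{-i}$ with $X_i(\va)=x_i$ of $\sum_j a_{ij}$ is attained trivially.
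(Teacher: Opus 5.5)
Your proposal is correct and follows the paper's proof. The paper likewise combines Lemma~\ref{lem:all-pay-smoothness} at $\alpha=1$, i.e.\ $(\lambda,\mu)=(\tfrac12,1)$, with Theorem~\ref{thm:smoothness-reduction} and the pay-your-bid branch of Theorem~\ref{thm:ultimate-POA}, giving $\max\{1,\tfrac32\}/\tfrac12=3$. Your check that $W_i(\va)=P_i(\va)$ for the all-pay rule, which is what allows the PYB bound instead of the NOB bound of $4$, and your check that $\alpha=1$ is optimal on $[1,\infty)$ spell out steps the paper leaves implicit.
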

\begin{proof}
    The upper bound follows immediately from Theorems~\ref{thm:ultimate-POA} and~\ref{thm:smoothness-reduction}, and Lemma~\ref{lem:all-pay-smoothness}, where we set $\alpha=1$.
\end{proof}

The proof of the following matching lower bound is deferred to Appendix~\ref{sec:SAPA-LB}.
\begin{restatable}{theorem}{thmSAPALB}\label{thm:SAPA-LB}
    The PoA of mixed Nash equilibria of simultaneous all-pay auctions with budgets where bidders have XOS valuations is at least 3.
\end{restatable}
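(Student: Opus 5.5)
We will prove the bound with an explicit family of instances and a product distribution $\vA$. We check that $\vA$ is a mixed Nash equilibrium in the sense of Definition~\ref{def:equilibria}, and that $\OPT(I)/\LW(I,\vA)\to 3$ as a parameter tends to its limit. The upper-bound proof suggests where the slack comes from. With $\alpha=1$, Theorem~\ref{thm:ultimate-POA} gives $\max\{1,\mu+\lambda\}/\lambda = (3/2)/(1/2)$. The factor $2$ is the budget-free all-pay loss: the deviation pays $\tfrac12 v_i(x_i)$ on average and must overcome $\beta_j$ with $\mu=1$. The additional $+1$ in the numerator comes from budget-capped bidders in $N_1$. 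So the instance should contain one \emph{budget-capped} bidder (in $N_1$), who contributes only $\budget$ at equilibrium, and \emph{budget-free-like} competitors, whose equilibrium is as wasteful as the worst budget-free all-pay equilibria.

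\textbf{Construction.} Bidder $1$ has a very large value $V$ on item $1$ and budget $\budget_1=1$. Bidder $2$ is unit-demand, which is XOS: value $2$ for item $1$, value $2$ for item $2$, value $2$ for the pair, and no budget. Possibly a third, weak bidder bids $0$ on item $2$ with tiny value. The optimum gives item $1$ to bidder $1$, contributing $\hat v_1=1$, and item $2$ to bidder $2$, contributing $2$, so $\OPT\approx 3$. At equilibrium, bidder $2$ should be drawn into the contest on item $1$. Bidder $1$'s expected value at equilibrium should exceed $1$, so that bidder $1$ contributes only $1$. The contest should dissipate bidder $2$'s surplus, so that bidder $2$'s expected value, and hence liquid contribution, tends to $0$. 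Item $2$ is then allocated to nobody of value, and $\LW\to 1$. The scale of $V$ and bidder $2$'s values will be tuned: for example, bidder $2$'s value on item $1$ is $c$, and we optimize $c$. If the numbers do not work out exactly, we will add a further budgeted bidder or use many copies so that bidder $2$'s winning probability on item $1$ vanishes.

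\textbf{Equilibrium verification.} The bidding problem is a generalized game. We handle the budget by Lagrangian relaxation. A mixed action $A_1$ with $\E[P_1]\le \budget_1$ is a best response within $\cC_1(\vA_{-1})$ if, for some $\nu\ge 0$ with complementary slackness, $A_1$ maximizes $\E[v_1(X_1)]-(1+\nu)\E[P_1]$ over all of $\Delta(\cA_1)$. This holds by weak duality for the linear program over mixed actions. Effectively, bidder $1$ behaves like a budget-free bidder with scaled value $V/(1+\nu)$. We then construct the standard all-pay mixed equilibrium on item $1$ between effective values $V/(1+\nu)$ and bidder $2$'s value, using piecewise-uniform bid CDFs obtained from the indifference conditions. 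We choose $\nu$ so that bidder $1$'s expected all-pay payment equals exactly $1$. Bidder $2$ bids $0$ on item $2$. We must also rule out bidder $2$ profitably switching to item $2$, or bidding on both items; unit-demand makes bidding on both wasteful. This uses the fact that bidder $1$ can ask a positive bid on item $2$ at zero value, or that a third bidder's bids on item $2$ make it unattractive. We will set the item-$2$ competition so that bidder $2$'s utility from any bid on item $2$ is at most its equilibrium utility, which is $0$ in the contest.

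\textbf{Main obstacle.} The hardest step is making bidder $2$ simultaneously indifferent to, and uninterested in, item $2$ while its liquid contribution vanishes. A zero-utility item-$2$ threat must come from some bidder, and that bidder's own value and liquid contribution must stay negligible. In addition, the all-pay contest on item $1$ must still leave bidder $1$'s expected value above its budget. We will first try to achieve this with a sequence of instances in which a low-value, low-budget bidder $3$ randomizes on item $2$ with a heavy-tailed bid distribution whose expected payment equals its tiny budget. We will take limits in the parameters so that the ratio $\OPT/\LW$ converges to $3$.
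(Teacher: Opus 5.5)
\textbf{Gap: bidder~2's uncontested outside option.} Your construction cannot be a mixed Nash equilibrium, and the obstacle you flag at the end is fatal, not technical. The remedy you propose provably fails.

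\emph{Deterring bidder~2 from item~2 costs about $1$.} Bidder~2's equilibrium utility is at most its expected value, which you want to be some $\epsilon\to 0$. Let $H$ be the highest competing bid on item~2. Bidder~2 has no budget, so the deviation ``bid $0$ on item~1 and $b$ on item~2'' is always feasible and earns at least $2\Pr[H<b]-b$. Equilibrium therefore forces $\Pr[H<b]\le (b+\epsilon)/2$ for all $b\ge 0$. Hence $\E[H]\ge\int_0^{2-\epsilon}\bigl(1-\tfrac{b+\epsilon}{2}\bigr)\,\mathrm{d}b=(2-\epsilon)^2/4\approx 1$. Every bid is charged in an all-pay auction, so bidder~2's opponents must pay about $1$ in expectation on item~2 alone.

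\emph{Nobody in your instance can pay it.} A weak, low-budget bidder~3 cannot, and heavy tails do not help. If its bid has mean $\eta$, Markov's inequality lets bidder~2 win item~2 with the bid $\sqrt{\eta}$ with probability at least $1-\sqrt{\eta}$. Any other bidder $\ell\neq 1$ who does pay that much adds at least as much to $\LW$ by Lemma~\ref{lem:IR}, which is incompatible with your target $\LW\to 1$. Bidder~1's value does not depend on item~2, so it strictly gains, while staying budget-feasible, by replacing any positive bid on item~2 with $0$.

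\emph{The budget cannot cover both items.} The same computation on item~1 shows that dissipating bidder~2's value $c$ there requires an expected competing bid of about $c/2$. Deterring bidder~2 on both items thus needs more than bidder~1's unit budget. Tuning $c$, or adding copies to shrink bidder~2's winning probability on item~1, does not touch this.

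\textbf{Missing idea.} The already budget-capped bidder~1 must be the one who fights bidder~2, on bidder~2's \emph{own} optimal item. Then its wasted payments cost no liquid welfare. You placed the contest on the item the optimum assigns to bidder~1; it has to be on the item the optimum assigns to bidder~2, and bidder~2 must have no outside option.

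The paper's instance has two additive bidders with $v_1=(1,V)$, $v_2=(0,2)$, $\budget_1=1$, $\budget_2=\infty$, and ties broken for bidder~1. Bidder~1 wins item~1 at price $0$ by tie-breaking, which is exactly its optimal contribution $\budget_1=1$. It spends its entire budget on item~2 by bidding uniformly on $[0,2]$. Bidder~2 bids uniformly on $[0,2]$ with probability $2/V$ and $0$ otherwise.

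Every pure bid in $[0,2]$ on item~2 gives bidder~1 utility $V-1$ and bidder~2 utility $0$. Higher bids, or positive bids on item~1, only add payment. So no pure deviation is profitable, hence no mixed one either, and no Lagrangian argument is needed. Bidder~1's expected value is $V\ge\budget_1$ and bidder~2's is $2/V$, so $\LW=1+2/V$ against $\OPT=3$.
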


\section{Multi-Unit Auctions}\label{sec:multi-unit-auctions}

Our second family of applications concerns multi-unit auctions. In a multi-unit auction, there is a set of bidders $N$ and $k \geq 1$ identical units of a good. For each $i \in N$, the action set is $\cA_i = \mathbb{R}_{\ge 0} \times \set{0, 1, \dots, k}$: each bidder submits a tuple $a_i = (b_i, q_i)$, where $b_i \geq 0$ is a per-unit bid and $q_i \in \set{0, 1, \dots, k}$ is her demand for units.\footnote{This bidding interface, in which each bidder communicates a single per-unit bid together with a desired quantity, is known as \emph{uniform bidding} and is widely used in practice (see e.g., \cite{auctiontheory}).} Each bidder is allocated a number of the $k$ units, i.e., $\cX_i = \set{0, 1, \dots, k}$, and the set of outcomes is $\cX = \sset{\vx \in \prod_{i=1}^n \cX_i}{\sum_{i \in N} x_i \le k}$. Furthermore, each bidder $i$ has a concave valuation function $v_i: \set{0, 1, \dots, k} \to \mathbb{R}_{\ge 0}$ with $v_i(0) = 0$, where $v_i(j)$ denotes bidder $i$'s value for receiving $j$ units, and a budget $\budget_i \in (0, \infty]$.

The two Standard Multi-Unit Auctions \cite{auctiontheory} are the \emph{Discriminatory Auction (DPA)} and the \emph{Uniform Price Auction (UPA)}. In both formats, the auctioneer allocates the $k$ units to the $k$ highest \emph{marginal bids}, i.e., the multiset containing $q_i$ copies of $b_i$ for each bidder $i$ (ties broken consistently; if fewer than $k$ marginal bids are submitted, the missing entries are treated as bids of $0$). Throughout this section, we write $\beta_1(\va) \le \beta_2(\va) \le \dots \le \beta_k(\va)$ for the $k$ winning marginal bids in \emph{non-decreasing} order, and $\beta_0(\va)$ for the highest losing marginal bid i.e., the $k+1$-th highest declared bid, with $\beta_0(\va) := 0$ whenever $\sum_{i\in N} q_i \le k$. Then, for every profile $\va$, each bidder $i \in N$ pays $P_i(\va) = b_i \, X_i(\va)$ under the DPA and $P_i(\va) = \beta_0(\va) \, X_i(\va)$ under the UPA. For multi-unit auctions, we use the threshold functions  $\cP_i(x_i,\va_{-i}) := \sum_{j=1}^{x_i}\beta_j(\va_{-i})$, i.e., the sum of the $x_i$ lowest winning marginal bids of the opponents. Intuitively, these are precisely the bids that bidder $i$ must displace in order to win $x_i$ units.

\subsection{Discriminatory Price}
We first verify that the coupling constraint in \eqref{eq:coupling-constraint} indeed holds. Fix an outcome profile $\vx\in \cX$ and a bid profile $\va=(b_i,q_i)_{i\in [n]}\in \cA$. For the DPA, we can clearly see that $W_i(\va)=P_i(\va)$. Thus, we find that
$$\sum_{i\in N}\cP_i(x_i,\va_{-i})=\sum_{i\in N}\sum_{j=1}^{x_i}\beta_j(\va_{-i})\leq \sum_{i\in N}\sum_{j=1}^{x_i}\beta_j(\va)\leq \sum_{j=1}^k\beta_j(\va)=\sum_{i\in N}X_i(\va)\,b_i=\sum_{i\in N}P_i(\va).$$
The first inequality holds since removing bidder $i$'s marginal bids weakly decreases every winning bid in the sorted order. For the second inequality, note that index $j$ is counted once for every bidder with $x_i \ge j$, i.e., $c_j := |\set{i \in N \mid x_i \ge j}|$ times, where $c_j$ is non-increasing in $j$ and $\sum_{j=1}^{k} c_j = \sum_{i \in N} x_i \le k$; since the $\beta_j(\va)$ are non-decreasing in $j$, it follows that $\sum_{j=1}^k c_j\, \beta_j(\va) \le \sum_{j=1}^k \beta_j(\va)$.
\begin{lemma}\label{lem:DPA-smoothnes}
    Bidders in a DPA are conservatively $(\mu G,\mu)$-smooth for $\mu\geq 0$ and $G=1-e^{-\frac1\mu}$.
\end{lemma}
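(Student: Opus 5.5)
The plan is to mimic the proof of Lemma~\ref{lem:FPAs-smoothness}, using a randomized per-unit bid scaled against the \emph{average} value per unit of the target bundle. Fix a bidder $i$, an outcome $x_i\in\{1,\dots,k\}$ (by Remark~\ref{rem:wlog-positive-value} we may assume $v_i(x_i)>0$), and let $\bar v := v_i(x_i)/x_i$. Sample $t$ from the density $f(t)=\frac{\mu}{1-t}$ on $[0,G]$, $G=1-e^{-1/\mu}$; this is a probability density because $\int_0^G \frac{\mu}{1-t}\,dt = -\mu\ln(1-G)=1$. Deviate to $a_i'=(t\bar v, x_i)$. The bidder is allocated at most $x_i$ units and pays at most $t\bar v x_i \le G v_i(x_i)\le v_i(x_i)$ under every realization, so the deviation is conservative.

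For the utility, fix $\va_{-i}$ and call $y(t)\le x_i$ the number of units won with per-unit bid $t\bar v$. The key observation is that bidder $i$ wins at least $j$ units whenever $t\bar v > \beta_j(\va_{-i})$, where $\beta_j(\va_{-i})$ is the $j$-th lowest of the opponents' $k$ highest marginal bids. Indeed, the $j$ marginal bids $\beta_1(\va_{-i}),\dots,\beta_j(\va_{-i})$ are displaced, and ties are harmless since they are a measure-zero event in $t$. Hence $y(t)\ge \sum_{j=1}^{x_i}\mathbbm{1}\{t\bar v>\beta_j(\va_{-i})\}$.

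Concavity of $v_i$ with $v_i(0)=0$ gives $v_i(y)\ge \frac{y}{x_i}v_i(x_i)=y\bar v$ for $y\le x_i$. Therefore $u_i(a_i',\va_{-i}) = v_i(y(t)) - t\bar v\,y(t)\ge (1-t)\bar v\,y(t)$, and so
\[
\myexp{a_i'\sim A_i'}{}{u_i(a_i',\va_{-i})}\ \ge\ \mu\sum_{j=1}^{x_i}\bar v\int_0^G \mathbbm{1}\{t\bar v>\beta_j(\va_{-i})\}\,dt\ \ge\ \mu\sum_{j=1}^{x_i}\bigl(G\bar v-\beta_j(\va_{-i})\bigr).
\]
This equals $\mu G v_i(x_i)-\mu\,\cP_i(x_i,\va_{-i})$. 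The last inequality holds termwise: when $\beta_j(\va_{-i})\ge G\bar v$ the integral is $0$ and the right-hand term is nonpositive. Since $x_i$ was arbitrary, bidder $i$ is conservatively $(\mu G,\mu)$-smooth.

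The main obstacle I expect is the counting step, namely justifying that bidding $t\bar v$ on $x_i$ units wins at least as many units as the number of $\beta_j(\va_{-i})$, $j\le x_i$, that it exceeds. This must be done carefully with the sorted-order convention, including the padding with zero bids when fewer than $k$ opponent marginal bids exist. Using the uniform average $\bar v$, rather than per-unit marginal values, is what makes concavity suffice; the first attempt should use that, as above.
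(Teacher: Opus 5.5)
Your proposal is correct and is essentially the paper's proof: the same deviation $(t\,v_i(x_i)/x_i,\,x_i)$ with $t$ drawn from the density $\frac{\mu}{1-t}$ on $[0,G]$, and the same per-realization conservativeness check $t\,v_i(x_i)\le v_i(x_i)$. The only difference is that the paper does not write out the utility bound but imports it from Lemma~3 of \citet{dekeijzer13}, noting that the threshold functions coincide and that their parameter $\tau_i$ equals $x_i$ by concavity. Your displacement count $y(t)\ge\sum_{j\le x_i}\mathbbm{1}\{t\bar v>\beta_j(\va_{-i})\}$, combined with $v_i(y)\ge y\,\bar v$ for $y\le x_i$, makes that step self-contained.
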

\begin{proof}
    Fix an outcome $q_i\in \{0,...,k\}$. If $q_i=0$, the result holds trivially if the agent deviates by $(0,0)$. So, suppose that $q_i>0$ and
    let $t$ be sampled from a distribution with pdf $f(t)=\frac{\mu}{1-t}$ where $t\in[0,G]$. Consider the deviation $(t\cdot \frac{v_i(q_i)}{q_i},q_i)$. Notice that the payment of this deviation is at most $t\cdot v_i(q_i)\leq v_i(q_i)$ in every realization, so it is conservative. Lastly, since our threshold functions coincide with those of \citet{dekeijzer13} and, by concavity of $v_i$, their parameter $\tau_i$ equals $q_i$, lower bounding the utility of this deviation is done in Lemma~3 of \citet{dekeijzer13}, and the result follows.
\end{proof}
 We are now ready to prove a tight PoA bound for DPAs with budgets.

\begin{theorem}\label{thm:poa-DPA}
    The PoA of CCEs in discriminatory price multi-unit auctions with budgets is at most $2.188$.
\end{theorem}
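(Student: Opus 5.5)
The plan is to mirror the proof of Theorem~\ref{thm:FPA-upperbound-poa} exactly. The argument is a chain of three ingredients that are already in place.

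\begin{enumerate}
\item The DPA is pay-your-bid: $W_i(\va)=P_i(\va)=b_i X_i(\va)$. So the PYB branch of Theorem~\ref{thm:ultimate-POA} applies and no NOB refinement is needed; by Proposition~\ref{prop:NOB-no-restr-PYB} it would be vacuous anyway.
\item The threshold functions $\cP_i(x_i,\va_{-i})=\sum_{j=1}^{x_i}\beta_j(\va_{-i})$ satisfy the coupling constraint \eqref{eq:coupling-constraint}. This was verified pointwise for every bid profile $\va$ and every outcome profile $\vx\in\cX$ just before Lemma~\ref{lem:DPA-smoothnes}. Taking expectations over $\va\sim\vA$, with $x_i$ drawn from $\chi_i^\ast$, gives \eqref{eq:coupling-constraint} for every CCE $\vA$ and every optimal profile $\vchi^\ast$. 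Because the inequality holds pointwise for each realized outcome profile, it also holds for the mixture over $\vx\sim\vchi^\ast$, and only the marginals $\chi_i^\ast$ enter.
\item Lemma~\ref{lem:DPA-smoothnes} gives conservative $(\mu G,\mu)$-smoothness for every $\mu\ge 0$, with $G=1-e^{-1/\mu}$. This implies conservative semi-smoothness, and Theorem~\ref{thm:smoothness-reduction} then upgrades it to budget-feasible $(\mu G,\mu)$-semi-smoothness for every bidder.
\end{enumerate}

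With these in hand, the PYB bound of Theorem~\ref{thm:ultimate-POA}, applied to the class that includes budgeted instances, gives
\[
\CCEPOA\le \frac{\max\{1,\mu G+\mu\}}{\mu G}\qquad\text{for every }\mu>0.
\]
It remains to optimize over $\mu$. For $\mu G+\mu\ge 1$ the bound is $(G+1)/G=1+1/G$, which is increasing in $\mu$, since $G$ decreases as $\mu$ grows. For $\mu G+\mu<1$ the bound is $1/(\mu G)$, which decreases in $\mu$, since $\mu G=\mu(1-e^{-1/\mu})$ increases in $\mu$. The optimum is therefore at the crossover $\mu(2-e^{-1/\mu})=1$.

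To solve the crossover equation, substitute $w=-1/\mu$ to get $e^{w}=2+w$. Setting $y=w+2$ turns this into $ye^{-y}=e^{-2}$, i.e.\ $(-y)e^{-y}=-e^{-2}$, so $y=-\mathcal W_0(-e^{-2})$. Hence $\mu=1/(2+\mathcal W_0(-e^{-2}))\approx 0.543$, the same value found in the proof of Theorem~\ref{thm:FPA-upperbound-poa}. Plugging it in yields the stated bound $(2+\mathcal W_0(-e^{-2}))/(1+\mathcal W_0(-e^{-2}))\approx 2.188$.

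The only step needing real care is the lift in the second ingredient, from the pointwise coupling inequality to its expected form. In particular, the expectation defining $\cP_i$ is over the product of $\chi_i^\ast$ and $\vA_{-i}$, whereas the right-hand side is over $\vA$. Since $\vA_{-i}$ is the marginal of $\vA$, linearity of expectation closes the gap. Everything else is a direct citation of earlier results.
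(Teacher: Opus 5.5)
Your proposal is correct and is essentially the paper's proof, which combines Theorems~\ref{thm:ultimate-POA} and~\ref{thm:smoothness-reduction} with Lemma~\ref{lem:DPA-smoothnes} and reuses the optimization over $\mu$ from Theorem~\ref{thm:FPA-upperbound-poa}. You spell out steps the paper leaves implicit: the pay-your-bid property, lifting the pointwise coupling inequality to expectations over $\vchi^\ast$ and $\vA$, and solving the crossover equation; in the last step it is worth noting that the branch $\mathcal{W}_0$ rather than $\mathcal{W}_{-1}$ is forced by $\mu>0$, i.e., $y=2-1/\mu<2$.
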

\begin{proof}
    The upper bound follows immediately from Theorems~\ref{thm:ultimate-POA} and~\ref{thm:smoothness-reduction}, and Lemma~\ref{lem:DPA-smoothnes} in combination with the analysis from Theorem~\ref{thm:FPA-upperbound-poa}.
\end{proof}

For this setting, we provide a matching lower bound matching the upper bound of Theorem~\ref{thm:poa-DPA}. The proof of the following statement can be found in Appendix~\ref{sec:DPA-LB}.

\begin{restatable}{theorem}{thmDPALB}\label{thm:DPA-LB}
    The PoA of CCEs in discriminatory price multi-unit auctions with budgets is at least $2.188$.
\end{restatable}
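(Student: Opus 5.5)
We need a lower-bound instance for CCE in a DPA with budgets with PoA at least $(2+\mathcal W_0(-e^{-2}))/(1+\mathcal W_0(-e^{-2}))\approx 2.188$. The natural route is to embed the known tight CCE lower bound for a single-item first-price auction with budgets (as in \citet{baldeschi2026}) into a DPA, using $k=1$ unit. With a single unit and demands $q_i\in\{0,1\}$, the DPA is exactly a single-item first-price auction: the unit goes to the highest per-unit bid $b_i$, and the winner pays $b_i$. So any first-price lower-bound construction transfers verbatim. To be self-contained, I would reproduce the construction explicitly, with the following structure.

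The instance has a large population of identical ``low'' bidders and one ``high'' bidder $h$.
\begin{itemize}
\item The low bidders have value $1$ for the unit but budget $\budget$ tiny, so each contributes only $\budget$ to liquid welfare. There are many of them, and together they enforce, via a correlated distribution, a random highest competing bid $\beta$.
\item The high bidder has value $v_h(1)=1$ and a budget $\budget_h$ that is smaller than $1$.
\end{itemize}
The optimum gives the unit to $h$ with probability $p=\budget_h$, and splits the remaining probability among low bidders, each up to their budget. With enough low bidders the optimum approaches $\budget_h+(1-\budget_h)=1$.

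The CCE is built so that, in most realizations, low bidders win at bids drawn from a distribution of the shape $F(t)$ with density of form $\mu/(1-t)$ on $[0,G]$. This mirrors the smoothness deviation of Lemma~\ref{lem:DPA-smoothnes}, chosen to make $h$ indifferent among all bids $t\in[0,G]$, so that every budget-feasible deviation of $h$ yields utility at most $h$'s equilibrium utility. Bidder $h$ wins with small probability, and its liquid contribution is capped by what it obtains.

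The steps, in order, are as follows.
\begin{enumerate}
\item Specify the CCE as a distribution over profiles: pick $t\sim F$, let one low bidder bid $t$ and win, and let $h$ bid just below $t$, or opt out.
\item Verify the CCE constraints for $h$ against all budget-feasible mixed deviations. This reduces, by linearity and a mixing argument with the opt-out action (exactly the mechanism behind Theorem~\ref{thm:smoothness-reduction}), to checking pure bids $b$ with scaling. Here $h$'s deviation utility is $(1-b)F(b)$ and its payment is $bF(b)$. The budget constraint then caps the deviation value at $\budget_h$-liquid terms, and the indifference condition $(1-b)F(b)=\text{const}$ gives $F(b)=c/(1-b)$.
\item Verify the CCE constraints for the low bidders. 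Their budgets are tiny, and the payments at equilibrium are spread so that each low bidder's expected payment stays within budget.
\item Compute the liquid welfare of the CCE, roughly $\sum$ low budgets $+\min\{\E v_h,\budget_h\}$, compare it with OPT, and optimize over $\budget_h$ and the support endpoint $G$.
\end{enumerate}
The optimization yields the Lambert equation $\mu G+\mu=1$ with $G=1-e^{-1/\mu}$, matching Theorem~\ref{thm:FPA-upperbound-poa}.

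The main obstacle is step 2. Because of budgets, $h$'s best deviation is a mixture of a high bid with opt-out, which is why its gain is measured in liquid terms. Making the equilibrium robust to these mixtures, while keeping each low bidder budget-feasible, requires tuning how the probability mass on ``low bidder wins at price $t$'' is distributed across many low bidders. I would first try the limit of infinitely many low bidders with vanishing budgets, turning their constraints into a continuum, and then handle the finite approximation by letting $n\to\infty$.
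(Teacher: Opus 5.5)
\textbf{There is a genuine gap: your instance cannot reach $2.188$, and the profile you describe is not a CCE.} Reducing to $k=1$ is legitimate: with one unit, the DPA is exactly a single-item first-price auction. The trouble is the instance.
\begin{itemize}
\item Every bidder in your instance has value $1$ for the unit, so $\OPT\le 1$.
\item The distribution that makes a value-$1$ bidder indifferent is $F(b)=c/(1-b)$ on $[0,1-c]$, with an atom of mass $c$ at $0$. It is not the density $\mu/(1-t)$ of the smoothness deviation. Given this atom, $h$ can bid a tiny $\epsilon>0$, which is affordable under any positive budget, and earn about $c$. So at a CCE $h$ must win with probability at least $c$. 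This contradicts ``$h$ bids just below $t$ or opts out''.
\item The winning bid has CDF $F$, so total payments are $\int_0^{1-c}(1-F)=1-c+c\ln c$. By Lemma~\ref{lem:IR}, each bidder's liquid welfare is at least its payment. Hence the CCE welfare is at least $1-c+c\ln c+\min\{c,\budget_h\}$.
\item If $\budget_h\ge c$, this is at least $1-1/e$, so the ratio is at most $e/(e-1)\approx1.58$. As $c\to0$, the welfare tends to $1\ge\OPT$.
\item Every low bidder can also bid $\epsilon$ and win on the atom at $0$, so each must win with probability about $c$. That rules out a large population of vanishing-budget bidders, and the continuum limit you plan breaks.
\item The mixing-with-opt-out argument of Theorem~\ref{thm:smoothness-reduction} and the equation $\mu G+\mu=1$ from Theorem~\ref{thm:FPA-upperbound-poa} are upper-bound tools. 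They do not verify that a given profile is a CCE.
\end{itemize}

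\textbf{The missing idea.} Every tight instance in the paper needs a budget-capped bidder whose value is far above its budget. Then the optimum can give it its \emph{entire} budget using a negligible share of supply, while at the CCE it wins almost everything. The bidder made indifferent by $F(z)=a/(1-z)$ on $[0,1-a]$ is the \emph{unconstrained} value-$1$ bidder, and it wins only the atom at $0$.

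\textbf{The paper's instance.} There are $k+1$ units and two bidders:
\begin{itemize}
\item bidder~$1$ has $v_1(q)=k+q-1$ and $\budget_1=(k+1)(1-a+a\ln a)$;
\item bidder~$2$ has $v_2(q)=\min\{q,k\}$ and $\budget_2=\infty$;
\item a common level $b\sim F$ is drawn, and the bidders demand $k+1$ and $k$ units at that level.
\end{itemize}
Bidder~$2$ earns at most $aq$ from any deviation, and exactly $aq$ for every bid in $(0,1-a]$. Bidder~$1$'s deviations earn at most $k-a+a(k+1)$, which is below its equilibrium utility for $k\ge5$. We have $\OPT=\budget_1+k$ and $\LW=\budget_1+ak$. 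The ratio tends to $(2-a+a\ln a)/(1+a\ln a)$. This is maximized at $\ln a=a-2$, where it equals $(2-a)/(1-a)\approx2.188$.

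\textbf{Repairing your $k=1$ route.} It works once the roles are fixed, because $\OPT$ ranges over randomized allocations.
\begin{itemize}
\item Take bidder~$1$ with value $V\to\infty$ and budget $B=1-a+a\ln a$, and bidder~$2$ with value $1$ and no budget.
\item Both bid a common $t\sim F$. Ties at positive levels go to bidder~$1$; ties at $0$ go to bidder~$2$.
\item Bidder~$2$ earns exactly $a$ from any bid in $[0,1-a]$.
\item For $b>0$, bidder~$1$'s winning probability satisfies $F(b)\le a+bF(b)$. So any budget-feasible deviation earns at most $aV+(V-1)B$. This is at most its equilibrium utility $(1-a)V-B$ as soon as $a\le1/e$.
\item Meanwhile $\OPT\ge B+1-B/V$ and $\LW=B+a$.
\end{itemize}
The paper uses $k+1$ units instead, which buys a deterministic optimum.
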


\subsection{Uniform Price}
We first verify that the coupling constraint in \eqref{eq:coupling-constraint} holds for the UPA. Let $\vx\in \cX$ and $\va=(b_i,q_i)_{i\in N}\in \cA$ be arbitrary. Similarly to the argument for the discriminatory auction, we argue that
$$\sum_{i\in N}\cP_i(x_i,\va_{-i})=\sum_{i\in N}\sum_{j=1}^{x_i}\beta_j(\va_{-i})\leq \sum_{i\in N}\sum_{j=1}^{x_i}\beta_j(\va)\leq \sum_{j=1}^k\beta_j(\va)=\sum_{i\in N}X_i(\va)b_i=\sum_{i\in N}W_i(\va).$$
With this in place, we turn our attention to smoothness. The following lemma is immediate by first-price domination.
\begin{lemma}\label{lem:UPA-smoothness}
    Bidders in a UPA are conservatively $(\mu G,\mu)$-smooth for $\mu\geq 0$ and $G=1-e^{-\frac1\mu}$.
\end{lemma}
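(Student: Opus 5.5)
We reuse, for the UPA, exactly the deviation from Lemma~\ref{lem:DPA-smoothnes}. Fix bidder $i$ and an outcome $x_i=q_i$, and assume $q_i>0$ (for $q_i=0$ the opt-out $(0,0)$ works trivially). Sample $t$ from the density $f(t)=\frac{\mu}{1-t}$ on $[0,G]$ and submit the action $a_i'=(t\cdot\frac{v_i(q_i)}{q_i},q_i)$. Two properties must then be checked: conservativeness and the smoothness inequality.

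\textbf{Conservativeness.} Under the UPA, bidder $i$ pays $\beta_0(a_i',\va_{-i})\,X_i(a_i',\va_{-i})$. Every unit $i$ wins needs its marginal bid $t v_i(q_i)/q_i$ among the $k$ highest, so the highest losing bid satisfies $\beta_0\le t v_i(q_i)/q_i$. Also $X_i\le q_i$. Hence the payment is at most $t\,v_i(q_i)\le v_i(q_i)$ in every realization and for every $\va_{-i}$, which is the conservativeness requirement of Definition~\ref{def:smoothness}.

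\textbf{Smoothness via first-price domination.} Fix any $\va_{-i}$ and a realization of $t$. The UPA and the DPA share the allocation rule, so $i$ wins the same number of units $X_i(a_i',\va_{-i})$ in both. Under the DPA, $i$ pays $b_i'X_i$; under the UPA, $i$ pays $\beta_0 X_i\le b_i' X_i$. Therefore
\[
u_i^{\mathrm{UPA}}(a_i',\va_{-i})\ge u_i^{\mathrm{DPA}}(a_i',\va_{-i}).
\]
The threshold functions $\cP_i(x_i,\va_{-i})=\sum_{j=1}^{x_i}\beta_j(\va_{-i})$ are identical in the two formats, because they depend only on the opponents' submitted marginal bids. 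Taking expectations over $t$ and invoking the DPA inequality from Lemma~\ref{lem:DPA-smoothnes} (that is, Lemma~3 of \citet{dekeijzer13}) gives
\[
\myexp{a_i'\sim A_i'}{}{\z_i(a_i',\va_{-i})}\ge \mu G\, v_i(q_i)-\mu\,\cP_i(q_i,\va_{-i})
\]
for every pure profile. This holds for every outcome $q_i$, so the bidder is conservatively $(\mu G,\mu)$-smooth.

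\textbf{Main point of care.} The one place needing care is that the DPA lemma is stated for a DPA instance, whereas here we apply it pointwise. The inequality there is pointwise in $\va_{-i}$ and uses only the allocation rule and the per-unit bid. So I would note explicitly that the DPA utility bound, as a function of the opponents' bids, does not depend on the payment rule beyond charging $b_i'$ per unit won. This makes the domination argument valid profile by profile.
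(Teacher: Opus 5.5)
Your proposal is correct and is exactly the paper's argument: the paper dispatches this lemma in one line as ``immediate by first-price domination.'' That means reusing the DPA deviation $(t\,v_i(q_i)/q_i,\,q_i)$ and noting that the UPA has the same allocation rule and threshold functions but charges $\beta_0 \le b_i'$ per unit won, so utilities only increase pointwise in $\va_{-i}$. Your explicit check that every realized payment is at most $t\,v_i(q_i)\le v_i(q_i)$ spells out the conservativeness part that the paper leaves implicit.
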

We are now ready to give a tight bound on the price of anarchy for the uniform price auction under the aggregate NOB.
\begin{theorem}\label{thm:poa-UPA}
    Under the aggregate NOB, the PoA of CCEs in the UPA with budgets is at most $-\mathcal W_{-1}(-e^{-2})\approx 3.146$.
\end{theorem}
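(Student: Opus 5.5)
We would follow the same template as the other upper bounds: combine Lemma~\ref{lem:UPA-smoothness} with Theorems~\ref{thm:smoothness-reduction} and~\ref{thm:ultimate-POA}, using the NOB branch since the UPA is not pay-your-bid.

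The first step is to collect the hypotheses of Theorem~\ref{thm:ultimate-POA}. The coupling constraint \eqref{eq:coupling-constraint} was verified just before the statement, with the thresholds $\cP_i(x_i,\va_{-i})=\sum_{j=1}^{x_i}\beta_j(\va_{-i})$ and $W_i(\va)=X_i(\va)\,b_i$. For Lemma~\ref{lem:UPA-smoothness}, we would briefly justify the ``first-price domination'' claim. For each realization of the DPA deviation $(t\,v_i(q_i)/q_i,q_i)$ and each $\va_{-i}$, the allocation is the same under both formats. The UPA payment $\beta_0\,X_i$ is at most the DPA payment $b_i X_i$, because the highest losing marginal bid is at most bidder $i$'s own winning bid whenever $X_i>0$. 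Hence the utility inequality of Lemma~\ref{lem:DPA-smoothnes} transfers, and so does conservativeness, since the payment is still at most $t\,v_i(q_i)\le v_i(q_i)$. Conservative smoothness implies conservative semi-smoothness, so Theorem~\ref{thm:smoothness-reduction} makes every bidder budget-feasible $(\mu G,\mu)$-semi-smooth.

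Applying the NOB case of Theorem~\ref{thm:ultimate-POA} with $\lambda=\mu G\le 1$ then gives
\[
\CCEPOA \le \frac{\mu+\max\{1,\mu G\}}{\mu G}=\frac{\mu+1}{\mu\,(1-e^{-1/\mu})}.
\]
The max equals $1$ because $G<1$ and we will choose $\mu\le 1$. It remains to minimize this over $\mu>0$. Substituting $s=1/\mu$, the objective becomes $h(s)=(1+s)/(1-e^{-s})$. Setting $h'(s)=0$ gives $1-e^{-s}=(1+s)e^{-s}$, that is, $e^{s}=2+s$. With $w=-(2+s)$ this reads $w e^{w}=-e^{-2}$. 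The nontrivial solution has $w\le -1$, so $w=\mathcal W_{-1}(-e^{-2})$ and $s=-\mathcal W_{-1}(-e^{-2})-2\approx 1.146$.

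At the optimum, $1-e^{-s}=1-1/(2+s)=(1+s)/(2+s)$, so $h(s)=2+s=-\mathcal W_{-1}(-e^{-2})\approx 3.146$, which is the claimed bound. We would also check that $\mu=1/s\approx 0.873\le 1$, so the earlier simplification of the max is consistent.

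The only delicate points are the domination argument, where tie-breaking and the case $X_i=0$ need care, and picking the correct Lambert branch. The latter is fixed by $s>0$ requiring $w<-2$, which forces the $\mathcal W_{-1}$ branch.
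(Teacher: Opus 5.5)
Your proposal is correct and matches the paper's proof: you combine Lemma~\ref{lem:UPA-smoothness} with Theorems~\ref{thm:smoothness-reduction} and~\ref{thm:ultimate-POA} (NOB branch) to get $\frac{\mu+1}{\mu(1-e^{-1/\mu})}$ and then minimize over $\mu$. You also correctly spell out two steps the paper leaves implicit, namely the first-price domination behind Lemma~\ref{lem:UPA-smoothness} and the Lambert-$W$ minimization (with $s=1/\mu$, the critical point is $e^{s}=2+s$, which lies on the $\mathcal W_{-1}$ branch).
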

\begin{proof}
By Theorems~\ref{thm:ultimate-POA} and~\ref{thm:smoothness-reduction}, and Lemma~\ref{lem:UPA-smoothness} we see that
$$\CCEPOA\leq \frac{\mu+1}{\mu(1-e^{-\frac1\mu})}.$$
Minimizing in terms of $\mu$ then gives
$$\CCEPOA\leq |\mathcal W_{-1}(-e^{-2})|\approx 3.146.$$
\end{proof}

We also prove a matching lower bound, the proof of which can be found in Appendix~\ref{sec:UPA-LB}.
\begin{restatable}{theorem}{thmUPALB}\label{thm:UPA-LB}
    Under the aggregate NOB, the PoA of PNEs in the budget-free UPA is at least $-\mathcal W_{-1}(-e^{-2})\approx 3.146$.
\end{restatable}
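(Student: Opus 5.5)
The plan is an explicit budget-free construction: one large bidder with linear value, and many worthless opponents whose bids form a ``staircase'' that makes every attempt by the large bidder to win more units unprofitable. I then tune the staircase so that the aggregate NOB \eqref{eq:NOB} binds, and the resulting ratio turns out to solve $r = 2+\ln r$, whose solution is $-\mathcal W_{-1}(-e^{-2})$. First I work out the continuum version and check it gives the constant; then I discretize with $k\to\infty$ units.

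\textbf{Continuum version.} Normalize the supply to the interval $[0,1]$. Bidder $1$ has linear value $v_1(x)=x$, which is concave. All remaining mass is held by ``dummy'' single-unit bidders with value $\varepsilon\to 0$. In the intended profile, bidder~$1$ bids $(b,q)=(0,1)$, i.e.\ per-unit bid $0$ for all units, and wins a fraction $\theta$. The dummies at positions $s\in[\theta,1]$ bid $p(s)=1-\theta/s$, and the rest of the dummies bid $0$. Ties at $0$ are broken so that bidder~$1$ gets the $\theta$ mass.
\begin{itemize}
\item \emph{Welfare.} The highest losing bid is $0$, so every payment is $0$. Bidder~$1$ has utility $\theta$ and equilibrium welfare $\approx\theta$, while $\OPT=1$.
\item \emph{Bidder~$1$'s deviations.} To win $x>\theta$, bidder~$1$ must displace the dummies with bids up to $p(x)$, so the uniform price becomes at least $p(x)$. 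The deviation then yields at most $x(1-p(x))=\theta$, which is no profitable improvement. Winning fewer units only lowers value at zero price.
\item \emph{Dummies.} They pay $0$, and any deviation brings them only value at most $\varepsilon$, so they are at equilibrium up to $\varepsilon$-slack. I will remove this slack in the discretization.
\item \emph{Aggregate NOB.} Only winners contribute to $\sum_i W_i$. The total is $\sum_i W_i=\theta\cdot 0+\int_\theta^1 (1-\theta/s)\,ds = 1-\theta+\theta\ln\theta$, which must be at most $\LW=\theta$. Choosing $\theta$ with $1-2\theta+\theta\ln\theta=0$ makes this tight.
\end{itemize}

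Setting $r=1/\theta$ turns that equation into $r=2+\ln r$. Its root larger than $1$ is $r=-\mathcal W_{-1}(-e^{-2})\approx 3.146$, and this is the PoA ratio $\OPT/\LW = 1/\theta = r$.

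\textbf{Discretization.} Take $k$ units and $\theta k$ units for bidder~$1$, rounded. Bidder~$1$ gets $v_1(j)=j$ and bids $(0,k)$. Dummy bidder $j$, for $j=\theta k+1,\dots,k$, gets a single-unit valuation with tiny value and bids $(\,1-\theta k/j+\text{small shift},\,1)$. The remaining dummies bid $0$, and ties are broken to favor bidder~$1$.

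I then re-verify the PNE conditions with exact discrete counts:
\begin{itemize}
\item Bidder~$1$ winning $x$ units pays at least $x$ times the $(x-\theta k)$-th displaced bid. The small shift makes every such deviation weakly unprofitable.
\item A dummy deviating to a higher bid still pays the highest losing bid, which stays $0$ or some other dummy's bid, and it gains value at most its own tiny value. Setting dummy values to $0$ removes the need for $\varepsilon$ entirely.
\item Since valuations may be zero, NOB is checked only in aggregate. It holds because the Riemann sum $\sum_j(1-\theta k/j)/k$ converges to $1-\theta+\theta\ln\theta$. To keep NOB exact rather than approximate, I take $\theta$ slightly larger than the root, which gives ratio $r-o(1)$.
\end{itemize}
Letting $k\to\infty$ gives the bound.

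\textbf{Main obstacle.} The delicate step is the uniform-price deviation analysis for bidder~$1$. When bidder~$1$ raises demand, the clearing price is the highest \emph{losing} bid, and that bid may be bidder~$1$'s own marginal bid or a dummy's. So I must confirm that every bidder-$1$ bid $(b',q')$ that wins $x$ units sets a price of at least $p(x)$ up to $O(1/k)$. I also need to handle deviations where bidder~$1$ bids below some dummies but with large demand. For these I will argue directly from the $(k+1)$-th highest marginal bid that the price is at least the bid of the lowest dummy still winning.
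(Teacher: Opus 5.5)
Your construction is correct. It shares the paper's core idea: opponent bids form a staircase $1-\ell/x$ that pins bidder~1's best deviation utility at exactly $\ell$, and the aggregate NOB is tuned to $1-2y+y\ln y=0$. You implement it differently, though.

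\emph{The paper's construction.} Bidder~1 has the capped valuation $\min\{q,m\}$ with $m=k-n+1$ and bids $(0,\ell)$. There are only $n-1=O(\sqrt{k})$ unit-demand opponents, with \emph{positive} values $V_s=(1-\ell/A)\,d\,b_s$, each bidding for a block of $d=\lceil\sqrt{k}\rceil$ units. The positive values absorb the discretization slack $A-\ell\ge 0$, so that \eqref{eq:NOB} holds with equality at $\ell\approx y^{\dagger}k$ itself; this is what the estimates (P4)--(P5) are for. The block bids keep the opponents at $O(\sqrt{k})$, so the optimum, which serves each of them with one unit, still leaves $m=k-o(k)$ units for bidder~1. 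The bound then follows from a single limit $k\to\infty$.

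\emph{Your construction.} You use zero-value single-unit dummies and an uncapped linear bidder~1, so $\OPT=k$ and $\LW=\ell$. You buy the NOB slack by taking $\theta$ slightly above $y^{\dagger}$, at the cost of a double limit. Your route avoids the block bids, the opponent values and the integral estimates. The paper's route reaches the constant along one explicit sequence with NOB binding exactly.

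\emph{A false step in your ``main obstacle'' paragraph.} The uniform price is the highest \emph{losing} bid, so it is at most every winning bid. It therefore cannot be bounded below by ``the bid of the lowest dummy still winning.'' For example, if bidder~1 bids strictly between two consecutive dummy bids with excess demand, the price is bidder~1's own losing bid, which is below the lowest winning dummy's bid.

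\emph{The fix.} A one-line counting argument handles every deviation $(b',q')$ at once:
\begin{itemize}
\item Suppose bidder~1 wins $x>\ell$ units.
\item The dummies then obtain at most $k-x$ of the $k-\ell$ units they demand, so at least $x-\ell$ of them lose.
\item Hence the price is at least the $(x-\ell)$-th lowest dummy bid, namely $1-\ell/x$.
\item The deviation utility is therefore at most $x-x(1-\ell/x)=\ell$, exactly.
\end{itemize}
This has no $O(1/k)$ error, needs no case distinction, and does not rely on tie-breaking in deviations. The paper uses a three-case analysis plus a tie-breaking convention at this step. Your ``small shift'' is harmless but unnecessary, since weakly unprofitable deviations suffice for a PNE.
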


\section{Position Auctions}\label{sec:position-auctions}

In a position auction, there is a set of bidders $N=[n]$ and a set of $m \le n$ slots, where each slot $j \in [m]$ has a click-through rate $\alpha_j$ with $\alpha_1 \ge \alpha_2 \ge \dots \ge \alpha_m > 0$; we adopt the convention $\alpha_0 = 0$. Each $i \in N$ submits a scalar bid $a_i \geq 0$ i.e., the action set is $\cA_i = \mathbb{R}_{\ge 0}$ and is allocated at most one slot, i.e., $\cX_i = \set{0, 1, \dots, m}$. Thus, the set of outcomes for position auctions is $\cX = \sset{\vx \in \prod_{i=1}^n \cX_i}{x_i \neq x_\ell \text{ for all } i \neq \ell \text{ with } x_i > 0}$. Furthermore, each bidder $i$ has a value-per-click $\bar v_i \ge 0$, giving rise to the valuation function $v_i(j) = \bar v_i \, \alpha_j$ for $j \in \cX_i$, and a budget $\budget_i \in (0, \infty]$.

The two standard formats are the \emph{Generalized First-Price Auction (GFP)} and the \emph{Generalized Second-Price Auction (GSP)} \cite{edelman2007internet}. In both formats, the bidders are ranked in non-increasing order of their bids (ties broken consistently) and slot $j \in [m]$ is assigned to the $j$-th ranked bidder. Then, for every profile $\va$, writing $\beta_j(\va)$ for the $j$-th highest bid, each bidder $i \in N$ pays $P_i(\va) = \alpha_{X_i(\va)} \, a_i$ under the GFP and $P_i(\va) = \alpha_{X_i(\va)} \, \beta_{X_i(\va)+1}(\va)$ under the GSP (with $\beta_{n+1}(\va) := 0$).

First we set up some notation notation. We shall use $j_i(\vec a)$ to denote the position allocated to bidder $i$ under $\vec a$. Furthermore, we let $\pi(j,\vec a)\in [n]$ denote the bidder that is allocated slot position $j$ under $\vec a$. For GFP and GSP the valuations are not closed under capping. More precisely, let $i\in [n]$ and consider the valuation per click $\bar v_i$ and a budget $\budget_i$. Then, the liquid value bidder $i$ attains upon being allocated position $j$ is given by $\hat v_i(j)=\min\{\alpha_j \bar v_i,\budget_i\}$, which, unlike $v_i$, does not increase linearly in the click through rate.

Lastly, for both the GFP and the GSP, we use the threshold functions $\cP_i(j,\va_{-i})=\alpha_{j}\beta_j(\va_{-i})$ and $\cP_i(0,\va_{-i})=0$.

\subsection{GFP} First, note that the inequality in \eqref{eq:coupling-constraint} holds because for any outcome $\vec x \in \cX$ and any $\va$, we have
$$\sum_{i\in N}\cP_i(x_i,\va_{-i})=\sum_{i\in N}\alpha_{x_i}\beta_{x_i}(\va_{-i})\leq \sum_{i\in N}\alpha_{x_i}\beta_{x_i}(\va)\leq \sum_{i\in N}\alpha_{X_i(\va)}a_i=\sum_{i\in N}P_i(\va).$$
We are now ready to prove smoothness for the GFP after which we can derive upper bounds. 

\begin{restatable}{lemma}{lemmaGFPsmoothness}\label{lemma:GFP-smoothness}
    Bidders in a GFP auction are conservatively $(\mu G,\mu)$-semi-smooth for any $\mu\geq 0$, where $G=1-e^{-\frac1\mu}$.
\end{restatable}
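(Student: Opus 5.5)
Fix a bidder $i$, a target slot $j\in\{0,1,\dots,m\}$ and a CCE $\vA$. If $j=0$ (or $\bar v_i=0$), opting out suffices by Remark~\ref{rem:wlog-positive-value}, since $\cP_i(0,\cdot)=0$. For $j\ge 1$ we reuse the randomized first-price deviation from Lemma~\ref{lem:FPAs-smoothness}. Sample $t\sim T$ with density $f(t)=\frac{\mu}{1-t}$ on $[0,G]$, and bid $a_i'=t\,\bar v_i$.

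The main issue, compared with the item auction, is conservativeness. A bid $t\bar v_i$ can win a slot $j'<j$ with $\alpha_{j'}>\alpha_j$, and the payment $\alpha_{j'}t\bar v_i$ may exceed $v_i(j)=\alpha_j\bar v_i$. This is why the lemma is stated as semi-smoothness: we only need the \emph{expected} payment against $\vA_{-i}$ to be at most $v_i(j)$. Following the proof of Lemma~\ref{lem:pacing-bid-smoothness}, for each $t$ let
\[
q_i(t)=\min\Bigl\{1,\tfrac{v_i(j)}{\E_{\va_{-i}}[P_i(t\bar v_i,\va_{-i})]}\Bigr\},
\]
and bid $t\bar v_i$ with probability $q_i(t)$ and opt out otherwise. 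Then the expected payment is at most $v_i(j)$ for every $t$, so the mixture over $t$ is conservative.

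For the utility, fix $t$. Since GFP is pay-your-bid with payment $\alpha_{X_i}\,t\bar v_i$, we have
\[
\E[v_i(X_i)]=\frac{\E[P_i]}{t}.
\]
This gives
\[
q_i(t)\,\E[\alpha_{X_i}\bar v_i]=\min\Bigl\{\E[\alpha_{X_i}\bar v_i],\,\tfrac{v_i(j)}{t}\Bigr\}\ \ge\ \min\bigl\{\alpha_j\bar v_i\,\Pr[X_i\le j],\ \alpha_j\bar v_i\bigr\}=\alpha_j\bar v_i\Pr_{\va_{-i}}[t\bar v_i>\beta_j(\va_{-i})],
\]
using $t\le 1$. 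Here we use that a bid strictly above the $j$-th highest opponent bid guarantees a slot of index at most $j$, whose click-through rate is at least $\alpha_j$ by monotonicity of the $\alpha$'s. Hence the expected utility at $t$ is at least $(1-t)\,\alpha_j\bar v_i\Pr[t\bar v_i>\beta_j(\va_{-i})]$.

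Integrating against $f$ cancels the $(1-t)$ factor exactly as in Lemma~\ref{lem:FPAs-smoothness}:
\[
\mu\alpha_j\bar v_i\,\E_{\va_{-i}}\Bigl[\int_0^G\mathbbm 1\{t\bar v_i>\beta_j(\va_{-i})\}\,dt\Bigr]\ \ge\ \mu\alpha_j\bar v_i\Bigl(G-\tfrac{\E[\beta_j(\va_{-i})]}{\bar v_i}\Bigr),
\]
and the right-hand side equals $\mu G\,v_i(j)-\mu\,\E[\cP_i(j,\va_{-i})]$. The step I would check most carefully is the capped-probability inequality. The first-price identity linking value to payment must hold in expectation, and it must cover the degenerate cases $t=0$ or zero expected payment, where $q_i=1$ and the bound holds directly.
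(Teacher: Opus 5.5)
Your proposal is correct and follows the paper's proof essentially step for step. It uses the same capped deviation: your denominator $\E[P_i(t\bar v_i,\va_{-i})]$ is exactly the paper's $t\bar v_i\,\E[\alpha_{j_i(t\bar v_i,\va_{-i})}]$. It then uses the same identity $q_i(t)\E[\alpha_{X_i}]=\min\{\E[\alpha_{X_i}],v_i(j)/(t\bar v_i)\}$ together with $t\le 1$, the same outbidding argument against the $j$-th highest opponent bid, and the same integration against $f$.

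Two small slips should be fixed. The final ``$=$'' in your displayed chain should be ``$\ge$''. The event should read $1\le X_i\le j$ rather than $X_i\le j$, since $\alpha_0=0$.
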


Since it is of similar nature to the proof for Lemma~\ref{lem:pacing-bid-smoothness}, we defer the proof to Appendix~\ref{sec:GFP-smoothness}.

\begin{theorem}\label{thm:poa-UB-GFP}
    The PoA of coarse correlated equilibria of GFP auctions with budgets is at most 2.188.
\end{theorem}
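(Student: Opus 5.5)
The bound follows the same template as Theorems~\ref{thm:FPA-upperbound-poa}, \ref{thm:poa-pacing-UB} and~\ref{thm:poa-DPA}. We combine Lemma~\ref{lemma:GFP-smoothness} with the reduction of Theorem~\ref{thm:smoothness-reduction}, and then invoke Theorem~\ref{thm:ultimate-POA} in its PYB form. We first check the structural hypotheses of Theorem~\ref{thm:ultimate-POA}.

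The GFP auction is pay-your-bid. A bidder ranked in slot $j$ with bid $a_i$ pays $\alpha_j a_i$ regardless of the opponents' bids, so $W_i(\va)=\alpha_{X_i(\va)}a_i=P_i(\va)$. Hence the PYB branch applies, and no NOB refinement is needed (Proposition~\ref{prop:NOB-no-restr-PYB}).

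The coupling constraint \eqref{eq:coupling-constraint} was verified pointwise just before Lemma~\ref{lemma:GFP-smoothness} for the thresholds $\cP_i(j,\va_{-i})=\alpha_j\beta_j(\va_{-i})$. It holds for every profile $\va$ and every outcome $\vx$. Taking expectations over $\va\sim\vA$ and $\vx\sim\vchi^\ast$ gives it for every CCE and every optimal profile. This works because $\vA_{-i}$ is the marginal of $\vA$, and the terms for each $i$ depend only on the marginal $\chi_i^\ast$.

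Lemma~\ref{lemma:GFP-smoothness} says every bidder is conservatively $(\mu G,\mu)$-semi-smooth. Theorem~\ref{thm:smoothness-reduction} then makes every bidder budget-feasible $(\mu G,\mu)$-semi-smooth, with $G=1-e^{-1/\mu}$. One subtlety deserves a check. The liquid valuation $\hat v_i(j)=\min\{\alpha_j\bar v_i,\budget_i\}$ is not linear in $\alpha_j$. This is harmless, because the reduction only uses $\hat v_i(\chi_i^\ast)=q_i\E_{x_i\sim\chi_i^\ast}[v_i(x_i)]$ and the per-outcome conservative deviations. It never needs the valuation class to be closed under capping.

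Theorem~\ref{thm:ultimate-POA} with $\lambda=\mu G$ then gives
\[
\CCEPOA\le \frac{\max\{1,\mu G+\mu\}}{\mu G}.
\]
We optimize over $\mu$ exactly as in Theorem~\ref{thm:FPA-upperbound-poa}. We choose $\mu$ solving $\mu(1+G)=1$, namely $\mu=1/(2+\mathcal W_0(-e^{-2}))$. This yields $\frac{2+\mathcal W_0(-e^{-2})}{1+\mathcal W_0(-e^{-2})}\approx 2.188$.

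The only genuinely nontrivial ingredient is the semi-smoothness Lemma~\ref{lemma:GFP-smoothness}, which we take as given. So the main care point is confirming that the reduction and the coupling inequality transfer verbatim to the position setting. In particular, the threshold for outcome $0$ must be $0$ and have zero value, which is consistent with Remark~\ref{rem:wlog-positive-value}.
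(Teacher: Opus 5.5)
Your proposal is correct and follows the paper's proof exactly: the paper also combines Lemma~\ref{lemma:GFP-smoothness} with Theorem~\ref{thm:smoothness-reduction}, applies the PYB branch of Theorem~\ref{thm:ultimate-POA}, and reuses the choice $\mu = 1/(2+\mathcal{W}_0(-e^{-2}))$ from Theorem~\ref{thm:FPA-upperbound-poa}. Your extra checks are all valid and make explicit what the paper leaves implicit: that $W_i = P_i$ for GFP, that the pointwise coupling inequality lifts to expectations under $\vA \times \vchi^\ast$, and that the reduction does not need the valuation class to be closed under capping.
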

\begin{proof}
    The upper bound follows immediately from Theorems~\ref{thm:ultimate-POA} and~\ref{thm:smoothness-reduction}, and Lemma~\ref{lemma:GFP-smoothness} in combination with the analysis from Theorem~\ref{thm:FPA-upperbound-poa}.
\end{proof}

Matching the upper bound from Theorem~\ref{thm:poa-UB-GFP}, we state the following lower bound  whose proof is deferred to Appendix~\ref{sec:GFP-LB}

\begin{restatable}{theorem}{thmGFPLB}\label{thm:GFP-LB}
    The PoA of coarse correlated equilibria of GFP auctions with budgets is at least 2.188.
\end{restatable}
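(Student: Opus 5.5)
The plan is to reduce to a single-item first-price auction and reuse the lower-bound construction behind Theorem~\ref{thm:DPA-LB} (equivalently, the CCE lower bound of \citet{baldeschi2026}). With $m=1$ slot and $\alpha_1=1$, a GFP auction with value-per-click bidders is exactly a single-item first-price auction with budgets. The allocation goes to the highest bid under the fixed tie-breaking order, the winner pays her bid, and $v_i(1)=\bar v_i$, $v_i(0)=0$. Every action profile, allocation, payment and liquid valuation coincides between the two games, so their CCE sets and liquid welfare values coincide as well.

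Under this reduction the task becomes exhibiting a budget-constrained single-item (equivalently, $k=1$ DPA) instance with a CCE of ratio tending to $\frac{2+\mathcal W_0(-e^{-2})}{1+\mathcal W_0(-e^{-2})}$.

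\textbf{Step 1: the shape of the construction.} I expect it to take the following form, which I would write out explicitly if the appendix construction for Theorem~\ref{thm:DPA-LB} uses more than one unit.
\begin{itemize}
\item A bidder $1$ with value $1$ and a budget that does not bind in the optimum, who should get the item.
\item A ``filler'' population of bidders whose value is reached by their budget at equilibrium. These bidders contribute liquid welfare $\budget_i$ that is already counted in the optimum.
\item A correlated distribution in which the winning price $t$ is drawn on $[0,G]$ with $G=1-e^{-1/\mu}$ and density proportional to $\frac{1}{1-t}$, while bidder $1$ wins only with a small probability.
\end{itemize}
This mirrors the extremal dual solution in Theorem~\ref{thm:ultimate-POA}. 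There $\lambda=\mu G$ and $\lambda+\mu=1$, so the constraints for $N_1$ and $N_2$ are tight simultaneously. This suggests that the filler bidders should form the set $N_1$ and that bidder $1$ should be the tight bidder in $N_2$.

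\textbf{Step 2: the CCE conditions.} The density $\mu/(1-t)$ is chosen precisely so that every deviation bid $b\in[0,G]$ by bidder $1$ yields the same expected utility. Concretely, $(1-b)\Pr[\text{price}\le b]$ is constant on that interval, and it equals her equilibrium utility. Bids above $G$ are dominated.

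For the filler bidders I would use the budget constraint to rule out profitable deviations. Any deviation that would raise their utility must increase their payment above $\budget_i$, because their equilibrium payment already equals their budget. This is where the in-expectation budget feasibility in $\cC_i(\vA_{-i})$ must be checked carefully, including deviations that mix with $\emptyset_i$.

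\textbf{Step 3: the ratio.} I would compute the equilibrium liquid welfare as $\sum_{i\in N_1}\budget_i$ plus bidder $1$'s expected value, and the optimum as the same quantity plus $1$. After normalization and a limiting argument (discretizing the distribution, or letting the number of fillers grow), the ratio tends to $\frac{1}{\mu G}$ at $\mu=\frac{1}{2+\mathcal W_0(-e^{-2})}$, that is, to $\approx 2.188$.

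The main obstacle is Step 2 for the budget-capped bidders. A budget-feasible deviation may reallocate spending across price realizations, so it must be shown that no reallocation of expected payment improves their utility. I would first try to make their value exactly equal to their budget. Then any deviation's utility is at most value minus payment, and that is at most their equilibrium utility of zero surplus plus liquid slack.
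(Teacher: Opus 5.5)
\textbf{There is a genuine gap: the proposal commits to an equilibrium law that is wrong, and leaves the budgeted bidder's deviations unresolved with a fix that fails.} Your one-slot reduction is legitimate: with $m=1$ and $\alpha_1=1$, a GFP auction is a single-item first-price auction. A single-item instance can in fact reach the bound, but only because $\OPT$ is allowed to randomize. However, the proposal never fixes an instance. Theorem~\ref{thm:DPA-LB} does not specialize to one unit, since it uses $k+1\ge 6$ units and $k\to\infty$. So your Step~1 sketch has to carry the proof, and its equilibrium distribution is the wrong one. The density $\frac{\mu}{1-t}$ on $[0,G]$ is the law of the smoothness \emph{deviation} from Lemma~\ref{lem:FPAs-smoothness}, not a worst-case equilibrium. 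Its CDF is $\mu\ln\frac{1}{1-b}$, so $(1-b)\Pr[\text{price}\le b]=\mu(1-b)\ln\frac{1}{1-b}$ vanishes at $b=0$ and peaks at $b=1-\frac1e$. Hence the indifference asserted in Step~2 is false. Indifference of the unit-value bidder forces $\Pr[\text{price}\le b]=\frac{c}{1-b}$. This means the CDF must be $F(t)=\frac{a}{1-t}$ on $[0,1-a]$, with an atom of mass $a$ at $0$, where $a=-\mathcal W_0(-e^{-2})$. At the optimal $\mu$ one has $G=1-a$, so only your support is right. Moreover, she must win exactly on that atom, via a tie at $0$ broken in her favor, at price $0$. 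Then her equilibrium utility $a$ equals $F(b)(1-b)$ for every deviation $b\in(0,1-a]$.

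The main obstacle is left open, and the fix you suggest destroys the construction. It is not true that an improving deviation must overspend, because a budget-feasible deviation can shift mass to cheaper bids. Setting the budgeted bidder's value equal to her budget makes liquid welfare coincide with welfare. It also makes the deviation of Lemma~\ref{lem:FPAs-smoothness} (with $\mu=1$) budget-feasible, so the budget-free smoothness argument caps the ratio at $\frac{e}{e-1}$. Insisting on zero equilibrium surplus also fails: a bid $\epsilon>0$ then beats the atom and wins with probability about $a$ at negligible price.

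What works is the opposite choice.
\begin{itemize}
\item Take a budgeted bidder with value $V\to\infty$ and $\budget_1=\mathbb E[t]=1-a+a\ln a$. Both bidders bid the common level $t\sim F$, and ties above $0$ go to her.
\item Then $\LW=\budget_1+a$, while $\OPT\ge\budget_1+1-\budget_1/V$ by giving her the item with probability $\budget_1/V$. The paper emulates this deterministically with a second slot of click-through rate $\delta$, $\bar v_1=\budget_1/\delta$, and a zero-value third bidder who bids $t$.
\item For her CCE condition, write $F(b)(V-b)=aV+(V-1)\,bF(b)$. Here $\mathbb E[bF(b)]$ is exactly the deviation's expected payment, hence at most $\budget_1$. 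Bids above $1-a$ are dominated and bid $0$ earns nothing, so any budget-feasible deviation gets at most $aV+(V-1)\budget_1$.
\item At equilibrium she gets $(1-a)V-\budget_1$. The difference is $-aV(1+\ln a)$, which is nonnegative exactly when $a\le\frac1e$.
\item With $\ln a=a-2$, the ratio tends to $\frac{2-a+a\ln a}{1+a\ln a}=\frac{2-a}{1-a}\approx 2.188$.
\end{itemize}
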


\subsection{GSP}
We first prove that the inequality in \eqref{eq:coupling-constraint} holds. Given an outcome profile $\vec x\in \cX$ and action profile $\va\in \cA$, we have
$$\sum_{i\in N}\cP_i(x_i,\va_{-i})=\sum_{i\in N}\alpha_{x_i}\beta_{x_i}(\va_{-i})\leq \sum_{i\in N}\alpha_{x_i}\beta_{x_i}(\va)\leq \sum_{i\in N}\alpha_{X_i(\va)}a_i=\sum_{i\in N}W_i(\va).$$
Next, since GSP is price dominated by GFP, we immediately obtain the following smoothness result.
\begin{lemma}\label{lemma:smoothness-GSP}
    Bidders in a GSP auction are conservatively $(\mu G,\mu)$-semi-smooth for any $\mu\geq 0$ where $G=1-e^{-\frac1\mu}$.
\end{lemma}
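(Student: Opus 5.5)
The plan is to reuse, essentially verbatim, the deviation that establishes Lemma~\ref{lemma:GFP-smoothness}, and to observe that each of the two required properties transfers from GFP to GSP pointwise. Fix a bidder $i$, an outcome (slot) $j\in\cX_i$ with $v_i(j)>0$ (by Remark~\ref{rem:wlog-positive-value}), and a CCE $\vA$. Let $\hat A_i'$ be the GFP semi-smoothness deviation of Lemma~\ref{lemma:GFP-smoothness}. That deviation samples $t$ from the density $f(t)=\frac{\mu}{1-t}$ on $[0,G]$ and bids $t\bar v_i$, possibly mixed with the opt-out action $0$, with probability chosen so that the expected payment against $\vA_{-i}$ is at most $v_i(j)$. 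We use the same mixed bid in GSP.

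The key pointwise fact is \emph{price domination}. For every profile $\va$, the allocation rules of GFP and GSP coincide, since both rank by bids. Moreover,
\[
P_i^{\mathrm{GSP}}(\va)=\alpha_{X_i(\va)}\beta_{X_i(\va)+1}(\va)\le \alpha_{X_i(\va)}a_i=P_i^{\mathrm{GFP}}(\va),
\]
because the next-highest bid is at most bidder $i$'s own bid. Hence, for each realized action $a_i'$ and each $\va_{-i}$, we get $\z_i^{\mathrm{GSP}}(a_i',\va_{-i})\ge \z_i^{\mathrm{GFP}}(a_i',\va_{-i})$ and $P_i^{\mathrm{GSP}}(a_i',\va_{-i})\le P_i^{\mathrm{GFP}}(a_i',\va_{-i})$. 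The threshold functions $\cP_i(j,\va_{-i})=\alpha_j\beta_j(\va_{-i})$ are the same in both formats.

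The rest of the argument is a matter of taking expectations. First, the expected GSP utility of $\hat A_i'$ against $\vA_{-i}$ is at least its expected GFP utility, which Lemma~\ref{lemma:GFP-smoothness} lower bounds by $\mu G v_i(j)-\mu\,\E_{\va_{-i}\sim\vA_{-i}}[\cP_i(j,\va_{-i})]$. Second, the expected GSP payment is at most the expected GFP payment, which is at most $v_i(j)$, so conservativeness is inherited.

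The one point needing care is that the GFP lemma is stated for CCE of GFP, whereas here $\vA$ is a CCE of GSP. I would therefore check, the main obstacle being only this bookkeeping, that the proof of Lemma~\ref{lemma:GFP-smoothness} never uses the equilibrium property of $\vA$. As in Lemma~\ref{lem:pacing-bid-smoothness}, it should only use that $\vA_{-i}$ is some distribution over opponents' bids. The mixing probability $q_i(t)$ is defined from the GFP payment against $\vA_{-i}$, and the utility bound holds for an arbitrary distribution. Hence the deviation and both inequalities are valid for any $\vA_{-i}$, in particular for a GSP CCE. Should the mixing probability be tied to GSP payments instead, one can keep the GFP-defined $q_i(t)$: the GSP payment is only smaller, so budget conservativeness is preserved, and the utility bound still follows by domination.
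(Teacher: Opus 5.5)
Your proposal is correct and follows the paper's route: the paper proves this lemma in one line by noting that GSP is price-dominated by GFP. Hence the GFP deviation of Lemma~\ref{lemma:GFP-smoothness} yields, pointwise, at least the same utility and at most the same payment, which transfers both the semi-smoothness inequality and conservativeness. Your extra check that the GFP argument uses only that $\vA_{-i}$ is some distribution over opponents' bids, and not its equilibrium property, is exactly the bookkeeping the paper leaves implicit.
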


Following the same optimization as for the UPA makes the following result immediate.

\begin{theorem}\label{thm:poa-GSP}
    The PoA of CCEs in GSP auctions with budgets is at most $|\mathcal W_{-1}(-e^{-2})|\approx 3.146$.
\end{theorem}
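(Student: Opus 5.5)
The plan is to obtain the bound exactly as for the UPA in Theorem~\ref{thm:poa-UPA}, namely by applying the NOB branch of Theorem~\ref{thm:ultimate-POA} together with Theorem~\ref{thm:smoothness-reduction}. The GSP is not pay-your-bid, since its payment $\alpha_{X_i(\va)}\beta_{X_i(\va)+1}(\va)$ is generally strictly below the willingness to pay $W_i(\va)=\alpha_{X_i(\va)}a_i$. We therefore work with NOB-feasible equilibria, as the paper does throughout for non-PYB formats.

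Three hypotheses of Theorem~\ref{thm:ultimate-POA} must be checked for the thresholds $\cP_i(j,\va_{-i})=\alpha_j\beta_j(\va_{-i})$.

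\begin{enumerate}
\item The coupling constraint \eqref{eq:coupling-constraint} holds pointwise for every $\vx\in\cX$ and $\va$, as shown just before the statement. Taking expectations over $\vchi^\ast$ and $\vA$ gives it in the required form, using that the deviating bidder's opponents' marginal is the marginal of $\vA$.
\item Conservative $(\mu G,\mu)$-semi-smoothness is Lemma~\ref{lemma:smoothness-GSP}. It follows from Lemma~\ref{lemma:GFP-smoothness} because, for every realization, the GSP payment of any deviation is at most its GFP payment while the allocation is identical. Hence the utility of each deviation weakly increases, and the conservativeness cap $v_i(x_i)\ge$ expected payment is preserved.
\item Theorem~\ref{thm:smoothness-reduction} then upgrades this to budget-feasible $(\mu G,\mu)$-semi-smoothness for every bidder.
\end{enumerate}

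Plugging $\lambda=\mu G\le 1$ into the NOB bound gives
\[
\CCEPOA\le \frac{\mu+\max\{1,\mu G\}}{\mu G}=\frac{\mu+1}{\mu(1-e^{-1/\mu})}.
\]
Since $G<1$, the maximum equals $1$ in both the budgeted and budget-free cases.

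It remains to minimize $g(\mu)=\frac{\mu+1}{\mu(1-e^{-1/\mu})}$ over $\mu>0$. Substituting $s=1/\mu$ gives $g=\frac{1+s}{1-e^{-s}}$. Setting the derivative to zero yields $(1-e^{-s})=(1+s)e^{-s}$, i.e.\ $e^{s}=2+s$. At that point $1-e^{-s}=\frac{1+s}{2+s}$, so $g=2+s$.

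Writing $w=-(2+s)$, the condition $e^s=2+s$ becomes $we^{w}=-e^{-2}$ with $w<-1$. Hence $w=\mathcal W_{-1}(-e^{-2})$ and the minimum value is $2+s=-\mathcal W_{-1}(-e^{-2})\approx 3.146$.

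The main point requiring care is the second hypothesis, the price-domination transfer from GFP to GSP. One must confirm that the GFP semi-smoothness deviation, together with its mixing probability $q_i(t)$ (calibrated in the GFP proof to GFP payments), still yields a valid conservative deviation under GSP. Lower payments only make conservativeness easier. If the calibration were an issue, I would rerun the proof of Lemma~\ref{lemma:GFP-smoothness} with GSP payments in the definition of $q_i(t)$, using $\alpha_j\beta_{j+1}\le\alpha_j t\bar v_i$ in the utility bound.
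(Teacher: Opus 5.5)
Your proposal is correct and follows the paper's argument, which it states only as ``the same optimization as for the UPA'': the pointwise coupling constraint for the thresholds $\alpha_j\beta_j(\va_{-i})$, conservative semi-smoothness via price domination by GFP (Lemma~\ref{lemma:smoothness-GSP}), the reduction of Theorem~\ref{thm:smoothness-reduction}, the NOB branch of Theorem~\ref{thm:ultimate-POA}, and minimization over $\mu$. One small fix: $\max\{1,\mu G\}=1$ follows from $\mu(1-e^{-1/\mu})<1$ (i.e., $1-e^{-s}<s$ for $s>0$), not from $G<1$ alone.
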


Currently, to the best of our knowledge, the best budget-free lower bound for the PoA of GSP auctions is $1.259$, due to \citet{caragiannis2015gsp}, who study the auction under the (stronger) per-bidder NOB. Moving to an aggregate NOB allows us to find larger lower bounds. We show this in the following proposition.

\begin{proposition}\label{prop:GSP-LB}
    The PoA of PNEs in budget-free GSP auctions is at least 2.
\end{proposition}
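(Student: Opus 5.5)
The plan is to mimic the two-bidder construction of Theorem~\ref{thm:SPA-LB}. A low-value bidder takes the top slot with an overbid that the \emph{aggregate} NOB still permits, because the NOB slack comes from the liquid welfare that the high-value bidder collects in the second slot. Concretely, take a budget-free instance with $n=m=2$ slots and click-through rates $\alpha_1 = 1$ and $\alpha_2 = \tfrac12$. Let bidder~1 have value-per-click $\bar v_1 = 0$ and bidder~2 have $\bar v_2 = 1$, and break ties in favour of bidder~1. Consider the pure profile $a_1 = \tfrac12$, $a_2 = 0$.

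First I will compute the outcome. Bidder~1 is ranked first, gets slot~1, and pays $\alpha_1 \beta_2(\va) = 0$. Bidder~2 gets slot~2 and pays $\alpha_2 \beta_3(\va) = 0$. The equilibrium welfare is therefore $0\cdot 1 + 1\cdot\tfrac12 = \tfrac12$. The optimum assigns slot~1 to bidder~2, giving $\OPT = 1$, so the ratio is $2$.

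Next I will verify the aggregate NOB \eqref{eq:NOB}. The willingness to pay of bidder~$i$ at its slot is $\alpha_{X_i(\va)} a_i$. This gives $W_1 = 1\cdot\tfrac12$ and $W_2 = \tfrac12\cdot 0 = 0$, so $\sum_i W_i = \tfrac12 \le \tfrac12 = \LW$. Note that the per-bidder NOB fails for bidder~1, which is exactly why this construction beats the $1.259$ bound of \citet{caragiannis2015gsp}.

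Then I will check the equilibrium conditions. Bidder~1 has value $0$ and utility $0$ under any outcome, since payments are nonnegative, so bidder~1 has no profitable deviation. Bidder~2 currently has utility $\tfrac12$. Lowering the bid changes nothing. To obtain slot~1, bidder~2 must bid strictly above $\tfrac12$ (ties go to bidder~1), and then pays $\alpha_1 a_1 = \tfrac12$ for value $1$, again a utility of $\tfrac12$. Hence no deviation is strictly profitable, and the profile is a PNE. This step is the main obstacle: the overbid $a_1$ has to be at least $1-\alpha_2$ to deter bidder~2, while NOB forces $a_1 \le \alpha_2$. These two requirements together pin down $\alpha_2 = \tfrac12$ as the balancing choice, and that choice yields exactly the factor $2$.
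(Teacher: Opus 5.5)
Your construction is correct and is the paper's own instance up to relabeling: the paper gives the zero-value bidder the bid $\tfrac12$ and slot~1 and breaks ties in favour of the value-$1$ bidder, whereas you break ties in favour of the overbidder. Either way the deviation to slot~1 yields utility exactly $\tfrac12$, and the PNE, aggregate-NOB and $\OPT/\LW = 2$ checks go through as you describe. Your closing observation that $1-\alpha_2 \le a_1 \le \alpha_2$ forces $\alpha_2 \ge \tfrac12$ is a nice extra: it explains why this two-bidder template cannot give more than $2$.
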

\begin{proof}
    Consider a budget-free instance with two slots, $\alpha_1=1$, $\alpha_2=\frac12$, and two bidders whose valuations per click are $\bar v_1=1$ and $\bar v_2=0$. Consider the bids $b_1=0$ and $b_2=\frac12$. Suppose that ties are broken in favor of bidder 1. Bidder 2 receives the highest slot at a payment of 0, so they have no incentive to deviate. For bidder 1, their utility under $\vec b$ is $u_1(\vec b)=\frac12$ since they receive slot 2 at a price of 0. To win slot 1, they would have to bid at least $\frac12$, meaning that their utility cannot strictly improve either. Furthermore, note that the aggregate NOB assumption is satisfied since the social welfare at this profile $\vec b$ is $\frac12$, while the willingness to pay of bidder 2 is also $\frac12$. Hence, $\vec b$ is a PNE that satisfies the aggregate NOB. Optimally, slot 1 would be allocated to bidder 1 and slot 2 to bidder 2, this would yield a social welfare of 1 and the result follows.
\end{proof}

\section{Conclusion}
In this work, we introduced an LP-duality framework (Section~\ref{sec:poa-duality}) for bounding the PoA of CCE with respect to liquid welfare, and used it to derive (mostly) tight bounds for the standard simple auction formats with budget-constrained bidders (see Table~\ref{tab:results}).

Our results reveal an interesting dichotomy: the guarantees of the second-price-type formats do not worsen under budget constraints, whereas those of the pay-your-bid formats provably do. With budgets, the best possible guarantee rises from $\frac{e}{e-1}\approx 1.58$ \cite{ST13, christodoulou16} to $2.188$ \cite{baldeschi2026} for first-price, and from at most $2$ \cite{ST13} to $3$ (Theorem~\ref{thm:SAPA-LB}) for all-pay. Theorem~\ref{thm:ultimate-POA} offers a partial explanation: the PYB bound $\frac{\max\{1,\lambda+\mu\}}{\lambda}$ strictly exceeds its budget-free counterpart $\frac{\max\{1,\mu\}}{\lambda}$ whenever $\lambda+\mu>1$, and in particular at the parameters optimizing the budget-free bounds, while the NOB bound $\frac{\mu+\max\{1,\lambda\}}{\lambda}$ coincides with its budget-free counterpart $\frac{\mu+1}{\lambda}$ whenever $\lambda\leq 1$, which holds in all our applications. Effectively, whenever our analysis of a second-price-type format is tight, budgets come ``for free'': the SSPA and the UPA confirm this, whereas for the GSP the question remains open, as its PoA lies in $[2, 3.146]$.

We believe that settling the exact PoA of the GSP and of SFPAs with subadditive valuations is a fascinating open direction, which may lead to new techniques.%

\section*{Acknowledgements and AI Disclosure} 

The authors used large language models (Claude, ChatGPT) to assist with editing and improving the exposition. 
All mathematical results, proofs, and technical content were developed by the authors, with the following exceptions. The initial lower-bound constructions for the GFP auction and the all-pay auction (Theorems~\ref{thm:GFP-LB} and~\ref{thm:SAPA-LB}, respectively) were generated by ChatGPT 5.5. These constructions were subsequently reviewed, verified, and substantially rewritten by the authors, who used the AI-generated arguments only as an initial starting point. The authors take full responsibility for the correctness and content of the manuscript.

\bibliographystyle{plainnat}
\bibliography{bib_poa}

\newpage
\appendix

\section{Detailed Comparison with other Duality-based Approaches}\label{app:duality}
Duality has been used to bound the price of anarchy of games through several approaches. \citet{nadav10} take the equilibrium distribution as the primal variables of a linear program whose objective value equals the PoA and whose dual yields $(\lambda,\mu)$-smoothness inequalities; their main result characterizes the class of distributions to which every smoothness bound applies, a superclass of CCE. A related line of work~\cite{kulkarni14, kim18} instead models the underlying optimization problem as a convex program or an exponential-size configuration LP and resorts to (Fenchel or LP) duality, so that the smoothness inequalities arise as dual constraints. A third approach, due to \citet{bilo18} (see also~\cite{bilo23}), focuses on the PoA of weighted congestion games: the primal LP variables are the parameters defining the players' payoffs, and complementary slackness guides the construction of matching worst-case instances. All of these works consider games with fixed strategy spaces, and none accommodates budget constraints.

Thematically, the closest related work is the work of \citet{kim25}, which also develops a duality-based approach for PoA bounds with respect to liquid welfare, combining a configuration LP with the KKT conditions of an inner convex program. This work establishes a tight PoA bound of $2$ for the proportional (Kelly) mechanism at pure Nash equilibria, a tight bound of $2$ at Bayesian CCE for a \emph{modified} proportional mechanism restricted to diminishing-returns valuations, and a tight bound of $2$ at (Bayesian) CCE for the auto-bidding pacing metagame \cite{feng24}, all under budget and return-on-spend constraints with hybrid value/utility objectives. The scope of \citet{kim25} is complementary to ours: the KKT-based analysis exploits the differentiable, concave structure of the proportional and pacing mechanisms and does not directly extend to the combinatorial auction formats we consider (discrete allocations and XOS or subadditive valuations). In contrast, we bound the inefficiency of CCE directly for simultaneous, multi-unit, and position auctions.

Our formulation is different from all the approaches above in what the primal LP variables represent. Rather than the equilibrium distribution, the underlying optimization problem, or the payoff parameters, our linear program ranges over aggregate per-bidder \emph{equilibrium statistics}: payments, willingness-to-pay, and threshold functions, with the bidders partitioned according to whether their valuation at equilibrium already meets their budget. A single coupling constraint relates the penalty term to the total payments or to the total willingness-to-pay, so that the dual of our LP captures first-price, all-pay, and second-price payment rules in a unified way. The key technical ingredient is a reduction showing that smoothness can be achieved by deviations that remain budget-feasible against every action profile of the opponents; this is precisely what makes smoothness arguments applicable in our setting, where a deviation must respect the budget constraint no matter what the other bidders play. This is an issue that does not arise in prior duality-based frameworks, where every deviation is always feasible. Finally, in contrast to the analysis of \citet{baldeschi2026}, which is tailored to first-price auctions and is not duality-based, our framework recovers the first-price guarantees for budget-constrained utility maximizers and extends them, through a single duality argument, to second-price and all-pay item auctions, multi-unit auctions, and position auctions.

\section{Missing Material of Section~\ref{sec:preliminaries}}
\label{app:additionalPrelims}

The following example shows that $\PNE(I)\subseteq \MNE(I)$ need not hold anymore when the budget constraint is taken in expectation.
\begin{example}
    Consider a first-price auction with 2 bidders and 1 item. Let $v_1=10$, $v_2=1$, $\budget_1=1$ and $\budget_2=1$. Suppose that ties are broken in favor of bidder 2 and consider the bid profile $\va=(1,1)$. There is no deterministic budget-feasible deviation bidder 1 can do to win the item. Hence, they have no incentive to deviate. The utility of bidder 2 under $\va$ is 0 and by bidding lower they would lose the item. Hence, for bidder 2 there does also not exist a profitable deviation showing that $\va$ is a PNE. 

    Now, consider the mixed deviation where bidder 1 bids $2$ with probability $0.5$ and otherwise 0 and call it $A_1'$. Their expected payment under $(A_1',a_2)$ is exactly 1. Hence, this deviation is budget-feasible. Now, their expected utility improves from $0$ to:
    $$\myexp{a_1'\sim A_1'}{}{u_1(a_1',a_2)}=\frac12(10-2)=4,$$
    proving that $\va$ is not an MNE.
\end{example}

Regarding our benchmark, since deterministic allocation profiles are degenerate random allocation profiles, we see our benchmark becomes stronger  by taking the optimum over random allocation profiles, as opposed to optimizing over deterministic allocation profiles. Furthermore, the liquid welfare w.r.t. the optimal deterministic allocation profile can be arbitrarily small compared to the liquid welfare at a CCE as the following example shows.

\begin{remark} 
    If the optimum is taken over deterministic allocations instead, the ratio of the optimum over liquid welfare at a CCE may go below $1$. Consider a single item first-price auction with $n+1$ bidders each having a value of $v_i=n$ and a budget of $\budget_i=1$. Ties are broken in favour of the index of the bidder. So, bidder 1 will always win the tie, bidder 2 will win any tie except for ties with bidder 1, etc. The optimal liquid welfare that can be achieved by any deterministic distribution is 1. So, consider $\vA$ to be the uniform distribution over the bid profiles
    $$(n,0,\dots,0,n), \quad (0,n,0,\dots,0,n), \quad \dots \quad (0,\dots,n,n).$$
    So, each profile has a probability of $\frac1n$ of occurring and thus each bidder $i\in [n]$ has a probability of $\frac1n$ to win. Thus, their expected payment is exactly 1, meaning that they satisfy the budget constraint in expectation. Under this distribution of bid profiles bidder $i=n+1$ has no chance of winning. Note that the expected utility of each bidder is exactly 0 under this profile. Also note that no deviation can yield a strictly positive utility because bidder $n+1$ always bids $n$ under this distribution of bid profiles. Thus, no bidder has an incentive to deviate. Hence, $\vA$ is a CCE. The expected value of bidder $i\in [n]$ is, however, exactly 1. Hence, we see that
    $$\sum_{i\in [n]}\LW_i(\vA)=\sum_{i\in [n]}1=n>1.$$
    This example shows that the ratio of the optimal liquid welfare achieved by a deterministic allocation and the liquid welfare achieved by a CCE can go to 0.
\end{remark}

\section{Missing Material of Section~\ref{sec:poa-duality}}
\label{app:sec3}

\subsection{NOB and Pay-Your-Bid Mechanisms}

The following proposition shows that every CCE satisfies our NOB restriction for pay-your-bid mechanisms.

\begin{proposition}\label{prop:NOB-no-restr-PYB}
     Let $\mech=(X, P)$ be a pay-your-bid payment mechanism. Then, for every instance $I$, $\EQ(I)=\EQ^{\textsc{nob}}(I)$ holds for $\EQ \in \{\PNE, \MNE, \CE, \CCE\}$.
 \end{proposition}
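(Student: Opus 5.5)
The inclusion $\EQ^{\textsc{nob}}(I)\subseteq \EQ(I)$ holds by definition, so the plan is to show the reverse inclusion: every $\vA\in\EQ(I)$ satisfies \eqref{eq:NOB}. The argument works bidder by bidder and then sums over all bidders.

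First, since $\mech$ is pay-your-bid (Definition~\ref{def:pyb}), we have $W_i(\va)=P_i(\va)$ pointwise for every $\va\in\cA$ and every $i\in N$. Taking expectations under $\vA$ gives
\[
\myexp{\va\sim\vA}{}{W_i(\va)}=\myexp{\va\sim\vA}{}{P_i(\va)}.
\]
One technicality needs checking: the supremum in Definition~\ref{def:wtp} must be finite and measurable. This is implicit in the PYB definition, since it equates $W_i$ with the real-valued $P_i$.

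Second, Lemma~\ref{lem:IR} applies to every $\EQ\in\{\PNE,\MNE,\CE,\CCE\}$ and gives
\[
\myexp{\va\sim\vA}{}{P_i(\va)}\le \min\bigl\{\myexp{\va\sim\vA}{}{v_i(X_i(\va))},\budget_i\bigr\}=\LW_i(I,\vA).
\]
Chaining this with the previous identity and summing over $i\in N$ yields
\[
\sum_{i\in N}\myexp{\va\sim\vA}{}{W_i(\va)}\le\sum_{i\in N}\LW_i(I,\vA)=\LW(I,\vA),
\]
which is exactly \eqref{eq:NOB}. Hence $\vA\in\EQ^{\textsc{nob}}(I)$.

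I do not expect a real obstacle. The only point needing care is the $\PNE$ case, because $\PNE(I)\not\subseteq\MNE(I)$ in this generalized game. I would therefore confirm that Lemma~\ref{lem:IR} holds there directly. It does: the opt-out action $\emptyset_i$ is a pure, budget-feasible deviation with zero payment, so it is an admissible deviation under \eqref{eq:pne-def}.
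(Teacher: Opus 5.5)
Your proposal is correct and follows essentially the same route as the paper: the inclusion $\EQ^{\textsc{nob}}(I)\subseteq\EQ(I)$ by definition, then $W_i=P_i$ from the pay-your-bid property, Lemma~\ref{lem:IR} bidder by bidder, and summation to obtain \eqref{eq:NOB}. Your extra check that the opt-out action is an admissible pure deviation in the $\PNE$ case is a reasonable sanity check, though Lemma~\ref{lem:IR} already covers all four equilibrium notions.
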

\begin{proof}
$\EQ^{\textsc{nob}}(I) \subseteq \EQ(I)$ holds by definition. For the other direction, let $\vA \in \EQ(I)$. We have:
\[
\sum_{i \in N}\myexp{\va \sim \vA}{}{W_i(\va)}
= \sum_{i \in N}\myexp{\va \sim \vA}{}{P_i(\va)}
\leq \sum_{i \in N}\min\setl{\myexp{\va \sim \vA}{}{v_i(X_i(\va))}, \budget_i}
= \sum_{i \in N}\LW_i(\vA),
\]
which is precisely \eqref{eq:NOB}. Here, the equality holds since $\mech$ is pay-your-bid (Definition~\ref{def:pyb}). 
The inequality  holds per bidder and follows from Lemma~\ref{lem:IR}. 
\end{proof}

\subsection{Unbounded POA without NOB}

\begin{example}\label{ex:unboundedPOA}
Consider a single-item second-price auction with two bidders: bidder~$1$ has $v_1 = 1$ and $\budget_1 = B$, and bidder~$2$ has $v_2 = B$ and $\budget_2 = B$, for $B \gg 1$. The profile $(b_1, b_2) = (B+1, 0)$ is a pure Nash equilibrium: bidder~$1$ wins and pays~$0$, while bidder~$2$ cannot outbid bidder~$1$ within budget. In fact, the bid profile is also an MNE. The resulting liquid welfare is $\min\{1, B\} = 1$, whereas the optimum is $\min\{B, B\} = B$, giving an unbounded price of anarchy. Note that bidder~$1$ violates the NOB assumption, since $W_1 = B + 1 \gg 1 = \LW_1(b_1,b_2)$.
\end{example}

\subsection{NOB: Strategy vs.~Equilibrium Refinement}\label{app:sec:NOB-discussion}

We comment on the distinction between NOB as strategy and equilibrium refinement below. 

\begin{remark}\label{rem:nob-comparison}
In the literature, there are two approaches to impose no-overbidding: NOB as a \emph{strategy refinement} restricts the action space of the bidders so that overbidding actions are unavailable (see, e.g., \cite{ST13, christodoulou16, bhawalkar11}), whereas NOB as an \emph{equilibrium refinement} does not restrict the full action space and instead restricts attention to equilibria whose distributions satisfy a no-overbidding assumption (e.g., \cite{birmpas2019tight, fu12}). 
Here, we adopt NOB as an equilibrium refinement as we consider it a more permissive and conceptually preferable notion: under NOB as a strategy refinement, a profile may \emph{artificially} be an equilibrium only because a profitable (overbidding) deviation has been ruled out by assumption, whereas under an equilibrium refinement every budget-feasible deviation remains available and must be unprofitable. We further stress that, among equilibrium refinements, our condition is milder because (i) it is imposed in-expectation (similarly to \cite{feldmanfu}), and (ii) it needs to hold in aggregate only (rather than per bidder). %
In particular, in the budget-free settings of \citet{birmpas2019tight} and \citet{fu12}, their per-bidder conditions imply ours: summing the per-bidder inequalities over the bidders and taking expectations yields precisely \eqref{eq:NOB}. The converse does not hold.
\end{remark}

\subsection{PoA of NOB mechanisms}\label{sec:NOB-POA-BOUND-LP}

\thmultimatePOA*
\begin{proof}[Proof of Theorem~\ref{thm:ultimate-POA}]
It remains to prove the bound for NOB mechanisms. 
We set $\gamma_i:=\gamma$ for all $i\in N_1$ and similarly define $\delta_i:=\delta$ for all $i\in N_2$. For $\eta_i$, we set $\eta_i:=\eta_1$ if $i\in N_1$ and otherwise $\eta_i:=\eta_2$. Analogously, we define $\kappa_i:=\kappa$ for all $i\in N$. In that case, the dual constraints become
    $$\gamma + \eta_1+\tau \leq 1, \qquad \delta+\eta_2+\tau \leq 1, \qquad0\leq \eta_1+\kappa,\qquad 0\leq \eta_2+\delta+\kappa,\qquad  \delta\mu\leq \theta, \qquad \theta+\kappa\leq \tau.$$
    However, since we are not in the PYB setting, $\kappa$ is not a free variable and must be non-negative. Hence, the constraints reduce to 
    $$\gamma + \eta_1+\tau \leq 1, \qquad \delta+\eta_2+\tau \leq 1, \qquad  \delta\mu\leq \theta, \qquad \theta+\kappa\leq \tau.$$
    We further choose $\tau=\theta=\delta\mu$, $\kappa=0$, and $\eta_1=\eta_2=0$. The constraints then reduce to
    $$\gamma +\delta\mu\leq 1, \qquad \delta+\delta\mu\leq 1.$$
    Choosing $\gamma=1-\delta\mu$ we want to maximize $\min\{1-\delta\mu,\lambda\delta\}$ over $0\leq \delta\leq \frac{1}{1+\mu}$. Solving $1-\delta\mu=\lambda\delta$ gives $\delta=\frac{1}{\lambda+\mu}$ which is feasible when $\lambda\geq 1$. For $\lambda<1$, we choose $\delta=\frac{1}{1+\mu}$. More compactly written, we set $\delta=\frac{1}{\mu+\max\{1,\lambda\}}$. For this choice of $\delta$, we find that
    $$\CCEPOA(\mathcal I)\leq \min\{1-\delta\mu,\delta\lambda\}^{-1}=\min\left\{\frac{\max\{1,\lambda\}}{\mu+\max\{1,\lambda\}},\frac{\lambda}{\mu+\max\{1,\lambda\}}\right\}^{-1}=\frac{\mu+\max\{1,\lambda\}}{\lambda}.$$
    For the budget-free setting, the set $N_1$ is empty. Hence, the only constraint would be $\delta+\mu\delta\leq 1$ after our choices. In this case, choosing $\delta=\frac{1}{1+\mu}$ immediately gives the following bound:
    $$\CCEPOA(\cI)\leq (\delta\lambda)^{-1}= \frac{\mu+1}{\lambda}$$
    Combining these results gives
    $$\CCEPOA(\cI)\leq \frac{\mu+\max\{1,\lambda\mathbbm{1}_{\{\cI\neq \cI^{\infty}\}}\}}{\lambda}.$$
\end{proof}

\subsection{Further Smoothness}\label{sec:smoothness-further-discussion}
    \begin{definition} Consider a budget-constrained instance $I = (N, \vec{\cA}, \cX, \vv, \vec \budget)$ and a mechanism $\mech = (X, P)$. Consider a threshold function $\cP_i:\cX_i\times\cA_{-i} \to \mathbb R_{\geq0}$ for each $i\in N$.
        Bidder $i \in N$ is \emph{budget-feasible $(\lambda, \mu)$-smooth} if there exists a deviation $\A_i' \in \Delta(\cA_i)$ such that for every pure action profile $\va\in \mathcal A$ we have that $\A_i'\in \cC_i(\va_{-i})$ and
    $$\myexp{a_i'\sim A_i'}{}{\z_i(a_i', \vec{a}_{-i})} \geq \lambda \hat v_i(\chi_i^\ast) - \mu \myexp{x_i\sim\chi_i^\ast}{}{\cP_i(x_i,\va_{-i})}.$$
    \end{definition}

\begin{theorem}\label{thm:smoothness-reduction-further}
    If a bidder is conservatively $(\lambda,\mu)$-smooth, then they are budget-feasible $(\lambda,\mu)$-smooth. 
\end{theorem}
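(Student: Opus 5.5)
The plan is to mirror the proof of Theorem~\ref{thm:smoothness-reduction}, replacing expectations over $\va_{-i}\sim\vA_{-i}$ by pointwise statements for every pure opponent profile $\va_{-i}$. We fix bidder $i$ and the optimal profile $\vchi^\ast$. If $\hat v_i(\chi_i^\ast)=0$, we take $\A_i'=\emptyset_i$: it pays $0$, so it lies in $\cC_i(\va_{-i})$ for every $\va_{-i}$, and its utility is nonnegative while the right-hand side is nonpositive. Otherwise, for each outcome $x_i\in\cX_i$ we let $A_i^{x_i}$ be the conservative smoothness deviation from Definition~\ref{def:smoothness}. Crucially, this deviation does not depend on $\va_{-i}$. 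We then define $A_i'$ to sample $x_i\sim\chi_i^\ast$ and play $A_i^{x_i}$. Finally, $\hat A_i'$ plays $A_i'$ with probability $q_i=\min\{1,\hat v_i(\chi_i^\ast)/\E_{x_i\sim\chi_i^\ast}[v_i(x_i)]\}$ and $\emptyset_i$ otherwise. Since $q_i$ and all the components depend only on $\vchi^\ast$ and the valuations, $\hat A_i'$ is a single deviation that is independent of the opponents, as the definition of budget-feasible smoothness requires.

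For budget feasibility against every pure $\va_{-i}$, we use the pointwise conservativeness $P_i(a_i',\va_{-i})\le v_i(x_i)$ for all $a_i'\in\supp(A_i^{x_i})$. This gives
\[
\E_{\hat a_i'\sim\hat A_i'}[P_i(\hat a_i',\va_{-i})]\le q_i\sum_{x_i}\mathbb P_{\chi_i^\ast}(x_i)\,v_i(x_i)=q_i\E_{x_i\sim\chi_i^\ast}[v_i(x_i)]=\hat v_i(\chi_i^\ast)\le\budget_i,
\]
so $\hat A_i'\in\cC_i(\va_{-i})$ for every $\va_{-i}$. This step uses conservativeness for every $\va_{-i}$, not merely in expectation, which is exactly what the smoothness version of the definition provides.

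For the utility bound, the opt-out branch contributes nonnegative utility, since it pays nothing and valuations are nonnegative. On the other branch, we apply the smoothness inequality for each $x_i$ at the fixed $\va_{-i}$ and average over $\chi_i^\ast$:
\[
\E[\z_i(\hat a_i',\va_{-i})]\ge q_i\bigl(\lambda\E_{\chi_i^\ast}[v_i(x_i)]-\mu\E_{x_i\sim\chi_i^\ast}[\cP_i(x_i,\va_{-i})]\bigr)=\lambda\hat v_i(\chi_i^\ast)-q_i\mu\E_{x_i\sim\chi_i^\ast}[\cP_i(x_i,\va_{-i})].
\]
Here the identity $q_i\E[v_i]=\hat v_i$ holds because $\hat v_i=\min\{\E v_i,\budget_i\}$ and $\E v_i>0$. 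Using $q_i\le1$ and $\cP_i\ge0$ then yields the claimed inequality.

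There is no real obstacle in this argument. The only points that need checking are that the mixture is opponent-independent and that conservativeness holds pointwise, so that feasibility holds simultaneously for all $\va_{-i}$.
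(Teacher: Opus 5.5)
Your proposal is correct and follows essentially the same argument as the paper. You opt out when $\hat v_i(\chi_i^\ast)=0$, mix the per-outcome conservative deviations according to $\chi_i^\ast$, and combine the mixture with $\emptyset_i$ using $q_i=\min\{1,\hat v_i(\chi_i^\ast)/\E_{x_i\sim\chi_i^\ast}[v_i(x_i)]\}$; pointwise conservativeness then gives feasibility against every $\va_{-i}$, and $q_i\le 1$ gives the utility bound.
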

\begin{proof}
    Fix bidder $i\in N$ that is conservatively $(\lambda,\mu)$-smooth. Let $A_i^{x_i}$ denote the deviation for which they are $(\lambda,\mu)$-smooth w.r.t. the outcome $x_i$. First, if $\hat v_i(\chi_i^\ast)=0$, consider the deviation $a_i'=\emptyset_i$. From this the result follows trivially. So, suppose that $\hat v_i(\chi_i^\ast)\neq 0$. Next, define $A_i'$ to be the deviation where they deviate with $A_i^{x_i}$ with the same probability as $\chi_i^\ast$ allocates bidder $i$ $x_i$. Consider the deviation $\hat A_i'$ where with probability $q_i=\min\{1,\frac{\hat v_i(\chi_i^\ast)}{\myexp{x_i\sim \chi_i^\ast}{}{v_i(x_i)}}\}$ they deviate with $A_i'$ and otherwise by $\emptyset_i$. This deviation is budget-feasible because for any $\va_{-i}$ we have that
    \begin{align*}\myexp{\hat a_i'\sim \hat A_i'}{}{P_i(\hat a_i',\va_{-i})}&= q_i\myexp{a_i'\sim A_i'}{}{P_i(a_i',\va_{-i})}\\
    &=q_i\sum_{x_i\in \cX_i}\mathbb P_{x_i^\ast\sim \chi_i^\ast}(x_i^\ast = x_i)\myexp{a_i^{x_i}\sim A_i^{x_i}}{}{P_i(a_i^{x_i},\va_{-i})}\\
    &\leq q_i \sum_{x_i\in \cX_i}\mathbb P_{x_i^\ast\sim \chi_i^\ast}(x_i^\ast=x_i)v_i(x_i)\\
    &=q_i\myexp{x_i\sim \chi_i^\ast}{}{v_i(x_i)}\\
    &\leq \budget_i,
    \end{align*}
    where the first inequality follows from the fact that in every realization the payment cannot exceed the value $v_i(x_i)$. Hence, $\hat A_i'\in \cC(\va_{-i})$ for any $\va_{-i}$.
    For the expected utility we see by definition of $(\lambda,\mu)$-smoothness that for any $\va$
    \begin{align*}\myexp{\hat a_i'\sim \hat A_i'}{}{\z_i(\hat a_i',\va_{-i})}&\geq q_i\myexp{a_i'\sim A_i'}{}{\z_i(a_i',\va_{-i})}\\
    &=q_i\sum_{x_i\in \cX_i}\mathbb P_{x_i^\ast\sim \chi_i^\ast}(x_i^\ast = x_i)\myexp{a_i^{x_i}\sim A_i^{x_i}}{}{\z_i(a_i^{x_i},\va_{-i})}\\
    &\geq q_i\sum_{x_i\in \cX_i}\mathbb P_{x_i^\ast\sim \chi_i^\ast}(x_i^\ast = x_i)\left(\lambda v_i(x_i) - \mu \cP_i(x_i,\va_{-i})\right)\\
    &=\lambda q_i\myexp{x_i\sim \chi_i^\ast}{}{v_i(x_i)}-q_i\mu\myexp{x_i\sim \chi_i^\ast}{}{\cP_i(x_i,\va_{-i})}\\
    &=\lambda \hat v_i(\chi_i^\ast)-q_i\mu \myexp{x_i\sim \chi_i^\ast}{}{\cP_i(x_i,\va_{-i})}\\
    &\geq \lambda\hat v_i(\chi_i^\ast)-\mu\myexp{x_i\sim \chi_i^\ast}{}{\cP_i(x_i,\va_{-i})},\end{align*}
    where the last inequality follows since $q_i\leq 1$.
\end{proof}

\begin{remark} 

It is clear that (conservative) smoothness implies (conservative) semi-smoothness and similarly for the budget constrained variants. The dependencies between the different notions is summarized in Figure~\ref{fig:smoothness-dependencies}. 
To obtain our bounds on the price of anarchy, we only need the weakest notion, namely budget-feasible semi-smoothness.
\end{remark}

\section{Missing Smoothness Proofs}\label{sec:missing-smoothness-proofs}
\subsection{Simultaneous FPAs with Subadditive Bidders}\label{sec:subadditive-smoothness}
\lemsubadditivesmoothness*
\begin{proof}[Proof of Lemma~\ref{lem:subadditive-smoothness}]
Fix an allocation (outcome) $x_i\subseteq M$, $0<\epsilon$ and let $O(a_i,\vec p)=\{j\in x_i\mid a_{ij}\geq p_j\}$, where $\vec p=(p_j)_{j\in x_i}$ for some $p_j\geq 0$. Now, consider a CCE $\vA$ and let $\vec p(\vA)$ denote the distribution winning bids for the items in $x_i$ under the distribution of action profiles $\vA$. Similarly, we use $\vec p_\epsilon(\vA)$ to denote the distribution of winning bids for the items in $x_i$ under the distribution $\vA$ where we add $\epsilon$ to all coordinates. Sample an action $a_i'\sim \vec p_{\epsilon}(\vA_{-i})$ and bid $a_i'$ if $v_i(x_i)\geq \sum_{j\in S}a_{ij}'$, and otherwise 0, let $\hat A_i'$ denote the distribution of this mixed bid. Note that for $j\notin x_i$, the bid will always be 0 and that conservativeness easily follows when we have proven semi-smoothness w.r.t. the arbitrary outcome $x_i\subseteq M$. 
For any winning bid vector $\vec p$ and any action $\tilde a_i$ we see that $O(\tilde a_i,\vec p)\cup O(\vec p,\tilde a_i)=x_i$ and using this gives: 
\begin{equation}\begin{aligned}\label{ineq:proof-subadditive} \E_{\substack{a_i'\sim \vec p(\vA_{-i})\\\va_{-i}\sim \vA_{-i}}}[v_i(O(a_i',\vec p(\va_{-i})))]
&=\frac12\E_{\substack{a_i'\sim \vec p(\vA_{-i})\\\va_{-i}\sim \vA_{-i}}}\left[v_i(O(a_i',\vec p(\va_{-i})))+v_i(O(\vec p(\va_{-i}),a_i'))\right]\\
&\geq \frac12 v_i(x_i)\end{aligned}\end{equation} 
The first equality follows because the realizations $a_i'\sim \vec p(\vA_{-i})$ and $\vec p(\va_{-i})$ where $\va_{-i}\sim \vA_{-i}$ have identical distributions. The last inequality follows by subadditivity of $v_i$. Now, fix a realization $(a_{ij}'+\epsilon)_{j\in x_i}$ from $\vec p_{\epsilon}(\vA_{-i})$. Suppose that $v_i(x_i)\geq \sum_{j\in x_i}(a_{ij}'+\epsilon)$. Then, since $O_i(\cdot,\vec p(\va_{-i}))$ is monotone and the fact that the payment is at most $\sum_{j\in x_i}(a_{ij}'+\epsilon)$, we see that
\begin{equation}\label{eq:utility-subadditive-LB}u_i((a_{ij}'+\epsilon)_{j\in x_i},\va_{-i})\geq v_i(O_i(a_i',\vec p(\va_{-i})))-\sum_{j\in x_i}(a_{ij}'+\epsilon).\end{equation}
Instead, if $\sum_{j\in x_i}(a_{ij}'+\epsilon)>v_i(x_i)$, then the expression on the RHS in Equation~\eqref{eq:utility-subadditive-LB} is negative and the inequality still holds. So, we lower bound the utility as follows:
$$\begin{aligned}\myexp{\hat a_i'\sim \hat A_i'}{\va_{-i}\sim \vA_{-i}}{u_i(\hat a_i,\va_{-i})}&\geq \myexpl{a_i'\sim \vec p(\vA_{-i})}{\va_{-i}\sim \vA_{-i}}{v_i(O(a_i',\vec p(\va_{-i})))-\sum_{j\in x_i}(a_{ij}'+\epsilon)}\\
&\geq \frac12 v_i(x_i)-\sum_{j\in x_i}\myexp{\va_{-i}\sim \vA_{-i}}{}{\beta_j(\va_{-i})}-|x_i|\epsilon.
\end{aligned}$$
Fixing $\epsilon=\frac{v_i(x_i)}{|x_i|}\delta$ for $\frac12 >\delta>0$ gives the desired result.
\end{proof}
\subsection{Generalized First-Price Auction}\label{sec:GFP-smoothness}
\lemmaGFPsmoothness*
\begin{proof}[Proof of Lemma~\ref{lemma:GFP-smoothness}]
    Fix a bidder $i\in N$ and an outcome $j\leq m$. Consider the deviation $\hat A_i'$ which is constructed as follows. Sample $t$ from a distribution $T$ with pdf $f(t)=\frac{\mu}{1-t}$. Then, bid $t\cdot \bar v_i$ with probability $q_i(t)$, and bid $0$ otherwise, where $$q_i(t)=\min\left\{1,\frac{v_i(j)}{t\cdot \bar v_i\myexp{\va_{-i}\sim\vA_{-i}}{}{\alpha_{j_i(t\cdot \bar v_i,\va_{-i})}}}\right\},$$ with the convention $q_i(t):=1$ whenever the expectation in the denominator is $0$. Note that $t\cdot \bar v_i\myexp{\va_{-i}\sim\vA_{-i}}{}{\alpha_{j_i(t\cdot \bar v_i,\va_{-i})}}$ is exactly the expected payment of bidder $i$ conditioned on the event that $t$ is sampled and a bid is made. By the definition of $q_i(t)$, the expected payment is at most $v_i(j)$, so the deviation is conservative.

    We now lower bound the utility of this deviation. The expected utility conditioned on sampling $t$ is
    \begin{align*}\myexp{\hat a_i'(t)\sim \hat A_i'(t)}{\va_{-i}\sim \vA_{-i}}{u_i(\hat a_i'(t),\va_{-i})}&\geq  q_i(t)\bar v_i(1-t)\myexp{\va_{-i}\sim\vA_{-i}}{}{\alpha_{j_i(t\cdot \bar v_i,\va_{-i})}},
    \end{align*}
    where $\hat A_i'(t)$ is $t\cdot \bar v_i$ with probability $q_i(t)$ and otherwise 0. Next, note that
    $$q_i(t)\myexp{\va_{-i}\sim\vA_{-i}}{}{\alpha_{j_i(t\cdot \bar v_i,\va_{-i})}}=\min\left\{\myexp{\va_{-i}\sim\vA_{-i}}{}{\alpha_{j_i(t\cdot \bar v_i,\va_{-i})}},\frac{v_i(j)}{t\cdot \bar v_i}\right\}.$$
    Whenever $t\cdot \bar v_i>a_{\pi(j,\va_{-i})}$, bidder $i$ will be allocated a position that is at least as good as $j$. Thus, this gives 
    $$\myexp{\va_{-i}\sim\vA_{-i}}{}{\alpha_{j_i(t\cdot \bar v_i,\va_{-i})}}\geq \alpha_{j}\mathbb P_{\va_{-i}\sim \vA_{-i}}(a_{\pi(j,\va_{-i})}<t\cdot \bar v_i).$$
    Furthermore, since $t\leq 1$ we have that
    $\frac{v_i(j)}{t\cdot \bar v_i}\geq \frac{v_i(j)}{\bar v_i} = \alpha_{j},$
    showing that 
    $$q_i(t)\myexp{\va_{-i}\sim\vA_{-i}}{}{\alpha_{j_i(t\cdot \bar v_i,\va_{-i})}}\geq \alpha_{j}\mathbb P(a_{\pi(j,\vA_{-i})}<t\cdot \bar v_i).$$
    Hence, we can lower bound the expected utility of the deviation $t\cdot \bar v$ for fixed $t\sim T$ by 
    $$\myexp{\hat a_i'(t)\sim \hat A_i'(t)}{\va_{-i}\sim \vA_{-i}}{u_i(\hat a_i'(t),\va_{-i})}\geq \alpha_{j}\bar v_i(1-t)\mathbb P(a_{\pi(j,\va_{-i})}<t\cdot \bar v_i).$$
    Lastly, letting $F_{j}(z)=\mathbb P_{\va_{-i}\sim \vA_{-i}}(a_{\pi(j,\va_{-i})}<z)$ we find that including $t$ in the expectation gives the following lower bound on the utility:
\begin{align*}\myexp{\hat a_i'\sim \hat A_i'}{\va_{-i}\sim \vA_{-i}}{\z_i(\hat a_i',\va_{-i})}&\geq \int_0^{G}\alpha_{j}\bar v_i(1-t) F_{j}(t\cdot \bar v_i)f(t)dt\\
    &=\mu\alpha_{j}\bar v_i\int_0^G F_{j}(t\cdot \bar v_i)dt\\
    &=\mu\alpha_{j}\bar v_i\int_0^G \myexp{\va_{-i}\sim \vA_{-i}}{}{\mathbbm{1}\{a_{\pi(j,\va_{-i})}<t\cdot \bar v_i\}}dt\\
    &=\mu\alpha_{j}\bar v_i \myexpl{\va_{-i}\sim \vA_{-i}}{}{\int_0^G\mathbbm{1}\{a_{\pi(j,\va_{-i})}<t\cdot \bar v_i\}dt}\\
    &\geq\mu\alpha_j\bar v_i\myexpl{\va_{-i}\sim \vA_{-i}}{}{G-\frac{a_{\pi(j,\va_{-i})}}{\bar v_i}}\\
    &= \mu G v_i(j)-\mu \alpha_{j}\myexp{\va_{-i}\sim \vA_{-i}}{}{a_{\pi(j,\va_{-i})}}.
    \end{align*}
\end{proof}

\section{Lower Bounds}
\subsection{Simultaneous Auctions with Uniform Bids}\label{sec:SAWUB-LB}
\thmpacingbidLB*
\begin{proof}[Proof of Theorem~\ref{thm:pacing-bid-LB}]
Consider a setting with $m$ items, $n=m-\ell+2$ bidders where $m-1\geq \ell \geq 1$. For $1\leq i\leq m-\ell$, bidder $i$ values item $i$ at $v_{ii}=\frac{i}{\ell+i}$, 0 for all the other items and their budget is $\budget_i=\infty$. Bidder $i=m-\ell+1$ values item $j$ at $v_{(m-\ell+1)j}=\frac1\epsilon\frac{j}{\ell+j}$ for $1\leq j \leq m-\ell$, where $1\geq \epsilon>0$. For $m-\ell<j\leq m$ we let $v_{(m-\ell+1)j}=0$.
Let their budget be $\budget_{(m-\ell+1)}=\sum_{j=1}^{m-\ell}\frac{j}{\ell+j}$. Lastly, for bidder $i=m-\ell+2$, we let $v_{(m-\ell+2)j}=1$ for $1\leq j \leq m$ and $\budget_{m-\ell+2}=\infty$. Ties are broken in favor of the bidder with the highest index.

Consider the following uniform bid profile. For bidders $1\leq i\leq m-\ell$, we let $\alpha_i=1$. For bidder $i=m-\ell+1$, we let $\alpha_{m-\ell+1}=\epsilon$ and, lastly, for bidder $i=m-\ell+2$,  we let $\alpha_{m-\ell+2}=0$. Under this profile, $\vec \alpha$, we see that bidders $1\leq i\leq m-\ell$ are allocated nothing. Bidder $i=m-\ell+1$ is allocated items $1\leq j \leq m-\ell$. Their payment is exactly their budget so they satisfy the budget constraint. Lastly, bidder $i=m-\ell+2$ is allocated items $m-\ell<j\leq m$.

For bidders $1\leq i\leq m-\ell$, since bid multipliers are at most $1$ and ties are broken in favor of the highest index, no deviation exists under which they could win the item they value. Hence, they have no incentive to deviate. Bidder $i=m-\ell+1$ already bids as small as possible to win all the items they value. If they bid smaller than $\epsilon$, they win no items and their utility reduces to 0. Hence, bidder $i=m-\ell+1$ also has no incentive to deviate. 

Lastly, bidder $i=m-\ell+2$ wins $\ell$ items with 0 payment. So, their utility is $u_{m-\ell+2}(\vec \alpha)=\ell$. To win $k\geq 1$ extra items, they must bid $\alpha_{m-\ell+2}'=\frac{k}{\ell+k}$. In that case, they win $\ell+k$ items and their utility becomes
$$u_{m-\ell+2}(\alpha_{m-\ell+2}',\vec \alpha_{-(m-\ell+2)})=(\ell+k)-(\ell+k)\,\alpha_{m-\ell+2}'=\ell+k-k=\ell$$
Thus, we see that bidder $i=m-\ell+2$ also has no incentive to deviate. 

Since for none of the bidders a pure deviation can be profitable, we see that no mixed deviations can be profitable. Hence, $\vec \alpha$ is an MNE. Choosing $\epsilon=\frac{1}{(\ell+1)\budget_{m-\ell+1}}$, we see that bidder $i=m-\ell+1$'s value when assigned item 1 equals exactly their budget. Since $m-\ell\geq 2$, we also have for this choice of $\epsilon$ that $\epsilon<1$. Thus, the optimal achievable liquid welfare is at least $\budget_{(m-\ell+1)}+m-1$, which is achieved by allocating all items to bidder $i=m-\ell+2$, except for item $1$, which would be allocated to bidder $i=m-\ell+1$. However, liquid welfare under $\vec \alpha$ is $\budget_{(m-\ell+1)}+\ell$. Hence, for the price of anarchy, we see that
$$\MNEPOA\geq \frac{\OPT(I)}{\LW(I,\vec \alpha)}\geq  \frac{\budget_{(m-\ell+1)}+m-1}{\budget_{(m-\ell+1)}+\ell}=\frac{\sum_{j=1}^{m-\ell}\frac{j}{\ell+j}+m-1}{\sum_{j=1}^{m-\ell}\frac{j}{\ell+j}+\ell}.$$
Choose $\ell=\lfloor a\, m\rfloor$, where $a:=-\mathcal W_0(-e^{-2})$ is the unique solution of $\ln a = a-2$ in $(0,1)$. Then $\ell/m\to a$ as $m \to \infty$ and, by a Riemann sum,
$$\lim_{m\to\infty}\frac{\budget_{(m-\ell+1)}}{m}=\lim_{m\to\infty}\frac1m\sum_{j=1}^{m-\ell}\frac{j/m}{\ell/m+j/m}=\int_0^{1-a}\frac{u}{a+u}\,\mathrm du=1-a+a\ln a.$$
Dividing the numerator and the denominator by $m$ and using $\ln a = a-2$, we conclude that
$$\lim_{m\to\infty}\frac{\budget_{(m-\ell+1)}+m-1}{\budget_{(m-\ell+1)}+\ell}=\frac{(1-a+a\ln a)+1}{(1-a+a\ln a)+a}=\frac{2-3a+a^2}{(1-a)^2}=\frac{2-a}{1-a},$$
which proves
$$\MNEPOA\geq \frac{2 + \mathcal{W}_0(-e^{-2})}{1 + \mathcal{W}_0(-e^{-2})} \approx 2.188.$$
\end{proof}
\subsection{Simultaneous All-Pay Auctions}\label{sec:SAPA-LB}
\thmSAPALB*
\begin{proof}[Proof of Theorem~\ref{thm:SAPA-LB}]
     Consider the following instance with two bidders and two items. Let $v_1=(1,V)$, $v_2=(0,2)$, $\budget_1=1$ and $\budget_2=\infty$, where $V>2$. Ties are broken in favour of bidder 1 on both items. In this case, the optimal liquid welfare is exactly 3. Consider the following mixed action profile. On item 1, both bidders bid 0. On item 2, bidder 1 submits a bid $a_1$ drawn from the uniform distribution on $[0,2]$. Bidder 2 submits with probability $\frac{2}{V}$ a bid drawn from the uniform distribution on $[0,2]$ and otherwise bids $0$. In that case, $a_2$ is drawn from a distribution with CDF $F(a_2)=1-\frac{2}{V}+\frac{a_2}{V}$. We use $(A_1,A_2)$ to denote this mixed action profile. Notice that each bidder's bid distribution satisfies the budget constraint. 

    We now prove that this is an MNE. The probability that bidder 1 wins item 2 is $1-\frac{1}{V}$. Hence, bidder 1's expected utility under this mixed action profile is given by $1+(1-\frac{1}{V})V-\myexp{a_1\sim A_1}{}{a_1}=1+(V-1)-1=V-1$, where the first term is the value of item 1 and the last term is the expected all-pay payment. Bidder 2's expected value is $2\cdot\frac{2}{V}\cdot\frac12=\frac{2}{V}$ and their expected payment is $\frac{2}{V}$; hence, their expected utility is $0$.

    Suppose instead that bidder 2 deviates to the pure bid $a_2'\in [0,2]$. In that case, the probability of winning becomes $\frac{a_2'}{2}$. Thus, their expected utility under this deviation is given by
    $$\myexp{a_1\sim A_1}{}{u_2(a_1,a_2')}=\frac{a_2'}{2}\cdot 2 - a_2'=0.$$
    Bids $a_2'>2$ win with certainty but pay more than the value, and bids on item $1$ only add payment, as all-pay payments are unconditional; the same holds for bidder $1$. Hence, there exists no pure deviation that could improve their utility, meaning that there exists no such mixed deviation either.

    For bidder 1, suppose they deviate to bid $a_1'\in [0,2]$ on item 2. Bidding $a_1'>2$ is not beneficial since bidder 1 will always win with $a_1'=2$. So, the probability of winning item 2 is given by $F(a_1')$. The expected utility under this deviation is then given by
    $$\myexp{a_2\sim A_2}{}{u_1(a_1',a_2)}=1+F(a_1')V-a_1'=1+\left(1-\frac{2}{V}+\frac{a_1'}{V}\right)V-a_1'=V-1.$$
    Hence, no pure deviation can improve their utility showing that no mixed deviation can do either. Hence, the profile $(A_1,A_2)$ is an MNE. For this profile, the liquid welfare achieved is
    $$\LW(I,(A_1,A_2))=1+\frac{2}{V}.$$
    For the PoA this entails the following lower bound:
    $$\MNEPOA\geq \frac{3}{1+\frac{2}{V}}.$$
    Taking the limit as $V\to \infty$ gives the desired result.
\end{proof}
\subsection{Discriminatory Price Auction}\label{sec:DPA-LB}
\thmDPALB*
\begin{proof}[Proof of Theorem~\ref{thm:DPA-LB}]
    Let $a := -\mathcal{W}_0(-e^{-2})\approx0.158$, i.e., the unique solution in $(0,1)$ of
    \begin{equation}\label{eq:dpa-defining-a}
        \ln a = a - 2.
    \end{equation}
    Fix any integer $k \ge 5$ and consider a discriminatory price multi-unit auction with $k+1$ units and $2$ bidders, whose valuations are given by $v_1(q) = k + q - 1$ for $q \ge 1$ and $v_2(q) = \min\set{q, k}$, and whose budgets are $\budget_1 = (k+1)(1 - a + a\ln a)$ and $\budget_2 = \infty$. Ties are broken in the following manner: for the first $k$ units, bidder $1$ wins ties occurring at a level greater than $0$, while bidder $2$ wins ties at level $0$; bidder $1$ always wins the tie for the last unit.

    Let $D$ denote the distribution on $[0, 1-a]$ with CDF $F(z) = \frac{a}{1-z}$; note that $D$ has an atom of mass $a$ at $0$. Consider the distribution of bid profiles $\vA$ in which a common level $b \sim D$ is sampled and the bidders bid $a_1 = (b,\, k+1)$ and $a_2 = (b,\, k)$. By the tie-breaking rule, if $b > 0$ then bidder $1$ receives all $k+1$ units, while if $b = 0$ then bidder $2$ receives $k$ units and bidder $1$ receives exactly one unit. A direct computation shows that
    $$\myexp{b \sim D}{}{b} = \int_0^{1-a} \big(1 - F(z)\big)\, \mathrm{d}z = (1-a) + a \ln a .$$
    Since the DPA is pay-your-bid, $\myexp{\va \sim \vA}{}{P_1(\va)} = (k+1)\,\myexp{b \sim D}{}{b} = \budget_1$, while bidder $2$ always pays $0$; hence, $\vA$ is budget-feasible.

    We now show that $\vA$ is indeed a CCE. Note that $\vA_{-2}$ is the distribution of $a_1 = (b, k+1)$ with $b \sim D$, and analogously for $\vA_{-1}$. For bidder $2$, consider any deviation $(b', q)$ with $q \in \set{0, 1, \dots, k+1}$; since $v_2$ is capped at $k$ units and the payment is non-decreasing in $q$, we may assume that $q \le k$. If $b' \in (0, 1-a]$, then bidder $2$ wins $q$ units when $b < b'$ and no units when $b \ge b'$ (ties at positive levels favor bidder $1$), so
    $$\myexp{\va_{-2} \sim \vA_{-2}}{}{\z_2((b',q), \va_{-2})} = \mathbb{P}_{b \sim D}(b < b')\,\big(q - b'\, q\big) = \frac{a}{1-b'} \cdot (1-b')\, q = a\, q \;\le\; a\,k .$$
    If $1\geq b' > 1-a$, then bidder $2$ wins $q$ units with certainty, for a utility of $q(1-b') < a\, q \le a\,k$; and if $b' = 0$, then bidder $2$ wins $\min\set{q, k}$ units only when $b = 0$, for a utility of at most $a\,k$. Lastly, notice that there is no incentive to bid $b'>1$ since bidder 2 wins $q$ units with certainty already for $1\geq b'>1-a$. Since $\myexp{\va \sim \vA}{}{\z_2(\va)} = a\,k$ (bidder $2$ wins $k$ units at price $0$ with probability $a$), no deviation of bidder $2$ is profitable.

    For bidder $1$, we have
    $$\myexp{\va \sim \vA}{}{\z_1(\va)} = (1-a)\, v_1(k+1) + a\, v_1(1) - (k+1)\,\myexp{b \sim D}{}{b} = 2k - a\,k - (k+1)(1 - a + a \ln a) .$$
    Consider any deviation $(b', q)$ of bidder $1$ with $q \ge 1$. Since bidder $2$ demands $k$ of the $k+1$ units, bidder $1$ wins at most $\min\set{q, k+1}$ units when $b \le b'$ and exactly one unit when $b > b'$; as we only require an upper bound on the deviation's utility, we may grant bidder $1$ all $\min\set{q,k+1}$ units in the former case. Hence, for $b' \in [0, 1-a]$,
    $$\myexp{\va_{-1} \sim \vA_{-1}}{}{\z_1((b',q), \va_{-1})} \le \big(1 - F(b')\big)(k - b') + F(b')\big(k - 1 + \min\set{q, k+1}(1-b')\big) \;\le\; k - a + a(k+1),$$
    where the last inequality uses $\min\set{q, k+1}(1-b')\, F(b') \le (k+1)(1-b')\, F(b') = a(k+1)$, together with $(1-F(b'))k + F(b')(k-1) = k - F(b') \le k - a$, as $F(b') \ge a$. For $b' > 1-a$, bidder $1$ wins $\min\set{q, k+1}$ units with certainty, for a utility of at most $k - 1 + (k+1)(1-b') < k - 1 + a(k+1) \le k - a + a(k+1)$. Using \eqref{eq:dpa-defining-a}, we obtain
    $$\myexp{\va \sim \vA}{}{\z_1(\va)} - \big(k - a + a(k+1)\big) = k(1-a) + a - (k+1)(1 + a \ln a) = a(1-a)\,k - (1-a)^2 + a \;\ge\; 0 ,$$
    where the inequality holds for every $k \ge 5$, since $a(1-a) > 0.132$ and $(1-a)^2 - a < 0.551$. In particular, $\myexp{\va \sim \vA}{}{\z_1(\va)} > 0$, so the deviation $q = 0$, which yields utility $0$, is not profitable either. Hence, no deviation of bidder $1$ is profitable, and $\vA$ is a CCE.

    We proceed to compute the optimal liquid welfare. For any random allocation, bidder $1$ contributes at most $\budget_1$ and bidder $2$ at most $v_2(k) = k$, so $\OPT(I) \le \budget_1 + k$. Moreover, using \eqref{eq:dpa-defining-a}, we have $1 - a + a \ln a = (1-a)^2 - a < 0.551$, so $\budget_1 \le k$ for every $k \ge 5$; the deterministic allocation assigning one unit to bidder $1$, of value $v_1(1) = k \ge \budget_1$, and $k$ units to bidder $2$ thus attains the bound, i.e., $\OPT(I) = \budget_1 + k$. At the CCE, $\myexp{\va \sim \vA}{}{v_1(X_1(\va))} = 2k - a\,k \ge \budget_1$, and hence
    $$\LW(I, \vA) = \min\set{2k - a\,k,\ \budget_1} + a\,k = \budget_1 + a\,k .$$
    The ratio of the two is therefore
    $$\Psi(k) := \frac{\OPT(I)}{\LW(I, \vA)} = \frac{(k+1)(1 - a + a\ln a) + k}{(k+1)(1 - a + a\ln a) + a\,k} .$$
    Finally, dividing the numerator and the denominator by $k+1$ and using \eqref{eq:dpa-defining-a},
    $$\lim_{k \to \infty} \Psi(k) = \frac{(1 - a + a\ln a) + 1}{(1 - a + a \ln a) + a} = \frac{2 - 3a + a^2}{(1-a)^2} = \frac{(1-a)(2-a)}{(1-a)^2} = \frac{2-a}{1-a} = \frac{2 + \mathcal{W}_0(-e^{-2})}{1 + \mathcal{W}_0(-e^{-2})} \approx 2.188 .$$
    Since the price of anarchy is at least $\Psi(k)$ for every $k \ge 5$, it is at least $\lim_{k \to \infty} \Psi(k) = \frac{2 + \mathcal{W}_0(-e^{-2})}{1 + \mathcal{W}_0(-e^{-2})} \approx 2.188$.
\end{proof}
\subsection{Uniform Price Auction}\label{sec:UPA-LB}
\thmUPALB*
\begin{proof}[Proof of Theorem~\ref{thm:UPA-LB}]
For the lower bound, we construct an instance of the uniform price auction with $n$ bidders and $k$ units, defined as follows. Let $y^{\dagger} =-\left(\mathcal{W}_{-1}(-e^{-2})\right)^{-1}\approx 0.31$ i.e., the unique solution of 
\begin{equation}\label{eq:dagger-defining}
    1-2y+y\;\ln y=0.
\end{equation}

 Note that uniqueness holds since the left-hand side is strictly decreasing in $y$ (its derivative is $\ln y - 1 < 0$) with limits $1$ and $-1$ at the endpoints of $(0,1)$. Fix any integer $k \ge 25$ and define
    $$d := \lceil \sqrt{k} \rceil, \qquad n := 1 + \Big\lceil \frac{k - \lfloor y^{\dagger} k \rfloor}{d} \Big\rceil, \qquad \ell := k - (n-1)d, \qquad m := k - n + 1,$$
    as well as, for $s = 1, \dots, n-1$,
    $$b_s := 1 - \frac{\ell}{\ell+sd} = \frac{sd}{\ell+sd}, \qquad A := \sum_{s=1}^{n-1} d\, b_s = (k-\ell) - d\, \ell \sum_{s=1}^{n-1} \frac{1}{sd+\ell}, \qquad V_s := \Big(1 - \frac{\ell}{A}\Big)\, d\, b_s.$$
    We now state and prove a technical claim.
    \begin{claim}\label{claim:upa-lb-properties}
    For every integer $k \ge 25$, the following hold:
    \begin{enumerate}[label=(P\arabic*)]
        \item\label{prop:upa-lb-1} $d \ge 2$ and $n \ge 2$;
        \item\label{prop:upa-lb-2} $\lfloor y^{\dagger} k \rfloor - d < \ell \le \lfloor y^{\dagger} k \rfloor$; in particular, $\ell \ge 1$;
        \item\label{prop:upa-lb-3} $m - \ell = (n-1)(d-1) \ge 1$; in particular, $1 \le \ell < m < k$;
        \item\label{prop:upa-lb-4} $(k-\ell) - \ell \ln\frac{k}{\ell} \;\le\; A \;\le\; (k-\ell) - \ell \ln\frac{k+d}{\ell+d}$;
        \item\label{prop:upa-lb-5} $A \ge \ell$; in particular, $V_s \ge 0$ for every $s = 1, \dots, n-1$.
    \end{enumerate}
    \end{claim}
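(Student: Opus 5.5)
The plan is to verify the five properties essentially in the order listed. Each one reduces either to the definitions of $d,n,\ell,m$ or to an elementary estimate. Throughout, write $L:=\lfloor y^{\dagger}k\rfloor$ and $y^{\dagger}\approx 0.318$.

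For \ref{prop:upa-lb-1}, $k\ge 25$ gives $d=\lceil\sqrt k\rceil\ge 5\ge 2$. Since $y^{\dagger}<1$ we have $k-L\ge 1$, so $\lceil (k-L)/d\rceil\ge 1$ and hence $n\ge 2$.

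For \ref{prop:upa-lb-2}, set $r:=n-1=\lceil (k-L)/d\rceil$, so that $k-L\le rd<k-L+d$. Substituting into $\ell=k-rd$ gives $L-d<\ell\le L$. The claim $\ell\ge 1$ then follows once $L\ge d$, i.e.\ $\lfloor y^{\dagger}k\rfloor\ge\lceil\sqrt k\rceil$. It suffices to show $y^{\dagger}k-1\ge\sqrt k+1$, i.e.\ $y^{\dagger}k-\sqrt k-2\ge 0$. This holds at $k=25$ (value $\approx 0.95$), and the left-hand side is increasing in $k$ for $k\ge 25$ because its derivative $y^{\dagger}-1/(2\sqrt k)$ is positive there. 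This numeric check is the only place where the threshold $k\ge 25$ is really used, so I expect it to be the main (if mild) obstacle: it needs a lower bound on $y^{\dagger}$ such as $y^{\dagger}>0.31$, justified from \eqref{eq:dagger-defining} by evaluating $1-2y+y\ln y$ at $y=0.31$ and checking that it is positive, together with strict decrease.

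For \ref{prop:upa-lb-3}, direct computation gives $m-\ell=(k-n+1)-(k-(n-1)d)=(n-1)(d-1)\ge 1$ by \ref{prop:upa-lb-1}. Then $\ell<m$, and $m<k$ because $n\ge 2$.

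For \ref{prop:upa-lb-4}, first rewrite $d\,b_s=d-\frac{d\ell}{\ell+sd}$ and sum over $s=1,\dots,n-1$; this gives the stated closed form of $A$, using $(n-1)d=k-\ell$. Next, the function $f(s)=\frac{d}{sd+\ell}$ is decreasing, so comparing the sum with integrals gives
\[
\ln\frac{\ell+nd}{\ell+d}=\int_1^{n}f\le\sum_{s=1}^{n-1}f(s)\le\int_0^{n-1}f=\ln\frac{\ell+(n-1)d}{\ell}.
\]
Since $\ell+(n-1)d=k$ and $\ell+nd=k+d$, multiplying by $-\ell$ and adding $k-\ell$ yields both bounds.

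For \ref{prop:upa-lb-5}, start from the lower bound in \ref{prop:upa-lb-4} and set $y:=\ell/k$. This gives $A-\ell\ge k\bigl(1-2y+y\ln y\bigr)$. The function $y\mapsto 1-2y+y\ln y$ is strictly decreasing on $(0,1)$ and vanishes at $y^{\dagger}$ by \eqref{eq:dagger-defining}. Moreover, $0<y\le L/k\le y^{\dagger}$ by \ref{prop:upa-lb-2}, so the right-hand side is nonnegative and $A\ge\ell$. Consequently $1-\ell/A\ge 0$; together with $b_s\ge 0$ this gives $V_s\ge 0$, where $A>0$ holds because $A\ge\ell\ge 1$.
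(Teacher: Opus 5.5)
Your proposal is correct and follows the paper's proof essentially step for step: the same ceiling manipulation for (P2), the same integral comparison for the decreasing map $z\mapsto d/(zd+\ell)$ in (P4), and the same monotonicity argument for $y\mapsto 1-2y+y\ln y$ in (P5). Your explicit checks that $y^{\dagger}>0.31$ (from the defining equation) and that $y^{\dagger}k-\sqrt{k}-2$ is increasing for $k\ge 25$ fill in the numerical step the paper states without detail.
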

    \begin{proof}[Proof of Claim~\ref{claim:upa-lb-properties}]
    For \ref{prop:upa-lb-1}, we have $d = \lceil \sqrt{k} \rceil \ge 5$ since $k \ge 25$, and $n \ge 2$ since $\lfloor y^{\dagger} k \rfloor < k$ makes the ceiling in the definition of $n$ at least $1$. 
    
    For \ref{prop:upa-lb-2}, the definition of $n$ gives $k - \lfloor y^{\dagger} k \rfloor \le (n-1)d < k - \lfloor y^{\dagger} k \rfloor + d$, which rearranges to the stated inequalities for $\ell = k - (n-1)d$; in particular, $\ell > y^{\dagger} k - \sqrt{k} - 2 \ge 0.31\, k - \sqrt{k} - 2 > 0$ for every $k \ge 25$, so $\ell \ge 1$. 
    
    For \ref{prop:upa-lb-3}, substituting the definitions yields $m - \ell = (k-n+1) - (k-(n-1)d) = (n-1)(d-1)$, which is at least $1$ by \ref{prop:upa-lb-1}; combined with $\ell \ge 1$ (\ref{prop:upa-lb-2}) and $m = k-n+1 < k$ (as $n \ge 2$), this gives $1 \le \ell < m < k$.
    
    For \ref{prop:upa-lb-4}, since the map $z \mapsto d/(zd+\ell)$ is decreasing and $(n-1)d + \ell = k$,
    \begin{equation}\label{eq:upa-lb-integral}
    \ln\frac{k+d}{\ell+d} = \int_{1}^{n} \frac{d\, \mathrm{d}z}{zd+\ell} \le \sum_{s=1}^{n-1} \frac{d}{sd+\ell} \le \int_{0}^{n-1} \frac{d\, \mathrm{d}z}{zd+\ell} = \ln\frac{k}{\ell},
    \end{equation}
    and the stated bounds follow from the definition of $A$. 
    
    Finally, for \ref{prop:upa-lb-5}, by \ref{prop:upa-lb-2} we have $\ell/k \le y^{\dagger}$, so $1 - 2\frac{\ell}{k} + \frac{\ell}{k}\ln\frac{\ell}{k} \ge 0$ by the monotonicity established above; multiplying by $k$ and rearranging gives $(k-\ell) - \ell\ln\frac{k}{\ell} \ge \ell$, and the inequality $A \ge \ell$ follows from the lower bound in \ref{prop:upa-lb-4}. The non-negativity of $V_s$ is then immediate from its definition.
    \end{proof}
    Bidder $1$ has the valuation $v_1(q) = \min\set{q, m}$ and, for $s = 1, \dots, n-1$ ($n \geq 2$, by \ref{prop:upa-lb-1}), bidder $s+1$ is unit-demand with value $V_s$, which is non-negative by \ref{prop:upa-lb-5}, i.e., $v_{s+1}(q) = V_s \min\set{q, 1}$. All budgets are infinite; in particular, $\hat v_i$ is linear for $i=1,\dots, n$, so $\OPT(I)$ is attained at a deterministic allocation. We consider the bid profile $\va$ with
    $$a_1 = (0,\, \ell) \qquad \text{and} \qquad a_{s+1} = (b_s,\, d), \quad s = 1, \dots, n-1.$$
    Under $\va$, the total demand is $\ell + (n-1)d = k$, so bidder $1$ obtains $\ell \ge 1$ units (\ref{prop:upa-lb-2}) and bidder $s+1$ obtains $d$ units; the uniform price is $\beta_{0}(\va) = 0$, as the total demand does not exceed $k$.

    We first verify that $\va$ is NOB-feasible and satisfies \eqref{eq:NOB}. For the UPA, we have $W_i(a_i, x_i) = b_i\, x_i$ for every realized pair: fixing $X_i(\va) = x_i$, the other bidders can drive the uniform price up to bidder $i$'s own per-unit bid, but no further. Therefore, the LHS of \eqref{eq:NOB} evaluates to $0 \cdot \ell + \sum_{s=1}^{n-1} d\, b_s = A$, while the RHS evaluates to
    $$v_1(\ell) + \sum_{s=1}^{n-1} V_s = \ell + \Big(1 - \frac{\ell}{A}\Big) A = \ell + A - \ell = A .$$
    Hence, \eqref{eq:NOB} holds with equality.

    We now argue that $\va$ is a pure Nash equilibrium, under the assumption that whenever there is a tie in a deviation from $\va$, bidder $1$ always gets the unit in question. For $s=1,\dots, n-1$, bidder $s+1$ clearly has no incentive to deviate since $u_{s+1}(\va)=V_s \ge 0$ (\ref{prop:upa-lb-5}), which is the maximum utility they can attain at any profile.%

    Let us examine the case of bidder $1$. Under $\va$, their utility is $u_1(\va) = v_1(\ell) = \ell$, where we used that $\ell \le m$ (\ref{prop:upa-lb-3}). Consider any deviation $(t, q)$ and let $\sigma(t) := |\set{s \mid b_s \le t}|$ denote the number of bidders among $2, \dots, n$ whose bids bidder $1$ (weakly) outbids; by the tie-breaking assumption, bidder $1$ can win at most $C(t) := \ell + \sigma(t)\, d$ units when bidding at level $t$. We distinguish three cases. First, if $q \le \ell$, then the total demand is at most $k$, the price remains $0$, and bidder $1$ wins at most $q \le \ell$ units, for a utility of at most $\ell$. Second, if $q > C(t)$, then bidder $1$ wins exactly $C(t)$ units while $q - C(t)$ of their own bids lose, so the uniform price equals their own bid $t \ge b_{\sigma(t)}$, for a utility of
    $$\min\set{C(t), m} - C(t)\, t \;\le\; C(t)\big(1 - b_{\sigma(t)}\big) \;=\; \big(\ell + \sigma(t) d\big) \cdot \frac{\ell}{\ell + \sigma(t) d} \;=\; \ell$$
    (if $\sigma(t) = 0$, the price is $t \ge 0$ and the utility is at most $\ell$ directly). Third, if $\ell < q \le C(t)$, then bidder $1$ wins exactly $q$ units by winning the $q - \ell$ cheapest units of the other bidders, i.e., the blocks of bidders $2, \dots, s''+1$, where $s'' := \lceil (q-\ell)/d \rceil \le \sigma(t)$; the uniform price then equals $b_{s''}$, the highest displaced bid, for a utility of
    $$\min\set{q, m} - q\, b_{s''} \;\le\; q \big(1 - b_{s''}\big) \;=\; q \cdot \frac{\ell}{\ell + s'' d} \;\le\; \ell ,$$
    where the last inequality holds since $q \le \ell + s'' d$. In all three cases the utility of the deviation is at most $\ell = u_1(\va)$. We conclude that the profile $\va$ is a pure Nash equilibrium.

    We proceed to compute the optimal liquid welfare. Since $v_1$ is capped at $m$ and bidder $s+1$ contributes at most $V_s$, we have $\OPT(I) \le m + \sum_s V_s$; this is attained by the feasible allocation assigning $m$ units to bidder $1$ and one unit to each bidder $s+1$ (note that $m + (n-1) = k$). Hence, using $\sum_s V_s = A - \ell$,
    $$\OPT(I) = m + A - \ell, \qquad \text{and} \qquad \LW(I, \va) = v_1(\ell) + \sum_{s=1}^{n-1} V_s = A .$$
    The ratio of the two is therefore
    \begin{equation}\label{eq:upa-lb-ratio}
    \frac{\OPT(I)}{\LW(I, \va)} \;=\; 1 + \frac{m - \ell}{A} \;=\; 1 + \frac{(n-1)(d-1)}{A} ,
    \end{equation}
    where the last equality holds by \ref{prop:upa-lb-3}. Substituting the upper bound on $A$ from \ref{prop:upa-lb-4} into \eqref{eq:upa-lb-ratio}, we obtain that, for every $k \ge 25$,
    $$\frac{\OPT(I)}{\LW(I, \va)} \;\ge\; \Phi(k) := 1 + \frac{(n-1)(d-1)}{(k-\ell) - \ell \ln\frac{k+d}{\ell+d}}.$$
    We now compute $\lim_{k \to \infty} \Phi(k)$; recall that $d$, $n$, $\ell$ and $m$ are functions of $k$. By \ref{prop:upa-lb-2}, $|\ell/k - y^{\dagger}| \le (d+1)/k \le (\sqrt{k}+2)/k$, and therefore $\lim_{k \to \infty} \ell/k = y^{\dagger}$. Moreover, $\lim_{k \to \infty} d/k = 0$ and, since $n-1 \le (k - \lfloor y^{\dagger} k \rfloor)/d + 1 \le \sqrt{k} + 1$, also $\lim_{k \to \infty} (n-1)/k = 0$. Consequently, using the identity $(n-1)(d-1) = (k-\ell) - (n-1)$,
    $$\lim_{k \to \infty} \frac{(n-1)(d-1)}{k} = 1 - y^{\dagger}
    \qquad \text{and} \qquad
    \lim_{k \to \infty} \frac{(k-\ell) - \ell \ln\frac{k+d}{\ell+d}}{k} = 1 - y^{\dagger} + y^{\dagger} \ln y^{\dagger} = y^{\dagger},$$
    where the second limit uses $\lim_{k \to \infty} \ln\frac{k+d}{\ell+d} = \ln\frac{1}{y^{\dagger}}$ (by continuity of the logarithm at $\frac{1}{y^{\dagger}} > 0$), and the last equality follows by \eqref{eq:dagger-defining}. Since $y^{\dagger} > 0$, we obtain 
    $$\lim_{k \to \infty} \Phi(k) \;=\; 1 + \frac{1 - y^{\dagger}}{y^{\dagger}} \;=\; \frac{1}{y^{\dagger}} \;=\; -\mathcal{W}_{-1}(-e^{-2}).$$
    Finally, since the price of anarchy is at least $\Phi(k)$ for every $k \ge 25$, it is at least $\lim_{k \to \infty} \Phi(k) = -W_{-1}(-e^{-2})\approx 3.146$. \qedhere
\end{proof}
\subsection{Generalized First-Price Auction}\label{sec:GFP-LB}
\thmGFPLB*
\begin{proof}[Proof of Theorem~\ref{thm:GFP-LB}]
    Let $a\in (0,1)$ and consider the instance with three bidders whose valuations are $\bar v_1=\frac{1-a+a\ln(a)}{\delta}$, for some $\delta$ satisfying $1-a+a\ln(a)>\delta>0$, $\bar v_2=1$ and $\bar v_3=0$. The budgets are $\budget_1=1-a+a\ln(a)$, $\budget_2=\budget_3=\infty$ and let the clickthrough rates be $\alpha_1=1$, $\alpha_2=\delta$. In the optimal allocation, slot 2 is assigned to bidder 1, slot 1 to bidder 2, and no slot to bidder 3; note that the optimum places the budget-capped bidder in the \emph{lower} slot. The liquid welfare of this allocation is given by:
    $$\OPT=2-a+a\ln(a).$$
    
    Consider the following distribution of bid profiles. Let $t$ be sampled from a distribution with CDF $F(t)=\frac{a}{1-t}$ where $t\in [0,1-a]$ and then consider the bid $(t,t,t)$. Ties at 0 are broken in favor of bidder 2, then 1 and lastly 3. If ties occur above 0, they are broken in favor of bidder 1, then 2 and lastly 3. We use $\vec B$ to denote this distribution over bid profiles.
    
    The expected payment of bidder 1 is exactly $1-a+a\ln(a)=\delta\bar v_1$, matching their budget. With probability $a$, we see that bidder 1 is allocated slot 2 and bidder 2 is allocated slot 1. With probability $1-a$ bidder 1 wins slot 1, bidder 2 wins slot 2 and bidder 3 wins no slot. Note that in this assignment, bidder 1's expected value is larger than their budget. Thus, the liquid welfare of this distribution of bid profiles is given by:
    $$\LW(I,\vec \B)=1-a+a\ln(a)+a+(1-a)\delta=1+a\ln(a)+(1-a)\delta.$$
    If $\vec \B$ is a $\CCE$, then it would follow that
    $$\CCEPOA\geq \frac{2-a+a\ln(a)}{1+a\ln(a)+(1-a)\delta}.$$
    Choosing $a := -\mathcal{W}_0(-e^{-2})\approx0.158$ and taking the limit $\delta\to 0$ gives $\frac{2 + \mathcal{W}_0(-e^{-2})}{1 + \mathcal{W}_0(-e^{-2})}\approx 2.188$.

    The remainder of this proof is dedicated to verifying that $\vec \B$ is a CCE. For bidder $3$ it is clear no profitable deviations exists as their value per click is 0. For bidder 2, the expected value is given by $a+(1-a)\delta$ and their expected payment is $\delta(1-a+a\ln(a))$. Thus, the expected utility of bidder 2 under this distribution of bid profiles is given by
    $$\myexp{\vec b\sim \vec B}{}{u_2(\vec b)}=a-a\delta \ln a.$$
    If they deviate to $0$ instead, their payment would be 0, but their expected value would become $a$, which is less than $a-a\delta \ln(a)$. So, suppose they bid $b'\in (0,1-a]$ instead. In that case, when $b'<t$, their value and payment on such realizations is zero while if $b'>t$ their value on such realizations is $1$ and payment is $b'$. Hence, their expected utility is $F(b')(1-b')=a<a-a\delta \ln(a)$. Lastly,  we consider the pure deviation $b'>1-a$. In this case, they always win the first slot. Their utility is thus $1-b'<a$. Thus, we conclude that for bidder 2 there exists no profitable pure deviation, and hence no profitable mixed deviation either. 

    For bidder 1, the expected utility is given by
    $$\myexp{\vec b\sim \vec B}{}{u_1(\vec b)}=(1-a)\bar v_1+a\delta \bar v_1-\delta \bar v_1.$$
    Suppose instead that they bid $b'=0$. The expected payment is 0 and expected value becomes $a\delta \bar v_1$. Note that bidding $b'>1-a$ is dominated by bidding $b'=1-a$, similarly to bidder 2. If instead they bid $b'\in (0,1-a]$, then their expected utility becomes
    $$\myexp{\vec b_{-1}\sim \vec B_{-1}}{}{u_1(b',\vec b_{-1})}=F(b')\left(\bar v_1-b'\right)=a\bar v_1+(\bar v_1-1)b'F(b').$$
    So, let $B'$ denote the distribution over bids. Let $q=\mathbb P_{b'\sim B'}(b'=0)$, then combining the above, we find that
    $$\myexp{b'\sim B'}{\vec b_{-1}\sim \vec B_{-1}}{u_1(b',\vec b_{-1})}=qa\delta \bar v_1+(1-q)a\bar v_1+(\bar v_1-1)\myexp{b'\sim B'}{}{b'F(b')}.$$
    Note, however, that $\myexp{b'\sim B'}{}{b'F(b')}$ is the expected payment of bidder 1 under this deviation. By budget feasibility, this cannot exceed $\delta \bar v_1$. Thus, we obtain that
    $$\myexp{b'\sim B'}{\vec b_{-1}\sim \vec B_{-1}}{u_1(b',\vec b_{-1})}\leq qa\delta \bar v_1+(1-q)a\bar v_1+\delta \bar v_1(\bar v_1-1)\leq a\bar v_1+\delta \bar v_1(\bar v_1-1)= a\bar v_1+\delta \bar v_1^2-\delta \bar v_1.$$
    Thus, whenever $(1-a)\bar v_1\geq a\bar v_1+\delta \bar v_1^2$ we see that $\vec \B$ is a CCE. To conclude the proof, we need to verify this condition holds for our choice of $a$. Dividing by $\bar v_1>0$ and substituting $\delta\bar v_1=1-a+a\ln a$, the condition $(1-a)\bar v_1\geq a\bar v_1+\delta \bar v_1^2$ is equivalent to $1-2a\geq 1-a+a\ln a$, i.e., to $-\ln a\geq 1$, which holds if and only if $a\leq \frac1e$. Since $a=-\mathcal W_0(-e^{-2})\approx 0.158<\frac1e$, we conclude that $\vec \B$ is a CCE, which completes the proof.
\end{proof}
\end{document}